\newtheorem{theorem}{Theorem}
\newtheorem{lemma}{Lemma}
\newtheorem{proposition}{Proposition}
\newtheorem{remark}{\bf Remark}
\def\phi{\varphi}
\def\l{\left}
\def\r{\right}
\def\({\left(}
\def\){\right)}
\def\b0{{\mathbf{0}}}
\newcommand{\var}{\mathsf{var}}
\begin{document}

\title{\huge Effects of Base-Station Spatial Interdependence on Interference Correlation and Network Performance}

\author{Juan~Wen,~Min~Sheng,~\IEEEmembership{Senior~Member,~IEEE,} Kaibin Huang,~\IEEEmembership{Senior~Member,~IEEE,} Jiandong~Li,~\IEEEmembership{Senior~Member,~IEEE}
\thanks{J. Wen is with the ISN State Key Lab,  Xidian University, Xi'an, Shaanxi, China and the Department of Electrical and Electronic Engineering, the University of Hong Kong, Hong Kong (Email: juanwen66@gmail.com).
}%
\thanks{M. Sheng and J. Li are with the ISN State Key Lab, Xidian University, Xi'an, Shaanxi, China (Email: \{msheng, jdli\}@mail.xidian.edu.cn).%
}%
\thanks{K. Huang is with the Department of Electrical and Electronic Engineering, the University of Hong Kong, Hong Kong (Email: huangkb@eee.hku.hk). %
}
}
\markboth{}{Your Name \MakeLowercase{\emph{et al.}}: Your Title}
\maketitle

\begin{abstract}
The spatial-and-temporal correlation of interference has been well studied in Poisson networks where the interfering base stations (BSs) are independent of each other. However, there exists spatial interdependence including attraction and repulsion among the BSs in practical wireless networks, affecting the interference distribution and hence the network performance. In view of this, by modeling the network as a Poisson clustered process, we quantify the effects of spatial interdependence among BSs on the interference correlation and analytically prove that BS clustering increases the level of interference correlation. In particular, it is shown that the level increases as the attraction between the BSs increases. Furthermore, we study the effects of spatial interdependence among BSs on network performance with a retransmission scheme via considering heterogeneous cellular networks in which small-cell BSs exhibit a clustered topology in practice. It is shown that the interference correlation degrades the network performance and the degradation increases as the attraction between BSs increases. Finally, a correlation-aware retransmission scheme is proposed to improve the network performance by taking advantage of the interference correlation and avoiding the blind retransmissions. \end{abstract}

\begin{IEEEkeywords}
Base-station spatial interdependence, interference correlation, clustered networks, joint success probability, stochastic geometry.
\end{IEEEkeywords}

\newpage
\section{Introduction}
\IEEEPARstart{S}{mall}-cell BSs (SBSs) in heterogeneous cellular networks (HCNs) are deployed based on the spatial distribution of users to improve quality-of-service. To be specific, the SBSs are clustered at  hotspots where data traffic is concentrated and the clustering phenomenon is referred to as the \emph{intra-tier dependence}. On the other hand, to avoid causing  strong  inter-tier interference, SBSs are allocated sufficiently far away from  Macro-cell BSs (MBSs) and the resultant repulsion  between the SBSs and MBSs is called \emph{inter-tier dependence}. Such spatial interdependence including intra- and inter-tier dependence in HCNs significantly affects the interference correlation and hence the network performance. Nevertheless, these effects have not been quantified in the literature as the analysis is challenging. For mathematical tractability, the BSs in HCNs are commonly modeled as a multi-tier independent Poisson network where the nodes are mutually independent \cite{HCN-K-Tier,HCN-Model}. Although this model provides tractability and useful design insight, it fails to account for the spatial interdependence in BSs. To overcome this drawback, we instead model HCNs using spatial clustered processes to characterize the effects of BS spatial interdependence on interference correlation and network performance.

\subsection{Related Work}

Extensive research has been conducted on analyzing the performance of HCNs using the tool of stochastic geometry based on the most popular model of  multi-tier independent Poisson network \cite{Stochastic-Geometry-HCN-Tutorial,StoGeo}. In this model, the BSs in each tier are distributed as a Poisson point process (PPP) and tiers are independent and have different densities, transmission powers, and requirements on signal-to-interference-plus-noise ratios (SINRs). Such a model is deployed in \cite{HCN-K-Tier} to investigate the outage probability and average rate for HCNs under a SINR constraint. Similar  approaches have been adopted in extensive work  on studying the HCN performance under various network operations and designs including cell association \cite{FlexibleCellAssociation}, resource management \cite{StructuredSpectrumAllocation,HCN-FFR}, traffic offloading \cite{FlexibleCellAssociation,Tony_HCN_Offloading}, D2D communications \cite{HongGuang_15TWC,Tony_D2D_HCN,SpecSharing_Cellular_Adhoc}, energy efficient transmissions \cite{Tony_EE_HCN} and BS cooperation \cite{HCN-SpatiotemporalCooperation,InterfCorrela-ICIC,MulticellCoop_KB}. Although the PPP models capture the irregular topologies  of HCNs, they overlook a key feature of HCN, namely the BS spatial interdependence. 

In the area of stochastic geometry, there exists a rich family of spatial point processes which are suitable candidates for modeling the BS spatial interdependence in HCNs \cite{HCN_ApproxSIRAnalysis,InterferCor_NonPoisson_GC}. On one hand, Poisson hole process \cite{HCN-Dependence}, determinantal point process \cite{DPP1,DPP2}, and Ginibre point process \cite{GPP1} feature repulsion between points that can be deployed to model inter-tier BS repulsion in  HCNs. The limited analytical tractability of these point processes results in complex  network performance analysis with little simple insight   [18-21]. On the other hand, intra-tier SBS clustering in HCNs can be modeled naturally using various tractable cluster point processes, such as the Poisson cluster processes grouping Matern cluster processes (MCP) and Thomas cluster processes \cite{HCN_PoissonCluster,HCN-Dependence,PoissonClusterProcess_TIT}. In addition, a HCN model based on the second-order cluster processes (SOCP) captures both the inter-tier and intra-tier interdependence \cite{HCN_SOCP}. While the effects of BS clustering on interference distributions have been extensively studied for different types of networks (see e.g.,  \cite{HCN-Dependence,DPP1,DPP2,GPP1,HCN_PoissonCluster,PoissonClusterProcess_TIT,HCN_SOCP})", there exist few results on its effects on interference correlation (in both the space and time). It is important to note that the two types of results are different with the former concerning interference measured at a single location in plane but the latter relating interference measured at two separate locations or two separate time instants. This work makes contributions by deriving the latter results.

In wireless networks, spatial-and-temporal interference correlation arises the random spatial distribution of interfering BSs and the channel time-variations \cite{InterfCorreThreeSources}. Ganti and Haenggi are among the first to quantify the interference correlation in terms of correlation coefficient \cite{InterferenceCorreLetter}. It was discovered that such correlation reduces the diversity gain in  retransmission  and thereby degrades the network performance \cite{IntefCorrDiversityPolynomials,InterfCorreDiversityLoss,CorrelationMobileRandomNet,LocalDelay_MAC}. In particular, the performance gain of \emph{hybrid automatic repeat request} is marginalized due to the correlation \cite{HARQ-CorrInterf-adhoc}. The negative effects of interference correlation may be exacerbated by the BS spatial interdependence. This is an important issue due to the popularity of HCNs but has not yet been investigated in prior work.

\subsection{Contributions and Organization}

First, we investigate the effects of interferer's interdependence on the interference correlation.  Consider interference powers measure at two separate locations in the presence of an interferer field following one of three possible distributions, namely PPP, MCP and SCOP, where the conventional case of PPP serves as a benchmark.  To facilitate the summary of results,  let $\zeta_{\textsf P}$, $\zeta_{\textsf M}$, and $\zeta_{\textsf S}$ denote the (spatial-and-temporal) interference correlation coefficients corresponding to the PPP, MCP and SCOP, respectively.  The mean number of points and the cluster radius in the MCP and SOCP models  are represented as $\{c_{\textsf M}, R_{\textsf M}\}$ and $\{c_{\textsf S}, R_{\textsf S}\}$, respectively. Our key findings are summarized as follows.

\begin{enumerate}
\item We derive the interference-correlation coefficients $\zeta_{\textsf M}$ and $\zeta_{\textsf S}$, and show that they are greater than $\zeta_{\textsf P}$,
given identical densities. This analytically shows that  the interferer clustering increases the level of interference  correlation. Furthermore, $\zeta_{\textsf M}$
and  $\zeta_{\textsf S}$ are equal if the two corresponding models have the same cluster radii and mean numbers of points per cluster ($c_{\textsf M} = c_{\textsf S}$ and $R_{\textsf M} = R_{\textsf S}$).

\item It is shown that  the correlation coefficient $\zeta_{\textsf M}$ (or $\zeta_{\textsf S}$) is a monotone-increasing function of $c_{\textsf M}$ (or $c_{\textsf S}$) and a monotone-decreasing function of $R_{\textsf M}$ (or $R_{\textsf S}$).  In addition, $\zeta_{\textsf M}$ and $\zeta_{\textsf S}$ converge to  $\zeta_{\textsf P}$ as  $\frac{c_{\textsf M}}{R_{\textsf M}^2}$ and $\frac{c_{\textsf S}}{R_{\textsf S}^2}$ varnish.

\end{enumerate}

Next, we analyze the effect of BS interdependence on the network performance.  To this end, we consider two scenarios of downlink HCNs with different spatial  interdependence between BSs, represented by two models where MBSs are distributed as a PPP for both models while SBSs as a MCP in one model, called the \emph{MCP model}, and as a SOCP in the other, called the \emph{SOCP model}. The MCP model captures only the intra-tier (SBSs) interdependence while the SOCP reflects both the intra- and inter-tier interdependence. Moreover, HARQ is used to enhance transmission reliability.

Based on the network models, we derive the numerically integrable expressions and their bounds for the joint success probabilities, defined as the success probability in multiple successive transmissions, for macro-cell users (MUs) and small-cell users (SUs). It is found that the joint success probability for MUs in the SOCP model is larger than  that in  the MCP model. This suggests that the inter-tier interdependence enhances the MU performance. In addition, it is found that, interference correlation degrades the network performance and the degradation increases as the attraction between the BSs increases. Further, a correlation-aware retransmission scheme is proposed to improve the network performance via taking good advantage of interference correlation and effectively avoiding the blind retransmissions.

The remainder of the paper is organized as follows. The network models and metrics are described in Section II. The interference correlation and HCN performance are analyzed in Section III and IV, respectively. Numerical results are provided in Section V followed by conclusions in Section VI.

\section{Network Models and Metrics}\label{Section:System}

The network models and metrics are introduced in this section. The symbols used therein and their meanings are tabulated in Table~\ref{tab:table1}.

\begin{table}[t!]
  \centering
  \caption{Summary of Notations}
  \label{tab:table1}

  \begin{tabular}{cl}

    \toprule
    Symbol & Meaning \\
    \midrule

    $\Phi_m$, $\Phi_s$ & Point process of (MBSs,  SBSs)\\
    $\lambda_m$, $\lambda_s$ &  Density of (MBSs, SBSs)\\
    $P_m$, $P_s$ & Transmission power of (MBSs, SBSs)\\
    $\beta_m$, $\beta_s$ & SIR threshold of (MBSs, SBSs)\\
    $D_m$, $D_s$ & Coverage radius of (MBSs, SBSs)\\
    $h$  & Rayleigh fading gain with unit mean\\
    $g(x)$, $\alpha$ & Path-loss function, path-loss exponent\\
    $\Phi_{\textsf M}$, $\Phi_{\textsf S}$ & Point process of SBSs in the (MCP, SOCP)  model\\
    $\lambda_{\textsf M}$, $\lambda_{\textsf S}$ & Density of the  (MCP $\Phi_{\textsf M}$, SOCP $\Phi_{\textsf S}$)\\
    $\Phi_{\textsf M^o}$, $\lambda_{\textsf M^o}$ & Parent process for the MCP model, its density\\
    $c_{\textsf M}$, $R_{\textsf M}$ & (Mean number of points, average radius)  of each cluster in the MCP model\\
    $\Phi_{\textsf S^o}$, $\lambda_{\textsf S^o}$ & Parent process for the SOCP model, its density\\
    $c_{\textsf S^\prime}$, $R_{\textsf S^\prime}$ & (Mean number of points, average radius) of the first-order cluster in the SOCP model\\
    $c_{\textsf S}$, $R_{\textsf S}$ & (Mean number of points,  average radius) of the second-order cluster in the SOCP model\\

    \bottomrule

  \end{tabular}

\end{table}

\subsection{Network Models}
Consider a downlink HCN consisting of MBSs and SBSs randomly distributed in the horizontal plane. The processes of MBSs and SBSs are denoted as $\Phi_{m}$  with density $\lambda_m$ and $\Phi_{s}$ with density $\lambda_s$, respectively. In order to characterize the intra-tier  and inter-tier BS interdependence, the SBSs are modeled as  a cluster process distributed either as the  MCP $\Phi_{\textsf M}$ with density $\lambda_{\textsf M}$ or as the SOCP $\Phi_{\textsf S}$ with density $\lambda_{\textsf S}$, which are defined in Appendix~\ref{App:ClusterPP}. The corresponding network models are called the MCP and the SOCP models as illustrated in Fig.~\ref{Fig:Network}. In the MCP model, the MBSs are distributed as a PPP independent of the SBS process $\Phi_{\textsf M}$, which accounts for only the intra-tier BS interdependence. In contract, both the intra-tier  and inter-tier BS interdependence are captured in the SOCP model where the MBSs form the parent points in the SOCP process $\Phi_{\textsf S}$ modeling SBSs. In addition, a baseline network model, called the \emph{PPP model},  is constructed by using the PPP $\Phi_{\textsf P}$ to model the SBSs instead of $\Phi_{\textsf M}$ or $\Phi_{\textsf S}$.

A typical user is called a typical \emph{macro-cell user} (MU), denoted as $U_m$,  or a typical  \emph{small-cell user} (SU), denoted as $U_s$, depending on whether the serving BS is a MBS, $X_m$, or a SBS, $X_s$. Due to the intra- and inter-tier interdependence, it is difficult  to calculate the exact  serving distance distribution between the typical user and its serving BSs \cite{HCN-Dependence}. For tractability, we follow \cite{HCN-Dependence} in defining  the \emph{association region} for a particular MBS (or SBS) as the region  in which all the users are associated with  the MBS (or SBS) and approximating it as  a  circular region centered at the serving MBS (or SBS) with radius $D_m$ (or  $D_s$).  The MUs and SUs are uniformly distributed in the corresponding association regions  and thus the probability density function (PDF) of the serving distance is given as 
\begin{equation}
f(r)=\begin{cases}
\frac{2r}{D^{2}}, & r\leq D,\\
0, & \text{otherwise},
\end{cases}\label{eq:pdf_r}
\end{equation}
where $D=D_{m}$ for the typical MU and $D=D_{s}$ for the typical SU.  Though it is based on approximation, the above model  does provide a sufficiently  accurate description of the  stochastic distribution  of the distance between a user and its serving BS. The expressions of $D_m$ and $D_s$ for MCP and SOCP models are given in Appendix \ref{RadiusOfAssociationArea}.

\begin{figure}[t!]
\centering
\subfigure[MCP Model ]{\includegraphics [scale=0.5]{./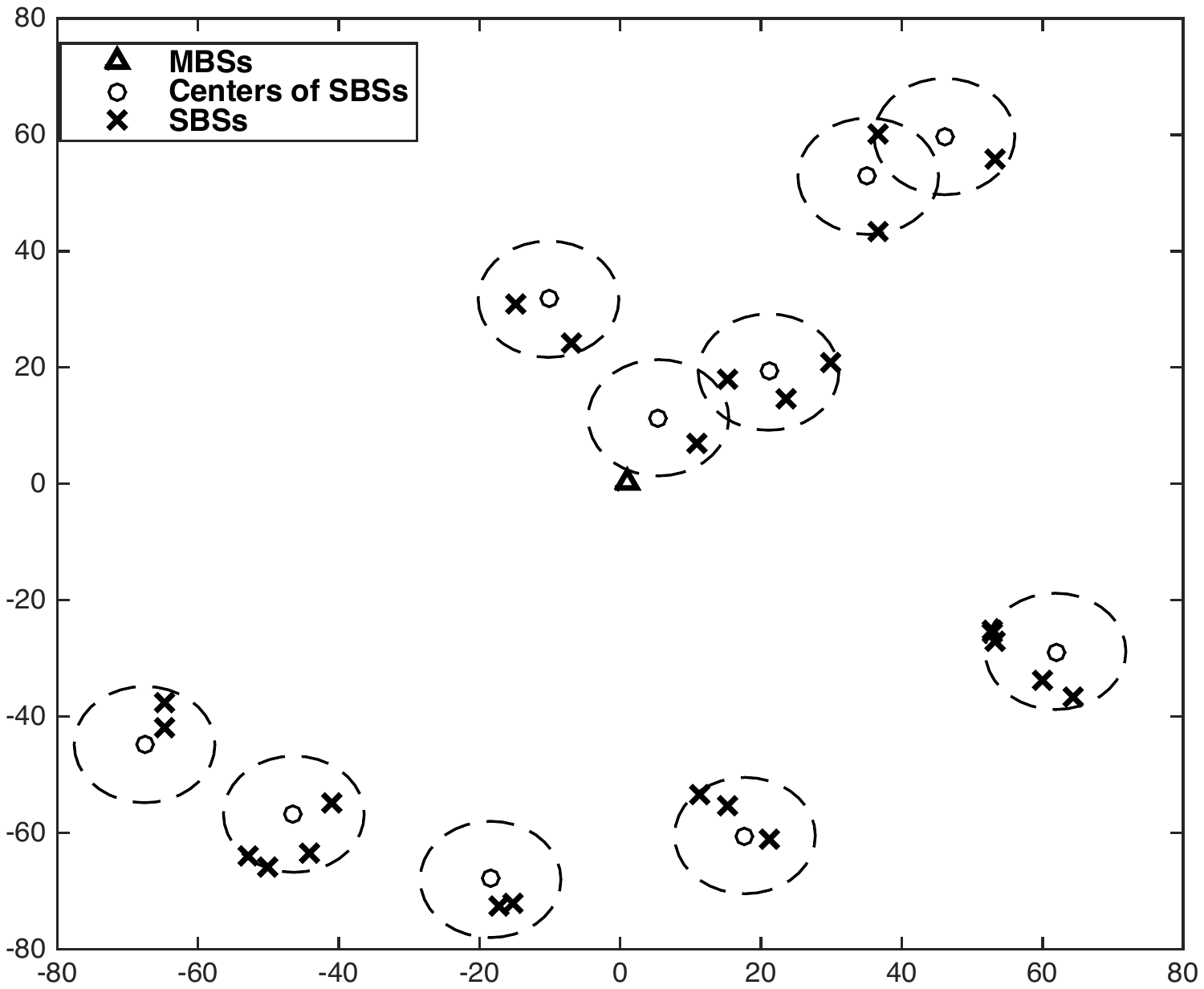}}
\subfigure[SOCP Model ]{\includegraphics[scale=0.5]{./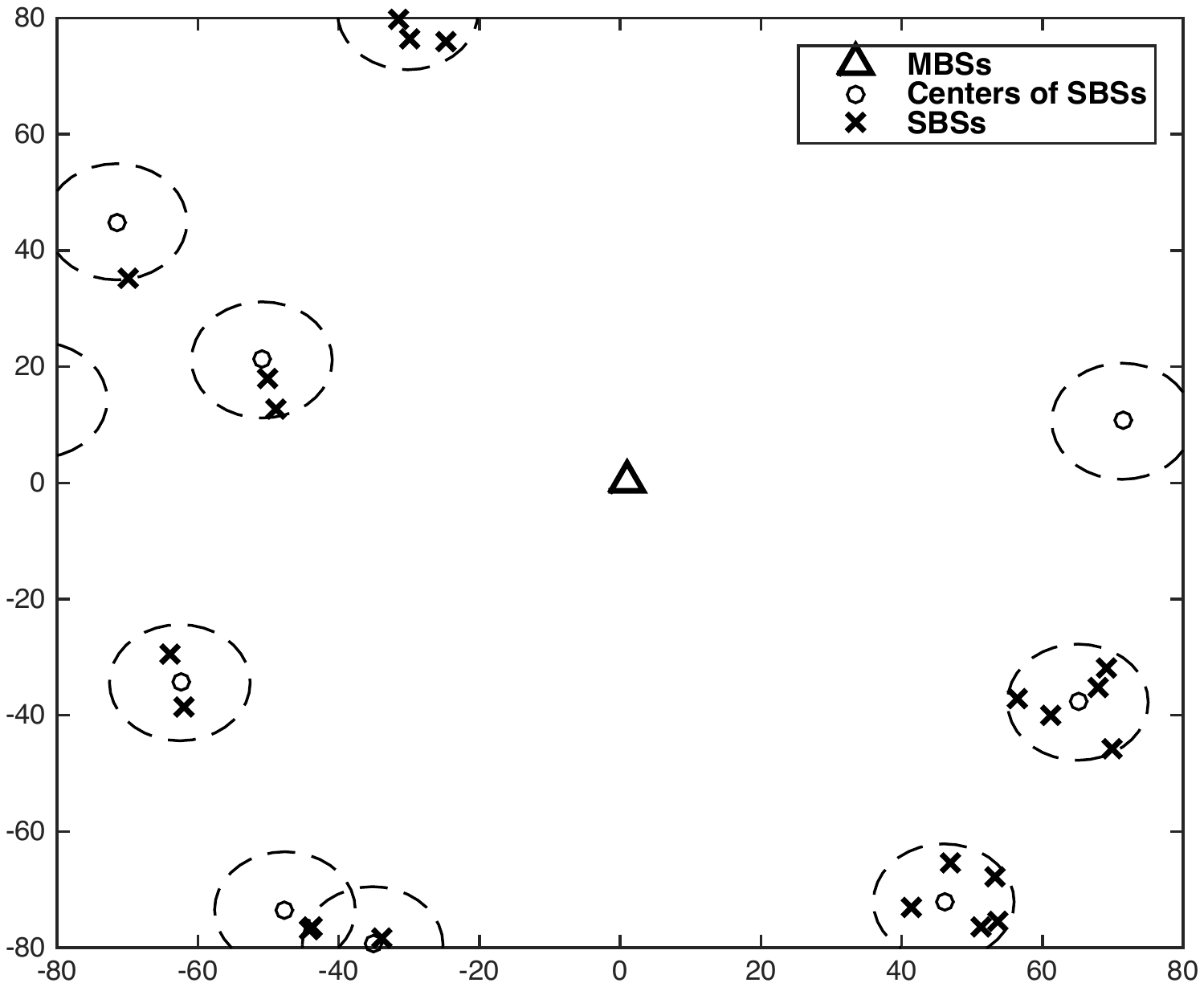}}
\caption{The HCN network model. (a) The MCP model with $\lambda_{m}=0.0001$, $\lambda_{\textsf M^o}=0.001$, $c_{\textsf M}=3$, $R_{\textsf M}=10$. (b) The SOCP model with $\lambda_{m} = \lambda_{\textsf S^o}=0.0001,$ $c_{\textsf S^\prime}=10$, $R_{\textsf S^\prime}=90$, $c_{\textsf S}=3$, $R_{\textsf S}=10$. }
\label{Fig:Network}
\end{figure}

The commonly used backlogged assumption is made in this paper, i.e., all BSs in the network are active. Note that in practice BS transmissions may be bursty  \cite{zhong2016delay,zhong16Stability} and studying the interference correlation given bursty traffic is an interesting direction for future investigation but  outside the scope of this paper. The channel model is described as follows.  MBSs and SBSs transmit at fixed power $P_m$ and $P_s$, respectively. The power received at a user at $U\in \mathds{R}^2$  in time slot $t$ due to transmission by a  BS located at $X\in \mathds{R}^2$, is given by $P h_{X, U}(t) g(X-U)^{-\alpha}$, where $g(X)$, for tractability, is the commonly used singular path-loss function $g(X) = |X|^{-\alpha}$, $|X|$  denotes the Euclidean distance from $X$ to the origin, $\alpha$ is the path-loss exponent, $P$ is $P_m$ (or $P_s$) for MBSs (or SBSs), and $h_{X, U}(t)$ denotes the Rayleigh fading process  with unit mean. For tractability, the channel fading is assumed to be temporally and spatially independent, corresponding to the environment with rich scattering and sufficiently high mobility.  In practice, channel correlation in time and space may exist but modeling it  makes the analysis intractable, which is thus omitted for simplicity. As a result,  the interference correlation in the current model arise mostly from  the correlation of BS locations. Based on the channel model, the expressions for interference power at a typical user can be obtained as follows.  We assume that all BSs transmit in the same frequency band. Consequently, there exists four types of interference: 1) from MBSs to a typical MU with power $I_{mm}=\sum_{X \in \Phi_m \setminus X_m} P_m h_{X, U_m}(t) g(X-U_m)$, 2) from MBSs to a typical SU with power $I_{ms}=\sum_{X \in \Phi_m} P_m h_{X, U_s}(t) g(X-U_s)$, 3) from SBSs to a typical MU with power $I_{sm}=\sum_{X \in \Phi_s} P_s h_{X, U_m}(t) g(X-U_m)$, 4) from SBSs to a typical SU with power $I_{ss}=\sum_{X \in \Phi_s \setminus X_s} P_s h_{X, U_s}(t) g(X-U_s)$. HCNs are usually  interference limited and thus noise is assumed to be negligible.

Time is slotted and transmission of a data packet span a single slot. The transmission of a packet is said to be successful if the received SIR at the typical  user exceeds a fixed  SIR threshold, denoted as  $\beta_m$ for MUs and $\beta_s$ for SUs. Type-I HARQ is adopted to enhance the transmission reliability. Specifically, if a transmission fails, the BS will retransmit the same packet to its user until the transmission succeeds or the maximum number of transmissions $N_{\max}$ is reached.

\subsection{Metrics}

The first part of the paper focuses on  the effects of BS interdependence (clustering) on interference correlation. It is difficult to derive the correlation coefficient for the aggregate interference from both SBSs (a MCP or SOCP) and MBSs (a PPP), which can also obscure the insight into the clustering effects of the former. Thus, for tractability and to gain simple insight, our analysis focuses on deriving the correlation coefficient for the general scenario of a single interferer field  distributed as  either the  MCP $\Phi_M$ or   the SOCP $\Phi_S$. Based on its definition, the coefficient is independent of the transmission power of   interferers that is thus assumed to be unit without loss of generality. However, in the second part of the paper focusing on network-performance analysis, different transmission powers for SBSs and MBSs are considered. Let $I(U, t)$ denote the interference power measured at the location $U \in \mathds{R}^2$ in slot $t$. Then $I(U, t) =\sum_{X \in \Phi}  h_{X}(t) g(X-U)$ where unit-transmission power is assumed without loss of generality. Due to singularity of the function $g(X)$ at the origin, the first and second moments $I(U, t)$ do not exist, which, however, are needed in quantifying the interference correlation. To overcome this difficulty, we follow the technique in \cite{InterferenceCorreLetter, InterfCorreThreeSources} by defining  $g_\epsilon(X) = \frac{1} {\epsilon+|X| ^{-\alpha}}$ with $\epsilon > 0$ such that  $g(X) = \lim_{\epsilon\rightarrow 0} g_\epsilon(X)$. Let $I_{\epsilon}(U, t)$ denote  $I(U, t)$ but with $g(X)$ replaced  by  $g_{\epsilon}(X)$. Based on the above notations, the \emph{interference correlation coefficient} that quantifies the interference correlation is denoted as $\zeta$ and defined
as the  normalized covariance of interference power  \cite{InterferenceCorreLetter, InterfCorreThreeSources}:
\begin{equation}\label{eq:CorrCoef_Definition}
\zeta(U_1, U_2, t_1, t_2)=\lim_{\epsilon\rightarrow 0} \frac{\mathbb{E}[I_{\epsilon}(U_1,t_1), I_{\epsilon}(U_2,t_2)]-\mathbb{E}[I_{\epsilon}(U_1,t_1)]\mathbb{E}[I_{\epsilon}(U_2,t_2)]}{\sqrt{\var(I_{\epsilon}(U_1,t_1))}\cdot\sqrt{\var(I_{\epsilon}(U_2,t_2))}},
\end{equation}
where $(U_1, t_1) \neq (U_2, t_2)$. Note that the interference power in the current scenario is only an approximation of that in the HCN due the omission of interference from MBSs and the Palm distribution (namely the conditioning on the given location of the typical user) for tractability.

The second  part of the paper focuses on network performance analysis. Given the Type-I HARQ transmission scheme, a suitable performance metric (see e.g., \cite{IntefCorrDiversityPolynomials}), called \emph{joint success probability} is adopted. It is denoted as $\mathcal{P}^{(n)}$ and defined as the probability that the typical user successfully receives the packets from its serving BS at $X$ within  $n$ successive transmissions. Mathematically,
\begin{equation}
\mathcal{P}^{(n)} = \mathbb{P}\l({\rm SIR}(X, t_1)>\beta,{\rm SIR}(X, t_2)>\beta, \cdots, {\rm SIR}(X, t_n)>\beta\r),
\end{equation}
where ${\rm SIR}(X,t_n)$ denotes the SIR received at the typical user in slot $n$ and $\beta$ is a fixed threshold. Note that the metric can be translated into the  \emph{delay-limited throughput} (see e.g., \cite{HARQ-CorrInterf-adhoc}) or \emph{transmission capacity} measuring  network spatial throughput (see e.g., \cite{StoGeo}).

\section{Analysis of the Interference-Correlation Coefficient}
In this section, we analyze  the interference-correlation coefficient for the scenario where the interferers are distributed as a cluster point process, namely either MCP or SOCP. It is shown that interferer clustering enhances the interference correlation.

To this end, the first and second moments of interference power are derived as shown in the following two lemmas.

\begin{lemma}\label{lm:MeanInterference}
\emph{
The expectations  of the interference power  $I_{\epsilon}(U,t)$, called \emph{mean interference},  for both the MCP  and SOCP models have an identical expression given as
\begin{equation}\label{eq:MeanInterf}
\mathbb{E}[I_{\epsilon}(U,t)]= \mathbb{E}[h] \lambda \int_{\mathbb{R}^{2}}g_{\epsilon}(X)\mathrm{d}X,
\end{equation}
where $\lambda = \lambda_{\textsf M}$ for  the MCP model and $\lambda = \lambda_{\textsf S}$ for the SOCP model.}
\end{lemma}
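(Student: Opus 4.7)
The plan is to observe that the expected value of the shot-noise-type sum $I_{\epsilon}(U,t)=\sum_{X\in\Phi}h_{X}(t)g_{\epsilon}(X-U)$ depends on the underlying point process $\Phi$ only through its first-moment (intensity) measure. Since both the MCP $\Phi_{\textsf M}$ and the SOCP $\Phi_{\textsf S}$ are stationary with respective intensities $\lambda_{\textsf M}$ and $\lambda_{\textsf S}$, this observation immediately explains why the two models produce an identical expression. The argument has three ingredients: interchange of sum and expectation, independence of the fading marks from the locations, and Campbell's theorem.

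First, I would note that $g_{\epsilon}\geq 0$, so Tonelli's theorem legitimizes exchanging the expectation with the sum:
\begin{equation}
\mathbb{E}[I_{\epsilon}(U,t)]=\mathbb{E}\Bigl[\sum_{X\in\Phi}h_{X}(t)\,g_{\epsilon}(X-U)\Bigr].
\end{equation}
Next, using that the fading marks $\{h_{X}(t)\}$ are i.i.d.\ with mean $\mathbb{E}[h]$ and independent of $\Phi$, I would condition on $\Phi$ and pull the fading expectation out of every term, obtaining
\begin{equation}
\mathbb{E}[I_{\epsilon}(U,t)]=\mathbb{E}[h]\cdot\mathbb{E}\Bigl[\sum_{X\in\Phi}g_{\epsilon}(X-U)\Bigr].
\end{equation}

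The third step applies Campbell's theorem to the remaining sum. Both $\Phi_{\textsf M}$ and $\Phi_{\textsf S}$ are stationary point processes with finite intensities (as recorded in Appendix~\ref{App:ClusterPP}), so for any non-negative measurable $f$,
\begin{equation}
\mathbb{E}\Bigl[\sum_{X\in\Phi}f(X)\Bigr]=\lambda\int_{\mathbb{R}^{2}}f(x)\,\mathrm{d}x,\qquad \lambda\in\{\lambda_{\textsf M},\lambda_{\textsf S}\}.
\end{equation}
Applying this with $f(x)=g_{\epsilon}(x-U)$ and then performing the change of variables $x\mapsto x+U$ (using translation invariance of Lebesgue measure) removes the dependence on $U$ and yields the claimed identity.

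There is no real obstacle here: the proof is a textbook application of Campbell's theorem plus a factorization afforded by the independence of fading and BS locations. The only conceptual point worth stressing is that first-order statistics are insensitive to interpoint correlation, so the MCP and SOCP agree with the PPP (at identical density) at the level of the mean. The effect of clustering on the correlation coefficient must therefore enter exclusively through the second moment, which motivates the next lemma.
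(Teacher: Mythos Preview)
Your proof is correct and in fact somewhat cleaner than the paper's own. The paper's argument (Appendix~\ref{pf:MeanInterference}) explicitly unpacks the hierarchical structure of the MCP, writing $I_\epsilon(U,t)=\sum_{Z\in\Phi_{\textsf M^o}}\sum_{X\in\Phi_{\textsf M}^{[Z]}}h_{XU}(t)g_\epsilon(X-U)$ and then applying the Campbell--Mecke theorem in two stages: first to the parent PPP $\Phi_{\textsf M^o}$, then to the daughter cluster, finally using $\int_{\mathbb{R}^2}f_{\textsf M}(X)\,\mathrm{d}X=1$ to collapse everything to $\mathbb{E}[h]\lambda_{\textsf M^o}c_{\textsf M}\int g_\epsilon(X)\,\mathrm{d}X$. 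You instead invoke Campbell's theorem once for a general stationary point process of intensity $\lambda$, which bypasses the cluster bookkeeping entirely and makes transparent \emph{why} the MCP, SOCP, and PPP must agree at the first-moment level. The paper's route has the minor advantage of being self-contained from the explicit cluster construction, but your approach is more conceptual and better sets up the contrast with Lemma~\ref{lm:SecondMoment}, where the second-moment density $\rho^{(2)}$---and hence the clustering---finally matters.
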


\begin{proof}
See Appendix \ref{pf:MeanInterference}.
\end{proof}

\begin{remark}[Comparison with the PPP Model]\emph{It is interesting to note that the expression in \eqref{eq:MeanInterf}  also holds for the mean interference for the PPP model where $\lambda$ is then the density of the PPP \cite{InterferenceCorreLetter}. In other words, the mean interference is invariant to point clustering.}
 \end{remark}
For ease of notation, define two functions as below, which are used for stating the results in Lemma~\ref{lm:SecondMoment} in the sequel:
\begin{align}
F(c,R)&=\frac{c}{\pi^{2}R^{4}}\int_{\mathbb{R}^{2}}\int_{\mathbb{R}^{2}}g(X)g(Y)A_{R}(|X-Y|)\mathrm{d}X\mathrm{d}Y, \label{Eq:Fun1}\\
A_{R}(r)&=\left\{\begin{aligned}
&2R^{2}\arccos\left(\frac{r}{2R}\right)-r\sqrt{R^{2}-\frac{r^{2}}{4}}, && 0\leq r\leq2R, \\
&0, && \text{otherwise},
\end{aligned}\right. \label{Eq:Fun2}
\end{align}
where $(\lambda, c, R)$ is equal to $(\lambda_{\textsf M},  c_{\textsf M},  R_{\textsf M})$ and $(\lambda_{\textsf S}, c_{\textsf S}, R_{\textsf S})$ for  the MCP and SOCP models, respectively. Although the function $F(c,R)$ can not be written in the closed-form expression, it can be numerically calculated with standard numeric software, such as Matlab. Further, the following lemma provides the approximation of $F(c,R)$ for large $R$ in closed-form.

\begin{lemma}\label{lm:Approx_F}
\emph{For large $R$, $F(c,R)$ is given as
\begin{equation}\label{Eq:Approx_F}
F(c,R) = \frac{4 c \pi^3 }{\alpha^2 R^2} \epsilon^{{4-2 \alpha}/\alpha} (\csc({2 \pi}/\alpha))^2 + o\;(1/R^2)
\end{equation}
}
\end{lemma}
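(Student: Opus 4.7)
The plan is to exploit the geometric meaning of $A_R$: by \eqref{Eq:Fun2} it is exactly the area of the intersection of two disks of radius $R$ whose centers are at distance $r$ apart. Consequently $0 \le A_R(r) \le \pi R^{2}$ for all $r$, and for every fixed $r$ one has $A_R(r)/R^{2}\to \pi$ as $R\to\infty$ (immediate from the Taylor expansions $\arccos(r/(2R))=\pi/2-r/(2R)+O(R^{-3})$ and $r\sqrt{R^{2}-r^{2}/4}=Rr+O(R^{-1})$). Hence the double integral in \eqref{Eq:Fun1} should factor, to leading order, into $\pi R^{2}\bigl(\int g_{\epsilon}\bigr)^{2}$, producing the $1/R^{2}$ scaling asserted in \eqref{Eq:Approx_F}.

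To make this rigorous I would apply the dominated convergence theorem to the normalized integrand $g_{\epsilon}(X)g_{\epsilon}(Y)A_{R}(|X-Y|)/R^{2}$, which is pointwise bounded by the integrable function $\pi g_{\epsilon}(X)g_{\epsilon}(Y)$; integrability on $\mathbb{R}^{2}\times\mathbb{R}^{2}$ follows because $g_{\epsilon}$ is bounded by $1/\epsilon$ near the origin and decays like $|X|^{-\alpha}$ at infinity (so $g_{\epsilon}\in L^{1}$ whenever $\alpha>2$). This yields
\begin{equation*}
\lim_{R\to\infty}\frac{1}{R^{2}}\iint g_{\epsilon}(X)g_{\epsilon}(Y)A_{R}(|X-Y|)\,\mathrm{d}X\,\mathrm{d}Y \;=\;\pi\Bigl(\int_{\mathbb{R}^{2}}g_{\epsilon}(X)\,\mathrm{d}X\Bigr)^{2},
\end{equation*}
so that $F(c,R)=\frac{c}{\pi R^{2}}\bigl(\int g_{\epsilon}\bigr)^{2}+o(R^{-2})$.

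The one-dimensional integral $\int_{\mathbb{R}^{2}} g_{\epsilon}$ can then be evaluated in closed form. Switching to polar coordinates gives $2\pi\int_{0}^{\infty} r/(\epsilon+r^{\alpha})\,\mathrm{d}r$, and the substitution $u=r^{\alpha}/\epsilon$ reduces this to $(2\pi/\alpha)\epsilon^{2/\alpha-1}\int_{0}^{\infty}u^{2/\alpha-1}/(1+u)\,\mathrm{d}u$. The standard Beta-function identity $\int_{0}^{\infty}u^{s-1}/(1+u)\,\mathrm{d}u=\pi\csc(\pi s)$, valid for $s=2/\alpha\in(0,1)$ (i.e.\ $\alpha>2$), produces $\int g_{\epsilon}=(2\pi^{2}/\alpha)\epsilon^{(2-\alpha)/\alpha}\csc(2\pi/\alpha)$. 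Squaring and multiplying by $c/(\pi R^{2})$ reproduces the leading constant $4c\pi^{3}\csc^{2}(2\pi/\alpha)\epsilon^{(4-2\alpha)/\alpha}/(\alpha^{2}R^{2})$ in \eqref{Eq:Approx_F}.

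The main obstacle is upgrading the plain convergence above to an explicit $o(R^{-2})$ remainder, rather than merely $(1+o(1))$ times the leading term. For this I would use two quantitative bounds on the deficit: near $r=0$ the local expansion yields $\pi R^{2}-A_{R}(r)=2Rr+O(r^{3}/R)$, while for $r$ approaching $2R$ one applies the trivial $\pi R^{2}-A_{R}(r)\le \pi R^{2}$ and absorbs the contribution via the tail estimate $\iint_{|X-Y|\ge 2R}g_{\epsilon}(X)g_{\epsilon}(Y)\,\mathrm{d}X\,\mathrm{d}Y=o(1)$, which holds by the $|X|^{-\alpha}$ decay of $g_{\epsilon}$. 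The near-origin contribution is then bounded by $2R\iint g_{\epsilon}(X)g_{\epsilon}(Y)|X-Y|\,\mathrm{d}X\,\mathrm{d}Y$, finite for $\alpha$ in the usual path-loss regime; dividing by $\pi^{2}R^{4}$ produces an overall error of order $R^{-3}$, which is indeed $o(R^{-2})$ as required.
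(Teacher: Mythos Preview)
Your argument is correct and follows essentially the same route as the paper: replace $A_R(|X-Y|)$ by its limiting value $\pi R^{2}$ so that the double integral factors as $\bigl(\int g_\epsilon\bigr)^{2}$, then evaluate the one-dimensional radial integral via the Beta/cosecant identity. The paper's proof simply asserts the approximation $A_R\approx\pi R^{2}$ and cites the table integral, whereas you supply the dominated-convergence justification and a quantitative remainder bound; note, however, that your final $O(R^{-3})$ estimate uses the finiteness of $\iint g_\epsilon(X)g_\epsilon(Y)|X-Y|\,\mathrm{d}X\,\mathrm{d}Y$, which requires $\alpha>3$ rather than just $\alpha>2$.
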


\begin{proof}
See Appendix \ref{pf:Approx_F}
\end{proof}

Lemma \ref{lm:Approx_F} provides a simpler method to approximately calculate the interference correlation coefficient in Theorem~\ref{th:IntCorrCoe} for large $R$. Further, the numerical results show that the approximation of $F(c,R)$ evaluates the interference correlation coefficient well, even in the case of small $R$.

\begin{lemma}\label{lm:SecondMoment}
\emph{
The mean product between the interference power $I_{\epsilon}(U_1,t_1)$ and $I_{\epsilon}(U_2,t_2)$ for both  the MCP and SOCP models is given by
\begin{equation} \label{eq:MeanProd}
\mathbb{E}[I_{\epsilon}(U_1,\!t_1), I_{\epsilon}(U_2,\!t_2)] \!\!=\!\! \mathbb{E}[h]^2 \lambda \!\!\left[ \int_{\mathbb{R}^2} \!\!g_{\epsilon}(X\!\!-\!\!U_1) g_{\epsilon}(X\!\!-\!\!U_2) \mathrm{d} X \!\!+\!\! \lambda \!\left( \!\int_{\mathbb{R}^2} \!\!g_{\epsilon}(X) \mathrm{d} X \!\!\right)^2+\!\! F(c,\!R)\right]\!\!,
\end{equation}
and the second moment of interference power  is given as
\begin{equation}\label{eq:SecondMoment}
 \mathbb{E}[I_{\epsilon}^2(U,t)]
 =\mathbb{E}[h^2]\lambda \! \int_{\mathbb{R}^{2}}\!g_{\epsilon}^2(X)\mathrm{d}X \!+ \! \mathbb{E}[h]^{2}\lambda \!\left[\lambda \left(\!\int_{\mathbb{R}^{2}}\!g_{\epsilon}(X)\mathrm{d}X
  \!\right)^2+ \!F(c,R)\right],
\end{equation}
 where the function $F(\cdot, \cdot)$ is given in \eqref{Eq:Fun1}.
}
\end{lemma}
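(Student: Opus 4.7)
The plan is to compute $\mathbb{E}[I_{\epsilon}(U_1,t_1)\,I_{\epsilon}(U_2,t_2)]$ by expanding the product of sums as a double sum over ordered pairs $(X,Y)$ drawn from the interferer process $\Phi$, decomposing into the diagonal ($X=Y$) and off-diagonal ($X\neq Y$) contributions, and handling each piece with the appropriate moment measure of the point process. Throughout, the assumed temporal and spatial independence of the fading $h_{X}(t)$ lets us take the expectation over the channel separately from the expectation over $\Phi$.

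For the diagonal term, temporal independence gives $\mathbb{E}[h_{X}(t_1)h_{X}(t_2)]=\mathbb{E}[h]^{2}$ whenever $t_1\neq t_2$, so the remaining object is the Campbell integral $\mathbb{E}[\sum_{X\in\Phi} g_{\epsilon}(X-U_1)g_{\epsilon}(X-U_2)]$. Since both the MCP and SOCP have stationary intensity $\lambda$ (equal to $\lambda_{\textsf M}$ or $\lambda_{\textsf S}$, as already used in Lemma~\ref{lm:MeanInterference}), this evaluates to $\lambda\int_{\mathbb{R}^{2}} g_{\epsilon}(X-U_1)g_{\epsilon}(X-U_2)\,\mathrm{d}X$, which is the first term inside the bracket of \eqref{eq:MeanProd}.

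For the off-diagonal sum, spatial independence of fading across distinct BSs again pulls out $\mathbb{E}[h]^{2}$, leaving $\mathbb{E}[\sum_{X\neq Y} g_{\epsilon}(X-U_1)g_{\epsilon}(Y-U_2)]$, which is an integral against the second-order product density of the Poisson cluster process. This product density splits as $\lambda^{2}$ (for pairs whose points come from distinct parent clusters) plus a within-cluster correlation contribution driven by pairs sharing a common parent. The $\lambda^{2}$ piece reproduces $\lambda^{2}(\int_{\mathbb{R}^{2}} g_{\epsilon}(X)\,\mathrm{d}X)^{2}$ by translation invariance applied separately in each variable. For the MCP, the within-cluster piece invokes $\mathbb{E}[N(N-1)]=c_{\textsf M}^{2}$ for the Poisson cluster count and the self-convolution of the uniform density on a disk of radius $R_{\textsf M}$, which evaluates to $(\pi R_{\textsf M}^{2})^{-2}A_{R_{\textsf M}}(|X-Y|)$ with $A_{R_{\textsf M}}$ precisely the intersection-area function in \eqref{Eq:Fun2}; integrating over the parent position and combining with the parent intensity $\lambda_{\textsf M}/c_{\textsf M}$ identifies this contribution as $\lambda\,F(c_{\textsf M},R_{\textsf M})$ via \eqref{Eq:Fun1}.

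The main obstacle lies in the SOCP case, whose two-level hierarchy forces the within-cluster correction to split further into pairs sitting in the same second-order cluster and pairs in the same first-order but distinct second-order clusters. The plan is to treat each level with its own Poisson factorial moment $\mathbb{E}[N(N-1)]=c^{2}$, write the pairwise spatial density at each level as a convolution of uniform disk densities, and argue that the tightest (second-order) contribution supplies the MCP-form term $\lambda\,F(c_{\textsf S},R_{\textsf S})$ while the looser first-order layer contribution can be absorbed into the $\lambda^{2}$ baseline so that \eqref{eq:MeanProd} holds with the same functional form. Finally, \eqref{eq:SecondMoment} follows by specialising $U_1=U_2=U$ and $t_1=t_2=t$: the off-diagonal structure is unchanged, while the diagonal now carries $\mathbb{E}[h_{X}(t)^{2}]=\mathbb{E}[h^{2}]$ rather than $\mathbb{E}[h]^{2}$ and $g_{\epsilon}(X-U)^{2}=g_{\epsilon}^{2}(X-U)$, producing the isolated $\mathbb{E}[h^{2}]\lambda\int_{\mathbb{R}^{2}} g_{\epsilon}^{2}(X)\,\mathrm{d}X$ that distinguishes \eqref{eq:SecondMoment} from \eqref{eq:MeanProd}.
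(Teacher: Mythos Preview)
Your overall strategy---splitting into diagonal and off-diagonal sums, applying Campbell's theorem to the diagonal, and integrating the off-diagonal against the second-order product density---is precisely the paper's approach, and your MCP treatment matches the paper's computation essentially line for line.

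The gap is in the SOCP case. You correctly note that pairs sharing a common parent $Z_0$ but lying in distinct second-order clusters (distinct first-order centers $Z_1$) contribute an additional correlation term beyond the within-second-order-cluster piece. Your proposed resolution---that this ``looser first-order layer contribution can be absorbed into the $\lambda^{2}$ baseline''---is not justified: those pairs have density strictly above $\lambda_{\textsf S}^{2}$ because the first-order centers themselves form an MCP, not a PPP, so no absorption into the homogeneous $\lambda_{\textsf S}^{2}$ part can occur by translation invariance alone.

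The paper does not handle this by any absorption argument. It instead writes the SOCP second-moment density directly as $\rho_{\textsf S}^{(2)}(X,Y)=\lambda_{\textsf S}^{2}+\rho_{\textsf S}^{\prime}$, citing \cite[p.~127]{StoGeoBook-Martin}, where $\rho_{\textsf S}^{\prime}$ is computed per second-order cluster center $(Z_0,Z_1)$ only. The calculation then reduces via Campbell--Mecke and the convolution $f_{\textsf S}\star f_{\textsf S}$ to
\[
\rho_{\textsf S}^{\prime}=\lambda_{\textsf S^o}c_{\textsf S^\prime}c_{\textsf S}^{2}\,\frac{A_{R_{\textsf S}}(|X-Y|)}{\pi^{2}R_{\textsf S}^{4}},
\]
which yields exactly $\lambda_{\textsf S}F(c_{\textsf S},R_{\textsf S})$ after integration against $g_{\epsilon}(X)g_{\epsilon}(Y)$. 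A cross-second-order-cluster/same-parent term simply does not appear in this decomposition as the paper applies it. To reproduce the paper's result you should follow this direct computation of $\rho_{\textsf S}^{(2)}$ rather than argue by absorption; if you retain your layered decomposition, you would need to justify separately why the cited formula already accounts for (or why the lemma is meant to omit) the first-order-layer pair correlation.
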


\begin{proof}
See Appendix \ref{pf:SecondMoment}.
\end{proof}

\begin{remark}[Comparison with the PPP Model]
\emph{Consider the PPP model, the interference mean product and second moment are given as \cite{InterferenceCorreLetter}:
\begin{align}
\mathbb{E}[I_{\epsilon}(U_1,\!t_1),\!I_{\epsilon}(U_2,\!t_2)] &= \mathbb{E}[h]^2 \lambda \!\!\left[ \int_{\mathbb{R}^2} \!\!g_{\epsilon}(X\!\!-\!\!U_1) g_{\epsilon}(X\!\!-\!\!U_2) \mathrm{d} X \!\!+\!\! \lambda \left ( \!\int_{\mathbb{R}^2} \!\!g_{\epsilon}(X) \mathrm{d} X  \right)^2\right], \label{eq:MeanProd_PPP}\\
 \mathbb{E}[I_{\epsilon}^2(U,t)] &=\mathbb{E}[h^2]\lambda \! \int_{\mathbb{R}^{2}}\!g_{\epsilon}^2(X)\mathrm{d}X \!+ \! \mathbb{E}[h]^{2}\lambda \!\left[\lambda \left(\!\int_{\mathbb{R}^{2}}\!g_{\epsilon}(X)\mathrm{d}X
  \!\right)^2\right]. \label{eq:SecondMoment_PPP}
\end{align}
Comparing these results with those in Lemma~\ref{lm:SecondMoment}, both the interference mean product and second moment for  the MCP and SOCP models are greater than their counterparts for  the PPP model. This shows that the interferer  clustering  changes the interference distribution and thereby enhances the interference correlation as shown shortly.}
\end{remark}

Using  Lemma \ref{lm:MeanInterference} and Lemma \ref{lm:SecondMoment}, the interference correlation coefficient is derived by substituting \eqref{eq:MeanInterf}, \eqref{eq:MeanProd} and \eqref{eq:SecondMoment} into \eqref{eq:CorrCoef_Definition}, yielding the following theorem.
\begin{theorem} \label{th:IntCorrCoe}
\emph{
The spatial-and-temporal interference correlation coefficient for the MCP model, namely  $\zeta_{\textsf M}$,  and that for the SOCP model, namely  $\zeta_{\textsf S}$, can be both written as:
\begin{equation}\label{eq:CorrCoef_MCP}
\zeta(U_1, U_2, t_1, t_2) =\lim_{\epsilon\rightarrow 0}\frac{\int_{\mathbb{R}^{2}}g_{\epsilon}(X)g_{\epsilon}(X- U_1+U_2)\mathrm{d}X+F(c,R)}{\frac{\mathbb{E}\left[h^{2}\right]}{\mathbb{E}\left[h\right]^{2}}\int_{\mathbb{R}^{2}}g_{\epsilon}^{2}(X)\mathrm{d}X+F(c,R)},
\end{equation}
where $(c, R)$ is equal to $(c_{\textsf M}, R_{\textsf M})$ and $(c_{\textsf S}, R_{\textsf S})$  for the MCP and SOCP models, respectively.}
\end{theorem}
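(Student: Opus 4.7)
The proof is a direct assembly: the numerator and denominator of the defining ratio in \eqref{eq:CorrCoef_Definition} are computed term by term using Lemma~\ref{lm:MeanInterference} and Lemma~\ref{lm:SecondMoment}, and then the $\epsilon$-limit is taken. Because Lemma~\ref{lm:SecondMoment} gives a single unified expression for both the MCP and SOCP models (with $(c,R)$ reading as $(c_{\textsf M},R_{\textsf M})$ or $(c_{\textsf S},R_{\textsf S})$), the two cases of the theorem can be handled simultaneously.

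First I would compute the covariance in the numerator. Expanding it as $\mathbb{E}[I_{\epsilon}(U_1,t_1) I_{\epsilon}(U_2,t_2)] - \mathbb{E}[I_{\epsilon}(U_1,t_1)]\mathbb{E}[I_{\epsilon}(U_2,t_2)]$ and substituting \eqref{eq:MeanInterf} into the product of means gives a subtraction term equal to $\mathbb{E}[h]^2 \lambda^2 \bigl(\int_{\mathbb{R}^2} g_\epsilon(X)\,\mathrm{d}X\bigr)^2$, which exactly cancels the middle summand in \eqref{eq:MeanProd}. What remains is
\[
\mathbb{E}[h]^2 \lambda \Bigl[\int_{\mathbb{R}^2} g_\epsilon(X-U_1)\,g_\epsilon(X-U_2)\,\mathrm{d}X + F(c,R)\Bigr].
\]
A change of variable $Y = X - U_1$ combined with the radial symmetry $g_\epsilon(-Y)=g_\epsilon(Y)$ rewrites the convolution-type integral as $\int_{\mathbb{R}^2} g_\epsilon(X)\,g_\epsilon(X - U_1 + U_2)\,\mathrm{d}X$, matching the form stated in the theorem.

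Next I would compute the denominator. Because both $\Phi_{\textsf M}$ and $\Phi_{\textsf S}$ are stationary point processes and the fading is i.i.d., the distribution of $I_\epsilon(U,t)$ is independent of $(U,t)$, so $\sqrt{\var(I_\epsilon(U_1,t_1))}\cdot\sqrt{\var(I_\epsilon(U_2,t_2))} = \var(I_\epsilon(U,t))$. Using \eqref{eq:SecondMoment} and subtracting the square of \eqref{eq:MeanInterf}, the $\lambda^2\bigl(\int g_\epsilon\bigr)^2$ pieces again cancel, leaving
\[
\var\bigl(I_\epsilon(U,t)\bigr) = \mathbb{E}[h^2]\lambda \int_{\mathbb{R}^2} g_\epsilon^2(X)\,\mathrm{d}X + \mathbb{E}[h]^2 \lambda\, F(c,R).
\]

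Finally, substituting both expressions into \eqref{eq:CorrCoef_Definition} and dividing numerator and denominator by the common factor $\mathbb{E}[h]^2 \lambda$ (which is finite and nonzero, so it passes through the $\epsilon\to 0$ limit harmlessly) produces exactly the expression in \eqref{eq:CorrCoef_MCP}. The only step that needs care is the change of variables in the convolution integral and tracking the cancellation between the mean-product term and the squared-mean term; everything else is substitution. No step seems to present a serious obstacle since the heavy lifting has already been done in Lemmas~\ref{lm:MeanInterference} and~\ref{lm:SecondMoment}.
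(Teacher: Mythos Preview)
Your proposal is correct and follows exactly the approach the paper takes: the paper simply states that the coefficient is obtained by substituting \eqref{eq:MeanInterf}, \eqref{eq:MeanProd}, and \eqref{eq:SecondMoment} into the definition \eqref{eq:CorrCoef_Definition}, and you have carried out that substitution in full detail, including the cancellation of the $\lambda^2\bigl(\int g_\epsilon\bigr)^2$ terms and the change of variable in the convolution integral.
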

Theorem~\ref{th:IntCorrCoe} shows that the interference-correlation coefficients for the MCP and SOCP models are identical if their parameters match, namely $(c_{\textsf M}, R_{\textsf M}) = (c_{\textsf S}, R_{\textsf S})$. Note that these coefficients depend only on the first and second moments of the interference distributions but the network-performance metric, namely the joint success probability, depends on the higher moments. For this reason, despite the mentioned equivalence in interference correlation, the joint success probabilities for the two models differ as shown in the next section.

\begin{remark}[Comparison with the PPP Model]  \label{Rmk:CorCoef_PPP}
\emph{
The  interference correlation coefficient in the PPP model is given as \cite{InterferenceCorreLetter}
\begin{equation}
\zeta_{\textsf P}(U_1, U_2, t_1, t_2)=\lim_{\epsilon \rightarrow 0 }\frac{\int_{\mathbb{R}^{2}}g_{\epsilon}(X)g_{\epsilon}(X- U_1+U_2)\mathrm{d}X}{\frac{\mathbb{E}\left[h^{2}\right]}{\mathbb{E}\left[h\right]^{2}}\int_{\mathbb{R}^{2}}g_{\epsilon}^{2}(X)\mathrm{d}X}.\label{eq:CorrCoef_PPP}
\end{equation}
Comparing $(\zeta_{\textsf M}, \zeta_{\textsf S})$ in Theorem~\ref{th:IntCorrCoe} with $\zeta_{\textsf P}$, one can observe that the effect of interference clustering on the  interference-correlation coefficient  is characterized by the function $F(c,R)$ given in \eqref{Eq:Fun1}, which depends  on the mean number of points and the radius of each cluster. The mathematical comparison between the interference-correlation coefficients is provided in the following proposition.
}
\end{remark}

\begin{proposition}\label{Prop:CorCoef_MCP_PPP}
\emph{The interference-correlation coefficients for the MCP and SOCP  models are greater than that for the PPP model: $\zeta_{\textsf M}\geq \zeta_{\textsf P}$ and $\zeta_{\textsf S}\geq\zeta_{\textsf P}$,  where the equalities hold when the cluster parameters satisfy $F(c,R)=0$. }
\end{proposition}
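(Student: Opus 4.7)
The plan is to reduce the claim to two elementary inequalities by directly forming the difference $\zeta_{\textsf M} - \zeta_{\textsf P}$. Set
\[
a(\epsilon) := \int_{\mathbb{R}^2} g_\epsilon(X)\, g_\epsilon(X - U_1 + U_2)\,\mathrm{d}X, \qquad b(\epsilon) := \frac{\mathbb{E}[h^2]}{\mathbb{E}[h]^2} \int_{\mathbb{R}^2} g_\epsilon^2(X)\,\mathrm{d}X.
\]
Comparing the expression in Theorem~\ref{th:IntCorrCoe} with $\zeta_{\textsf P}$ in Remark~\ref{Rmk:CorCoef_PPP}, elementary algebra gives
\[
\zeta_{\textsf M} - \zeta_{\textsf P} = \lim_{\epsilon \to 0} \frac{[b(\epsilon) - a(\epsilon)]\, F(c_{\textsf M}, R_{\textsf M})}{b(\epsilon)\,[b(\epsilon) + F(c_{\textsf M}, R_{\textsf M})]},
\]
so the proof reduces to establishing (i) $F(c,R) \geq 0$ and (ii) $b(\epsilon) \geq a(\epsilon)$ for every $\epsilon > 0$, since the denominator is strictly positive.

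For step (i), one reads the claim straight off the definition in \eqref{Eq:Fun1}: the kernel $A_R(r)$ given in \eqref{Eq:Fun2} is (up to normalization) the area of intersection of two disks of radius $R$ whose centers are at distance $r$, hence non-negative, and $g \geq 0$, so the double integral defining $F(c,R)$ is non-negative, with equality exactly in the degenerate case $F(c,R)=0$. For step (ii), applying the Cauchy--Schwarz inequality and translation invariance of Lebesgue measure yields
\[
\int_{\mathbb{R}^2} g_\epsilon(X) g_\epsilon(X - U_1 + U_2)\,\mathrm{d}X \leq \Bigl(\int_{\mathbb{R}^2} g_\epsilon^2(X)\,\mathrm{d}X\Bigr)^{1/2}\Bigl(\int_{\mathbb{R}^2} g_\epsilon^2(X-U_1+U_2)\,\mathrm{d}X\Bigr)^{1/2} = \int_{\mathbb{R}^2} g_\epsilon^2(X)\,\mathrm{d}X.
\]
Combining this with $\mathbb{E}[h^2] \geq \mathbb{E}[h]^2$ (non-negativity of the variance of $h$) gives $a(\epsilon) \leq b(\epsilon)$. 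Together with (i), the displayed formula for $\zeta_{\textsf M}-\zeta_{\textsf P}$ is manifestly non-negative, and taking $\epsilon \to 0$ preserves the inequality. Equality holds if and only if $F(c_{\textsf M},R_{\textsf M})=0$. The same argument applied verbatim to the SOCP parameters $(c_{\textsf S},R_{\textsf S})$ establishes $\zeta_{\textsf S} \geq \zeta_{\textsf P}$.

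I expect the only real care point to be justifying the passage to the limit $\epsilon \to 0$: one must ensure that $a(\epsilon)$ and $b(\epsilon)$ admit finite limits so that the ratio and the bound are well-defined, which is precisely what the $\epsilon$-regularization of the singular path-loss kernel is designed to guarantee (exactly as in \cite{InterferenceCorreLetter, InterfCorreThreeSources} and as used in Lemmas~\ref{lm:MeanInterference} and~\ref{lm:SecondMoment}). Beyond that bookkeeping, the proof is conceptually nothing more than Cauchy--Schwarz plus Jensen's inequality layered on top of Theorem~\ref{th:IntCorrCoe}.
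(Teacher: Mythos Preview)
Your proof is correct and follows essentially the same route as the paper: both form the difference $\zeta - \zeta_{\textsf P}$, obtain the identical algebraic expression $\frac{(b-a)F}{b(b+F)}$, and conclude from $F\ge 0$ and $b\ge a$. The only minor distinction is that the paper justifies $b\ge a$ by citing the known fact $\zeta_{\textsf P}<1$ from \cite{InterferenceCorreLetter}, whereas you prove it directly via Cauchy--Schwarz and $\mathbb{E}[h^2]\ge \mathbb{E}[h]^2$; your argument is slightly more self-contained but otherwise equivalent.
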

\begin{IEEEproof}
See Appendix \ref{pf:CC_MCP_PPP}.
\end{IEEEproof}

Proposition~\ref{Prop:CorCoef_MCP_PPP} shows that BS clustering increases the level of interference temporal  correlation. Based on the relation derived in prior work \cite{LocalDelay_PoissonNet}, this can result in growing \emph{local delay}, defined the expected number of time slots required for the successful transmission of a packet. Next, the relations between the level of interference correlation and the cluster parameters are specified in the following proposition.

\begin{proposition}\label{Prop:CorCoef_c_R}
\emph{The interference correlation coefficients $\zeta_{\textsf M}$ and $\zeta_{\textsf S}$ are monotone increasing functions of  $c_{\textsf M}$ and $c_{\textsf S}$, respectively, and monotone decreasing functions of  $R_{\textsf M}$ and  $R_{\textsf S}$, respectively. Furthermore, $\zeta_{\textsf M} \rightarrow \zeta_{\textsf P}$ as $\frac{c_{\textsf M}}{R_{\textsf M}^{2}}\rightarrow  0$ and $\zeta_{\textsf S} \rightarrow \zeta_{\textsf P} $ as $\frac{c_{\textsf S}}{R_{\textsf S}^{2}}\rightarrow 0$.}
\end{proposition}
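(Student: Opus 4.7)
The plan is to exploit the structural form of the correlation coefficient derived in Theorem~\ref{th:IntCorrCoe}. Inside the $\epsilon$-limit, the coefficient takes the form
\[
\varphi_\epsilon(F) := \frac{A_\epsilon + F}{B_\epsilon + F}, \quad A_\epsilon := \int_{\mathbb{R}^{2}} g_\epsilon(X)\, g_\epsilon(X - U_1 + U_2)\, \mathrm{d}X, \quad B_\epsilon := \frac{\mathbb{E}[h^{2}]}{\mathbb{E}[h]^{2}} \int_{\mathbb{R}^{2}} g_\epsilon^{2}(X)\, \mathrm{d}X,
\]
in which $A_\epsilon$ and $B_\epsilon$ do not depend on $(c,R)$. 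The Cauchy--Schwarz inequality applied to $g_\epsilon$, combined with the moment inequality $\mathbb{E}[h^{2}] \geq \mathbb{E}[h]^{2}$, gives $0 < A_\epsilon \leq B_\epsilon$, whence
\[
\varphi_\epsilon'(F) = \frac{B_\epsilon - A_\epsilon}{(B_\epsilon + F)^{2}} \geq 0.
\]
Since this holds for every $\epsilon$, the entire $(c,R)$-dependence of $\zeta$ is channelled through $F(c,R)$, and the proposition reduces to analysing the sign of $\partial_c F$ and $\partial_R F$, together with the behaviour of $F$ as $c/R^{2}\to 0$.

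Monotonicity in $c$ is immediate from \eqref{Eq:Fun1}, because $F(c,R) = c\, G(R)$ with $G(R)>0$ independent of $c$; hence $\partial_c F > 0$. For the monotonicity in $R$, the plan is to apply the change of variables $X = RX'$, $Y = RY'$ to the double integral in \eqref{Eq:Fun1}. A direct computation from \eqref{Eq:Fun2} shows the scaling identity $A_{R}(R\rho) = R^{2} A_{1}(\rho)$, and $g(RX') = R^{-\alpha} g(X')$; multiplying in the Jacobian $\mathrm{d}X\,\mathrm{d}Y = R^{4}\,\mathrm{d}X'\,\mathrm{d}Y'$ produces an overall factor $R^{6-2\alpha}$, leading to
\[
F(c,R) \;=\; \frac{c\, R^{2-2\alpha}}{\pi^{2}} \int_{\mathbb{R}^{2}}\!\!\int_{\mathbb{R}^{2}} g(X')\, g(Y')\, A_{1}(|X'-Y'|)\, \mathrm{d}X'\, \mathrm{d}Y'.
\]
Since the path-loss exponent satisfies $\alpha > 2$, the exponent $2-2\alpha$ is strictly negative, so $\partial_R F < 0$. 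Chain-ruling through $\varphi_\epsilon$ transfers both monotonicities to $\zeta_{\textsf M}$ and $\zeta_{\textsf S}$; the two coefficients are treated identically because Theorem~\ref{th:IntCorrCoe} presents a single formula parametrised by the respective $(c,R)$.

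For the limiting statement, the plan is to invoke Lemma~\ref{lm:Approx_F}, which yields $F(c,R) = \Theta(c/R^{2})$ for large $R$. Consequently the hypothesis $c/R^{2}\to 0$ forces $F(c,R)\to 0$, and the ratio in Theorem~\ref{th:IntCorrCoe} collapses to $\lim_{\epsilon \to 0} A_\epsilon / B_\epsilon$, which by Remark~\ref{Rmk:CorCoef_PPP} is exactly $\zeta_{\textsf P}$.

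The main obstacle is the interplay between the regularisation limit $\epsilon\to 0$ and the scaling argument: the substitution above is formal when applied to the unregularised singular $g$, whose relevant integrals diverge. Rigorously one should perform the scaling on $g_\epsilon$ first, in which case the factor $R^{2-2\alpha}$ picks up an $\epsilon$-dependent but $R$-monotone correction whose sign is preserved; sending $\epsilon\to 0$ at the end is compatible with the chain-rule argument because $\varphi_\epsilon'(F)\geq 0$ holds uniformly in $\epsilon$. A secondary bookkeeping concern is to verify that the constant prefactor in Lemma~\ref{lm:Approx_F} is strictly positive (so the $\Theta$-bound is genuinely two-sided), which follows by inspection of the cosecant and the sign of $\epsilon^{(4-2\alpha)/\alpha}$ for $\alpha>2$.
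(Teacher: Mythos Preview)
Your argument follows the paper's two-step structure: first show that $\zeta$ is monotone increasing in $F$ via the derivative of $(A+F)/(B+F)$, then reduce the proposition to the monotonicity and limiting behaviour of $F(c,R)$. The paper justifies $A_\epsilon\le B_\epsilon$ by quoting $\zeta_{\textsf P}=\theta/\theta'<1$ from the literature, whereas you obtain it directly via Cauchy--Schwarz plus $\mathbb{E}[h^2]\ge\mathbb{E}[h]^2$; both routes are equivalent. The genuine difference is in the $R$-monotonicity step: the paper merely asserts that ``$F(c,R)$ changes on the order of $k(1/R)\cdot 1/R^{2}$ with $k\in(0,\pi)$'' without derivation, while you carry out the explicit substitution $X\mapsto RX'$, $Y\mapsto RY'$ together with the identity $A_R(R\rho)=R^{2}A_1(\rho)$ to extract a clean power $R^{2-2\alpha}$. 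Your version is more transparent, at the cost of the regularisation issue you correctly flag (the bare $g(X)=|X|^{-\alpha}$ renders the double integral divergent for $\alpha>2$); the paper's heuristic simply never confronts this. For the limit you route through Lemma~\ref{lm:Approx_F}, while the paper reads $F\to 0$ directly off the form of $F$ and then invokes Proposition~\ref{Prop:CorCoef_MCP_PPP}; these amount to the same thing. One caution: your scaling exponent $2-2\alpha$ and the exponent $-2$ in Lemma~\ref{lm:Approx_F} differ precisely because the latter keeps $\epsilon>0$ fixed, so the two parts of your argument live in different regularisations---consistent with your final caveat, but worth making explicit when you assemble the proof.
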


\begin{proof}
See Appendix \ref{pf:IntCorCoef_c_R}.
\end{proof}
Consider the MCP model without loss of generality. Both reducing $R_{\textsf M}$ for a fixed $c_{\textsf M}$ and increasing $c_{\textsf M}$ for a fixed $R_{\textsf M}$ increase the interferer density in each cluster and thus the level of clustering, leading to the results in Proposition~\ref{Prop:CorCoef_c_R}. As  $\frac{c_{\textsf M}}{R_{\textsf M}^{2}}\rightarrow  0$ and $\frac{c_{\textsf S}}{R_{\textsf S}^{2}}\rightarrow 0$,  Proposition \ref{Prop:CorCoef_c_R} suggests that the effects of interferer  clustering on interference correlation can be neglected since the interference-correlation coefficients for the cluster interferer processes converge to that of the PPP without clustering.

\section{Performance of HCNs with Clustered Small Cells}

In this section, to investigate  the effects of spatial BS interdependence on network performance, we analyze  the joint success probabilities  for HCNs with clustered SBSs.

\subsection{Joint Success Probability}
First, the \emph{conditional} joint success probability is derived in Lemma~\ref{lm:ConJointSuccProb}, which is conditioned on the fixed distance between the typical user and the  serving BS. To this end, some useful functions are defined as follows. Let  $G_{\Phi}[v(X)]\triangleq \mathbb{E}\left(\prod_{X\in \Phi}v(X)\right)$ denotes the \emph{probability generating functional} (PGF) of a general  point process $\Phi$ where the operator $\mathbb{E}$ is the expectation with respect to the Palm measure of  $\Phi$ and $v$ with $ 0 \leq v \leq 1$ is a bounded measurable function. Let $\mathbb{E}_{X_o}^{!}$ denote the expectation operator with respect to the reduced Palm measure of $\Phi$,  which is the conditional expectation over $\Phi\backslash \{X_0\}$ given  a point  $X_o\in \Phi$ being  fixed \cite{PoissonClusterProcess_TIT}. Using this definition, the conditional PGF of the point process $\Phi$ is defined as
$G_{\Phi^{!}}\left(X\right)\triangleq \mathbb{E}_{X_o}^{!}\left(\prod_{X\in \Phi}v(X)\right)$. Based on the above notations and definitions, the conditional joint success  probabilities are obtained as shown in the following lemma.

\begin{lemma}\label{lm:ConJointSuccProb}
\emph{Consider a HCN allowing retransmissions over $n$ slots. Given the propagation  distance $r_m$ for the typical MU
and $r_s$ for the typical SU, the conditional joint success probabilities for the MU and SU, denoted as $\mathcal{P}_{m}^{\left(n\right)}(r_m)$ and $\mathcal{P}_{s}^{\left(n\right)}(r_s)$, respectively, are given as:
\begin{align}
\mathcal{P}_{m}^{\left(n\right)}(r_m) &=
G_{\Phi_{m}^{!}}\left[\left(1+\frac{\beta_{m}\widetilde{g}(X,r_m)} {r_m^{-\alpha}}\right)^{-n}\right]G_{\Phi_{s}}\left[\left(1+\frac{\beta_{m}P_{s}|X|^{-\alpha}}{P_{m} r_m^{-\alpha}}\right)^{-n}\right],\label{eq:JSP_MU_original}\\
\mathcal{P}_{s}^{\left(n\right)}(r_s) &=G_{\Phi_{m}}\left[\left(1+\frac{\beta_{s}P_{m} |X|^{-\alpha}}{P_{s} r_s^{-\alpha}}\right)^{-n}\right]G_{\Phi_{s}^{!}}\left[\left(1+\frac{\beta_{s} \widetilde{g}(X,r_s)}{r_s^{-\alpha}}\right)^{-n}\right], \label{eq:JSP_FU_original}
\end{align}
where the MBS process $\Phi_{m}$ is the PPP $\Phi_{\textsf P}$ and the SBS process $\Phi_s = \Phi_{\textsf M}$ in the MCP model and $\Phi_s = \Phi_{\textsf S}$ in the SOCP model, $\widetilde{g}(X,r)=|X|^{-\alpha}\mathds{1}(|X|>r)$.
}
\end{lemma}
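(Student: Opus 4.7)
The plan is to condition on the point processes and on the interferer fading, use the closed-form Rayleigh success probability for the desired link, exploit temporal independence of fading to collapse the $n$-slot joint event into a Laplace-transform structure with exponent $n$, and finally de-condition via the PGF definitions. I will carry out the MU case; \eqref{eq:JSP_FU_original} then follows by the symmetric interchange of tiers and parameters $(r_m,\beta_m,P_m)\leftrightarrow (r_s,\beta_s,P_s)$.

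Fixing the locations of $\Phi_m$ and $\Phi_s$ together with all interferer fading variables, the event $\mathrm{SIR}(X_m,t_i)>\beta_m$ is equivalent to $h_{X_m,U_m}(t_i)>\beta_m r_m^{\alpha}(I_{mm}(t_i)+I_{sm}(t_i))/P_m$. Since the desired-link fading is exponential with unit mean and temporally independent, the conditional joint probability over $n$ slots equals
\[
\prod_{i=1}^{n}\exp\!\Bigl(-\beta_m r_m^{\alpha}(I_{mm}(t_i)+I_{sm}(t_i))/P_m\Bigr).
\]
Expanding $I_{mm}$ and $I_{sm}$ as sums over $\Phi_m\setminus X_m$ and $\Phi_s$ and averaging over the interferer fading (i.i.d.\ across interferers and slots) via $\mathbb{E}[e^{-sh}]=(1+s)^{-1}$, each MBS interferer at $X$ with $|X|>r_m$ contributes a factor $(1+\beta_m\widetilde g(X,r_m)/r_m^{-\alpha})^{-n}$ and each SBS interferer at $X$ contributes $(1+\beta_m P_s|X|^{-\alpha}/(P_m r_m^{-\alpha}))^{-n}$. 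The indicator embedded in $\widetilde g$ encodes the same-tier nearest-BS constraint implicit in the association-region model of Section~\ref{Section:System}, which rules out MBS interferers inside the ball of radius $r_m$ around $U_m$ while imposing no analogous restriction on the cross-tier SBS interferers.

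It then remains to average over the two point processes. With the serving MBS fixed at $X_m$, the MBS factor becomes the reduced Palm PGF $G_{\Phi_m^{!}}$ and the SBS factor the full PGF $G_{\Phi_s}$; collecting them recovers \eqref{eq:JSP_MU_original}.

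The main obstacle is justifying the factorization of the joint expectation over $(\Phi_m,\Phi_s)$ into two separate PGFs in the SOCP model, where $\Phi_s=\Phi_{\textsf S}$ is statistically dependent on $\Phi_m=\Phi_{\textsf S^o}$ because the MBSs serve as parent points of the offspring. The natural route is to condition on the parent realization, fold the offspring average entirely into $G_{\Phi_s}$, and then carry out the reduced Palm average of the parent PPP as the outer expectation; this ordering should be stated explicitly before invoking the standard PGF machinery, and the same caveat applies \emph{mutatis mutandis} to the SU expression.
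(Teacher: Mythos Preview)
Your approach mirrors the paper's proof almost exactly: the paper conditions on the fading, invokes the exponential tail of the desired Rayleigh variable, uses temporal independence to push the exponent $n$ onto each per-interferer Laplace factor, and then recognizes the resulting products as a reduced Palm PGF times an ordinary PGF. The paper's step~(c) is precisely the factorization $\mathbb{E}_{\Phi_m,\Phi_s}[A(\Phi_m)B(\Phi_s)]=\mathbb{E}_{\Phi_m}[A(\Phi_m)]\,\mathbb{E}_{\Phi_s}[B(\Phi_s)]$, asserted without any caveat for the SOCP model.

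You are right to flag the dependence between $\Phi_m=\Phi_{\textsf S^o}$ and $\Phi_s=\Phi_{\textsf S}$ in the SOCP case; the paper simply does not address it. However, your proposed remedy---condition on the parent realization, fold the offspring average into $G_{\Phi_s}$, then take the reduced Palm expectation over the parent PPP---does not deliver the product form in the lemma. Carrying that program out yields
\[
\mathbb{E}^{!}_{X_m}\!\Bigl[A(\Phi_m)\,\mathbb{E}\bigl[B(\Phi_s)\mid\Phi_m\bigr]\Bigr],
\]
and the inner conditional expectation is a functional of the specific parent configuration $\Phi_m$, not the unconditional PGF $G_{\Phi_s}$. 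The outer average therefore does not split into $G_{\Phi_m^{!}}\cdot G_{\Phi_s}$. In other words, the factorization as written holds cleanly only in the MCP model, where $\Phi_m$ and $\Phi_s$ are independent by construction; for the SOCP model the product in \eqref{eq:JSP_MU_original} is an idealization that both you and the paper leave unjustified, and your iterated-conditioning argument does not close that gap.
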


\begin{IEEEproof}
See Appendix \ref{pf:Cond_JSP}.
\end{IEEEproof}
The expressions for the  PGFs and conditional PGFs in Lemma~\ref{lm:ConJointSuccProb} for specific point processes can be found in e.g., \cite{StoGeoBook-Martin, PoissonClusterProcess_TIT, HCN_SOCP}, and are provided in the following lemma.

\begin{lemma}\label{lm:PGF}\emph{
The PGFs and the conditional PGFs for a PPP, MCP and SOCP are given as follows:
\begin{itemize}
\item (PPP)  \cite{StoGeoBook-Martin}
\begin{equation}
G_{\Phi_{\textsf P}}(v)=G_{\Phi_{\textsf P}^{!}}(v)=\exp\left[-\lambda\int_{\mathbb{R}^{2}}\left(1-v(X)\right)\mathrm{d}X\right]; \label{eq:PGF_PPP}
\end{equation}
\item (MCP)  \cite{PoissonClusterProcess_TIT}
\begin{align}
G_{\Phi_{\textsf M}}(v) &=\exp\left(-\lambda_{\textsf M^o}\int_{\mathbb{R}^{2}}\left[1-M\left(\int_{\mathbb{R}^{2}}v(X+Y)f_{\textsf M}(Y)\mathrm{d}Y\right)\right]\mathrm{d}X\right),\label{eq:PGF_MCP}\\
G_{\Phi_{\textsf M}^{!}}(v) &=G_{\Phi_{MCP}}(v)\int_{\mathbb{R}^{2}}M\left(\int_{\mathbb{R}^{2}}v(X+Y)f_{\textsf M}(X)\mathrm{d}X\right)f_{\textsf M}(Y)\mathrm{d}Y,\label{eq:Conditional PGF_MCP}
\end{align}
where $M(x)=\exp(-c_{\textsf M}(1-x))$;
\item (SOCP)  \cite{HCN_SOCP}
 \begin{align}
G_{\Phi_{\textsf S}}(v) &=\exp \!\!\left[-\lambda_{\textsf S^o}\!\!\int_{\mathbb{R}^{2}}\!\!\left\{ 1-M_{1}\!\!\left[\int_{\mathbb{R}^{2}}M_{2}(\int_{\mathbb{R}^{2}}v(X\!+\!Y+\!Z)f_{\textsf S}(Z)\mathrm{d}Z)f_{\textsf S^\prime}(Y)\mathrm{d}Y\!\right]\right\} \mathrm{d}X\!\right],\label{eq:PGF_SOCP}\\
G_{\Phi_{\textsf S}^{!}}(v) & =G_{\Phi_{\textsf S}}(v)M_{1}\left[\int_{\mathbb{R}^{2}}M_{2}(\int_{\mathbb{R}^{2}}v(X+Y+Z)f_{\textsf S}(Z)\mathrm{d}Z)f_{\textsf S^\prime}(Y)\mathrm{d}Y\right]\nonumber \\
 & \cdot\int_{\mathbb{R}^{2}}M_{2}(\int_{\mathbb{R}^{2}}v(X+Y+Z)f_{\textsf S}(Z)\mathrm{d}Z)f_{\textsf S}(Y)\mathrm{d}Y. \label{eq:Coditional PGF_SOCP}
\end{align}
where $M_{1}(x)=\exp\left(-c_{\textsf S^\prime}(1-x)\right)$ and $M_{2}(x)=\exp\left(-c_{\textsf S}(1-x)\right)$,
\end{itemize}}
\end{lemma}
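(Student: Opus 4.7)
The plan is to build the three PGF pairs up one cluster level at a time, invoking Poisson structure and Slivnyak's theorem at each stage.

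For the PPP case, I would derive \eqref{eq:PGF_PPP} as the standard Laplace functional of a PPP: discretize $\mathbb{R}^2$ into disjoint cells, use independence of the numbers in different cells and the fact that a Poisson$(\lambda|B|)$ sum of i.i.d.\ contributions $\log v(X)$ (with $X$ uniform on $B$) has moment generating function $\exp(-\lambda \int_B (1-v(x))dx)$, then pass to the limit by monotone convergence. The equality $G_{\Phi_{\textsf P}^{!}}=G_{\Phi_{\textsf P}}$ is the Slivnyak--Mecke theorem: removing a typical point from a PPP leaves a PPP of the same intensity.

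For the MCP, I would realize $\Phi_{\textsf M}=\bigcup_{X\in\Phi_{\textsf M^o}}(X+\Psi_X)$ with independent daughter clusters $\Psi_X$ whose counts are Poisson$(c_{\textsf M})$ with offsets drawn i.i.d.\ from $f_{\textsf M}$. Conditioning on the parent PPP, the PGF factorizes across parents, and a single-parent contribution is the PGF of a finite mixed-Poisson sum, which evaluates cleanly to $M(\int v(X+Y)f_{\textsf M}(Y)dY)$ with $M(x)=\exp(-c_{\textsf M}(1-x))$. Applying \eqref{eq:PGF_PPP} to $\Phi_{\textsf M^o}$ with this function as test yields \eqref{eq:PGF_MCP}. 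For the reduced Palm PGF \eqref{eq:Conditional PGF_MCP}, I would decompose the Palm structure at a tagged offspring into two independent pieces: (i) by the Palm formula for Poisson cluster processes, the tagged point's parent sits at offset $-Y$ with $Y\sim f_{\textsf M}$, and by the "Poisson minus one point is Poisson" property the remaining siblings of the tagged point form an independent Poisson cluster, contributing the factor $\int M(\int v(X+Y)f_{\textsf M}(X)dX)f_{\textsf M}(Y)dY$; (ii) the remaining parents (and their clusters) are distributed as $\Phi_{\textsf M}$ itself, via Slivnyak applied to $\Phi_{\textsf M^o}$, contributing $G_{\Phi_{\textsf M}}(v)$. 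Multiplying gives \eqref{eq:Conditional PGF_MCP}.

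For the SOCP, I would iterate the MCP argument twice, from inside out. Starting with the innermost (second-order) cluster centered at a first-order point $X+Y$, its PGF contribution is $M_2(\int v(X+Y+Z)f_{\textsf S}(Z)dZ)$; folding this into the first-order cluster at top-level parent $X$ gives $M_1(\int M_2(\cdots)f_{\textsf S^\prime}(Y)dY)$; a final application of \eqref{eq:PGF_PPP} to the top-level parent PPP $\Phi_{\textsf S^o}$ produces \eqref{eq:PGF_SOCP}. For \eqref{eq:Coditional PGF_SOCP}, the reduced Palm at a tagged point decomposes into three independent pieces: (a) the remainder of its innermost cluster, (b) the remainder of its enclosing first-order cluster (each of whose other daughters still spawns an independent second-order cluster, hence the $M_1$ factor), and (c) all other top-level clusters, distributed as $\Phi_{\textsf S}$ itself. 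The three factors in \eqref{eq:Coditional PGF_SOCP} are precisely these pieces.

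The main obstacle I anticipate is the bookkeeping in the SOCP Palm decomposition: correctly identifying which sub-trees of the cluster hierarchy remain "untouched" (and so contribute $G_{\Phi_{\textsf S}}(v)$ by Slivnyak on the top-level PPP), versus which sub-trees are replaced by conditional "Palm-sized" cluster fragments carrying the offsets $Y,Z$ of the tagged point. Once the two offset integrals with densities $f_{\textsf S^\prime}$ and $f_{\textsf S}$ are correctly placed, the compositional form of the three expressions is forced by the cluster structure and the Poisson-minus-a-point property.
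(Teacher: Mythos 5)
The paper offers no proof of this lemma: it presents the six expressions as known results, citing the PPP formula to the stochastic-geometry textbook, the MCP pair to Ganti and Haenggi's clustered ad hoc network paper, and the SOCP pair to Asif and Kwak. Your proposal therefore supplies a derivation where the paper supplies only citations, and the derivation you outline is the correct standard one: the Laplace-functional computation plus Slivnyak for the PPP; conditioning on the parent process and evaluating the compound-Poisson cluster contribution to get $M(\cdot)$ for the MCP; and iterating the same conditioning twice for the SOCP. Two points you handle that are easy to get wrong and that you get right: (i) in the reduced Palm decomposition, the cluster containing the tagged point is size-biased, and it is precisely the Poisson distribution of the per-cluster count that makes the size-biased count minus one again Poisson with the same mean, so the sibling factor reuses the same $M$ (or $M_2$) rather than a modified generating function; and (ii) the three-way factorization of the SOCP reduced Palm into untouched top-level clusters (contributing $G_{\Phi_{\textsf S}}(v)$ by Slivnyak on $\Phi_{\textsf S^o}$), the other first-order siblings each spawning a full second-order cluster (the $M_1$ factor), and the second-order siblings of the tagged point (the final integral). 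One cosmetic remark: as printed in the lemma, the variable $X$ appears free in the sibling factors of \eqref{eq:Conditional PGF_MCP} and \eqref{eq:Coditional PGF_SOCP}; in a careful writeup the offsets should be expressed relative to the tagged point at the origin (using the radial symmetry of $f_{\textsf M}$, $f_{\textsf S^\prime}$, $f_{\textsf S}$ to flip signs), which your decomposition does implicitly and correctly. Your proof is sound and would make the lemma self-contained.
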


Last, the joint success probabilities are obtained as the expectations of the conditional probabilities in Lemma~\ref{lm:ConJointSuccProb} with respect to the distribution of the  propagation distance of the typical MU/SU. Since the user is uniformly distributed in the coverage area assumed as a disk with radius $D$, the PDF of the distance is given in (\ref{eq:pdf_r}), where $D=D_{m}$ if the typical user is a MU or  otherwise $D=D_{s}$.  Combining \eqref{eq:pdf_r} and Lemma~\ref{lm:ConJointSuccProb} leads to the following main result.

\begin{theorem}\label{Thm:JSP}
\emph{
For a HCN allowing retransmissions over $n$ slots, the joint success probabilities  for the MU and SU, denoted as $\mathcal{P}_{m}^{\left(n\right)}$
and $\mathcal{P}_{s}^{\left(n\right)}$, respectively, are given as:
\begin{align}
\mathcal{P}_{m}^{(n)} = \frac{2}{D_{m}^{2}}\int_{0}^{D_{m}} \mathcal{P}_{m}^{(n)}(r)r\mathrm{d}r, \\
\mathcal{P}_{s}^{(n)} = \frac{2}{D_{s}^{2}}\int_{0}^{D_{s}} \mathcal{P}_{s}^{(n)}(r)r\mathrm{d}r,
\end{align}
where $\mathcal{P}_{m}^{(n)}(r)$ and $\mathcal{P}_{s}^{(n)}(r)$ are provided in Lemma \ref{lm:ConJointSuccProb}.
}
\end{theorem}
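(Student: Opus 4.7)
The plan is to obtain Theorem \ref{Thm:JSP} as a direct de-conditioning of Lemma \ref{lm:ConJointSuccProb}. The lemma provides the joint success probability conditioned on a fixed propagation distance $r$ between the typical user and its serving BS, while the theorem integrates out the randomness in this distance. Since, under the approximate association-region model, the serving distance is independent of the interferer field, the tower property of conditional expectation reduces the task to computing $\mathbb{E}_r[\mathcal{P}_m^{(n)}(r)]$ and $\mathbb{E}_r[\mathcal{P}_s^{(n)}(r)]$ with respect to the distribution of $r$.

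First, I would recall that the typical MU (respectively SU) is assumed to be uniformly distributed in a circular association region of radius $D_m$ (respectively $D_s$) centered at its serving BS. A standard change of variables to polar coordinates shows that the radial distance $r$ between the user and its serving BS therefore has the PDF given in \eqref{eq:pdf_r}, namely $f(r) = 2r/D^2$ for $0 \leq r \leq D$ and zero otherwise, with $D = D_m$ for MUs and $D = D_s$ for SUs. Second, I would invoke the law of total probability to write
\[
\mathcal{P}_m^{(n)} \;=\; \int_0^{D_m} \mathcal{P}_m^{(n)}(r)\, f(r)\,\mathrm{d}r \;=\; \frac{2}{D_m^2}\int_0^{D_m} \mathcal{P}_m^{(n)}(r)\, r\,\mathrm{d}r,
\]
and the analogous identity with $D_m$ replaced by $D_s$ for $\mathcal{P}_s^{(n)}$. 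Plugging in the conditional joint success probabilities from \eqref{eq:JSP_MU_original}--\eqref{eq:JSP_FU_original} yields the stated formulas.

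There is no substantive technical obstacle in this step — the statement is, at its core, a reformulation of Lemma \ref{lm:ConJointSuccProb} combined with the distance PDF \eqref{eq:pdf_r}. All of the conceptual work has already been done in the earlier results: Lemma \ref{lm:ConJointSuccProb} carries the dependence on the interferer fields $\Phi_m$ and $\Phi_s$ through their (conditional) PGFs, while the disk approximation for the association region, justified in Appendix \ref{RadiusOfAssociationArea}, produces the tractable PDF being used here. The only delicate point to check is that the serving distance $r$ and the locations of the interferers are treated as (approximately) independent, so that the de-conditioning integral separates cleanly; this independence is built into the modeling choices described in Section \ref{Section:System} and so requires no additional argument.
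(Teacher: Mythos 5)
Your proposal is correct and matches the paper's own (implicit) argument: the theorem is obtained exactly by averaging the conditional probabilities of Lemma \ref{lm:ConJointSuccProb} against the serving-distance PDF \eqref{eq:pdf_r} via the law of total probability, which is what the text preceding the theorem does. Your additional remark about the (approximate) independence of the serving distance and the interferer field correctly identifies the only modeling assumption needed for the de-conditioning to separate.
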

Then the specific expressions for the joint success probability corresponding to the MCP and SOCP models can be derived by substituting the results in Lemma~\ref{lm:PGF} into those in Theorem~\ref{Thm:JSP}.  The results have complex  expressions with multiple integrals. This reflects  the theoretical challenge in characterizing the effects of SBS clustering in practice on the HCN performance. Nevertheless, the results obtained in this section can be leveraged in the next section to yield simple insight.

\begin{remark}
\emph{
The joint success probability in the MCP (or the SOCP)  model  is a monotone-decreasing function of $c_{\textsf M}$ (or $c_{\textsf S}$). The reason is that  increasing the mean number of points per cluster, $c_{\textsf M}$ (or $c_{\textsf S}$),  increases the interference power from the SBSs to MUs but does not change the signal strength. Note that the propagation distance of a data link depends on  the coverage radiuses  of MBSs which are independent of $c_{\textsf M}$ (or $c_{\textsf S}$) (see the system model).
}
\end{remark}

Joint success probability provides a  basic component   for further calculating  different practical network performance metrics, such as delay-limited throughput (see \cite[Eq. (4)]{HARQ-CorrInterf-adhoc}) and local delay (see \cite[Eq. (26)]{IntefCorrDiversityPolynomials})). Specifically, the metrics are linear functions of joint success probability and the calculation procedure is straightforward and omitted for brevity.

\subsection{Bounds on  Joint Success Probabilities}
 Although the expressions for the joint success probabilities are derived  in the preceding sub-section, the results have complex expressions.  In this sub-section, the  probabilities are bounded by their PPP counterparts. The results yield useful insight into the effects on SBS clustering on the network performance.

The method of bounding the joint success probabilities relies on bounds on the PGFs for the MCP and SOCP.  Throughout this section, the PGFs are considered as functions of the density, $\lambda$,  of the corresponding point process $\Phi$ while the original argument $\nu$  is identical for different point processes (see the PGF definitions in the preceding sub-section). Then the PGF and conditional PGF for the MCP can be bounded by their PPP counterparts  as shown  in \cite{PoissonClusterProcess_TIT}:
\begin{align}
G_{\Phi_{\textsf P}}\left(\lambda_{\textsf M}\right)&\leq G_{\Phi_{\textsf M}}\left(\lambda_{\textsf M}\right)\leq G_{\Phi_{\textsf P}}\left(\frac{\lambda_{\textsf M}}{1+c_{\textsf M}}\right),\label{eq:Bounds_MCP}\\
G_{\Phi_{\textsf P}^{!}}\left(\lambda_{\textsf M}+\frac{c_{\textsf M}}{\pi R_{\textsf M}^{2}}\right)&\leq G_{\Phi_{\textsf M}^{!}}\left(\lambda_{\textsf M}\right)\leq G_{\Phi_{\textsf P}^{!} }\left(\frac{\lambda_{\textsf M}}{1+c_{\textsf M}}\right),\label{eq:Bounds_Conditional_MCP}
\end{align}
where  the PGF and conditional PGF of the PPP $\Phi_{\textsf P}$ with density $\lambda$ are identical and given as
\begin{equation}
G_{\Phi_{\textsf P}}(\lambda)=G_{\Phi_{\textsf P}}(\lambda)= \exp\left(-\lambda\int_{\mathbb{R}^{2}}\left(1-v(x)\right)\mathrm{d}x\right).
\end{equation}
These results for the MCP are extended to the SOCP as shown in the following lemma.

\begin{lemma}\label{lm:bounds_PGF_SOCP}\emph{The PGF and the conditional PGF for the SOCP $\Phi_{\textsf S}$ can be bounded as
\begin{align}
G_{\Phi_{\textsf P}}\left(\lambda_{\textsf S}\right) &\leq G_{\Phi_{\textsf S}}(\lambda_{\textsf S})\leq G_{\Phi_{\textsf P}}\left(\frac{\lambda_{\textsf S}}{\left(1+c_{\textsf S^\prime}\right)\left(1+c_{\textsf S}\right)}\right),\label{eq:Bounds_SOCP}\\
G_{\Phi_{\textsf P}^{!}}\left(\lambda_{\textsf S}+c_{\textsf S^\prime}c_{\textsf S}\gamma+\frac{c_{\textsf S}}{\pi R_{\textsf S}^{2}}\right)&\leq G_{\Phi_{\textsf S}^{!}}(\lambda_{\textsf S})\leq G_{\Phi_{\textsf P}^{!}}\left(\frac{\lambda_{\textsf S}}{\left(1+c_{\textsf S^\prime}\right)\left(1+c_{\textsf S}\right)}\right) \label{eq:Bounds_Conditional_SOCP}
\end{align}
where the constant $\gamma$ is defined as
\begin{equation}\nonumber
\gamma=\min\left\{ \frac{1-\exp\left(\frac{-R_{\text S^\prime}^{2}}{2\sigma^{2}}\right)}{\pi R_{\text S^\prime}^{2}+2\pi\sigma^{2}\left(\exp\left(\frac{-R_{\text S^\prime}^{2}}{2\sigma^{2}}\right)-1\right)},\frac{1}{\pi R_{\text S}^{2}}\right\},
\end{equation}
}
\end{lemma}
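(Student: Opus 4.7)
The plan is to mirror the strategy used for MCP in \eqref{eq:Bounds_MCP}--\eqref{eq:Bounds_Conditional_MCP}, iterating the same elementary inequalities through the two nested cluster layers of the SOCP. Starting with \eqref{eq:Bounds_SOCP}, I would write \eqref{eq:PGF_SOCP} compactly as $G_{\Phi_{\textsf S}}(v)=\exp[-\lambda_{\textsf S^o}\int_{\mathbb{R}^{2}}(1-q_1(X))\,\mathrm{d}X]$ with $q_1(X)=M_1(\int M_2(\int v(X+Y+Z)f_{\textsf S}(Z)\,\mathrm{d}Z)f_{\textsf S^\prime}(Y)\,\mathrm{d}Y)$. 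The lower bound $G_{\Phi_{\textsf P}}(\lambda_{\textsf S})\leq G_{\Phi_{\textsf S}}(\lambda_{\textsf S})$ follows by applying $1-e^{-u}\leq u$ successively to the outer $M_1$ and then to the inner $M_2$; Fubini, combined with $\int f_{\textsf S^\prime}=\int f_{\textsf S}=1$, collapses the resulting triple integral to $\int(1-v)\,\mathrm{d}W$, producing $\lambda_{\textsf S^o}c_{\textsf S^\prime}c_{\textsf S}=\lambda_{\textsf S}$ in the exponent. The upper bound uses the sharper chain $1-e^{-c(1-x)}\geq c(1-x)/(1+c(1-x))\geq c(1-x)/(1+c)$, valid because $1-x\in[0,1]$, applied first to $M_1$ and then to $M_2$; the two applications together produce the factor $1/[(1+c_{\textsf S^\prime})(1+c_{\textsf S})]$.

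For \eqref{eq:Bounds_Conditional_SOCP}, I would factor \eqref{eq:Coditional PGF_SOCP} as $G_{\Phi_{\textsf S}^{!}}(v)=G_{\Phi_{\textsf S}}(v)\,T_1\,T_2$, with $T_1$ the $M_1$-factor and $T_2$ the $f_{\textsf S}$-integral factor. The upper bound is immediate: since $M_1,M_2\in(0,1]$ and $f_{\textsf S^\prime},f_{\textsf S}$ are PDFs, one has $T_1,T_2\leq 1$; combining this with the upper bound on $G_{\Phi_{\textsf S}}$ already established and invoking Slivnyak's theorem (so the PPP's conditional PGF equals its PGF) yields the claim.

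The lower bound on $G_{\Phi_{\textsf S}^{!}}$ is the main obstacle, as the extra effective-density contributions $c_{\textsf S^\prime}c_{\textsf S}\gamma$ and $c_{\textsf S}/(\pi R_{\textsf S}^{2})$ encode the representative first- and second-order clusters carrying the conditioning point and require controlling cluster-kernel convolutions. I would lower-bound $G_{\Phi_{\textsf S}},T_1,T_2$ separately as exponentials in $\int(1-v)\,\mathrm{d}W$. $G_{\Phi_{\textsf S}}$ uses the paragraph-one bound. For $T_2$, Jensen's inequality (via convexity of $M_2$) brings the $f_{\textsf S}$-integral inside, giving $T_2\geq\exp(-c_{\textsf S}\int\int(1-v(X+Y+Z))f_{\textsf S}(Z)f_{\textsf S}(Y)\,\mathrm{d}Z\,\mathrm{d}Y)$, after which upper-bounding the self-convolution $(f_{\textsf S}*f_{\textsf S})$ pointwise by its peak value $1/(\pi R_{\textsf S}^{2})$ produces the contribution $c_{\textsf S}/(\pi R_{\textsf S}^{2})$. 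For $T_1$, writing $T_1=\exp(-c_{\textsf S^\prime}\int(1-M_2(g))f_{\textsf S^\prime}\,\mathrm{d}Y)$ and linearizing the inner $1-M_2$ via $1-e^{-u}\leq u$ leaves a convolved kernel $(f_{\textsf S^\prime}*f_{\textsf S})$ to bound pointwise; the two candidates inside $\min\{\cdot,\cdot\}$ defining $\gamma$ are two distinct pointwise upper bounds on this convolution---one obtained from a direct computation of the Gaussian mass on a disc of radius $R_{\textsf S^\prime}$ appropriate for a Gaussian first-order kernel with variance $\sigma^{2}$, the other by transferring the peak bound from $f_{\textsf S}$ alone---so taking their minimum yields the tightest exponent $c_{\textsf S^\prime}c_{\textsf S}\gamma$. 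Multiplying the three exponential lower bounds reproduces the claimed effective density $\lambda_{\textsf S}+c_{\textsf S^\prime}c_{\textsf S}\gamma+c_{\textsf S}/(\pi R_{\textsf S}^{2})$.
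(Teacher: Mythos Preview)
Your proposal is correct and follows essentially the same route as the paper: the same elementary inequalities $1-e^{-u}\le u$ and $1-e^{-c(1-x)}\ge c(1-x)/(1+c)$ applied through the two cluster layers for \eqref{eq:Bounds_SOCP}, the same factorization $G_{\Phi_{\textsf S}^{!}}=G_{\Phi_{\textsf S}}\,T_1\,T_2$ with $T_1,T_2\le 1$ for the upper conditional bound, and Jensen plus a sup-norm bound on the kernel convolutions for the lower conditional bound. The only refinement worth noting is that the paper makes the convolution step precise via Young's inequality $\|f\star g\|_\infty\le\min\{\|f\|_\infty\|g\|_1,\|f\|_1\|g\|_\infty\}$, so the two entries of $\gamma$ are exactly $\|f_{\textsf S^\prime}\|_\infty$ (the peak of the \emph{reverse}-Gaussian density $f_{\textsf S^\prime}$, attained at $r=R_{\textsf S^\prime}$) and $\|f_{\textsf S}\|_\infty=1/(\pi R_{\textsf S}^2)$, rather than a Gaussian mass computation.
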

\begin{proof}
See Appendix \ref{pf:Bouds_socp}.
\end{proof}

Next, consider the baseline PPP model. Let $\mathcal{P}_{m \textsf P}^{(n)}(\lambda_s)$ and $\mathcal{P}_{s \textsf P}^{(n)}(\lambda_s)$ denote the joint success probabilities for the typical MU and SU, respectively, which are functions of the SBS density $\lambda_s$. Using Lemma~\ref{lm:ConJointSuccProb}, the probabilities conditioned on a propagation distance $r$ for the corresponding typical users can be derived as shown in the following lemma.
\begin{lemma}\label{lm:ConJointSuccProb:PPP}
\emph{Consider a HCN allowing retransmissions over $n$ slots and having SBSs distributed as PPP. Given the propagation  distance $r$ between a typical user and the serving BS, the conditional joint success probabilities for the MU and SU are given as:
\begin{align}
\mathcal{P}_{m\textsf P}^{\left(n\right)}(\lambda_s, r) &=
\exp\left[-\lambda_{m}Q_{n}(\beta_{m}) r^{2}\right] \exp\left[- \lambda_s \left(\frac{\beta_{m}P_{s}}{P_{m}}\right)^{\delta}U_{n} r^{2}\right], \\
\mathcal{P}_{s \textsf P}^{\left(n\right)}(\lambda_s, r) &= \exp\left[- \lambda_m \left(\frac{\beta_{s}P_{m}}{P_{s}}\right)^{\delta}U_{n} r^{2}\right] \exp\left[-\lambda_s Q_{n}(\beta_{s}) r^{2}\right],
\end{align}
where the function $Q_n(\beta)$ of a SIR threshold $\beta$ is given as
\begin{equation}
Q_{n}(\beta)=\pi\delta\sum_{m=1}^{n}  \binom {n} {m}
\frac{\left(-1\right)^{m+1}\beta^{m}}{m-\delta}{}_{2}F_{1}(m,m-\delta;m-\delta+1;-\beta), \label{eq:Qn}
\end{equation}
with $ \delta=2/\alpha$ and the constant
\begin{equation}
U_{n}=\frac{\pi^{2}\delta}{\sin\left(\pi\delta\right)} \frac{\Gamma(n+\delta)}{\Gamma(n) \Gamma(1+\delta)}. \label{eq:Un}
\end{equation}
}
\end{lemma}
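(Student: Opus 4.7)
The plan is to specialize Lemma~\ref{lm:ConJointSuccProb} to the case $\Phi_s=\Phi_{\textsf P}$, so that both the MBS process $\Phi_m$ and the SBS process $\Phi_s$ are independent PPPs, and then evaluate the two PGFs appearing in \eqref{eq:JSP_MU_original}--\eqref{eq:JSP_FU_original} in closed form. Substituting the PPP (conditional) PGF from \eqref{eq:PGF_PPP} (which coincides with the reduced PGF by Slivnyak's theorem) reduces $\mathcal{P}_{m\textsf P}^{(n)}(\lambda_s,r_m)$ to $\exp(-\lambda_m I_m)\exp(-\lambda_s I_s)$, where
\begin{align*}
I_m&=\int_{\mathbb{R}^2}\Bigl(1-\bigl(1+\beta_m r_m^{\alpha}|X|^{-\alpha}\mathds{1}(|X|>r_m)\bigr)^{-n}\Bigr)\mathrm{d}X,\\
I_s&=\int_{\mathbb{R}^2}\Bigl(1-\bigl(1+\beta_m(P_s/P_m)r_m^{\alpha}|X|^{-\alpha}\bigr)^{-n}\Bigr)\mathrm{d}X.
\end{align*}
It remains to prove $I_m=Q_n(\beta_m)r_m^{2}$ and $I_s=(\beta_m P_s/P_m)^{\delta}U_n r_m^{2}$; the SU result then follows verbatim after swapping the roles of $\Phi_m$ and $\Phi_s$ (the truncated integral produces $\lambda_s Q_n(\beta_s)r_s^{2}$, the untruncated one $\lambda_m(\beta_s P_m/P_s)^{\delta}U_n r_s^{2}$).

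For $I_s$, pass to polar coordinates and scale $u=t/r_m$ to factor out $r_m^{2}$, leaving $2\pi\int_0^{\infty}(1-(1+\beta u^{-\alpha})^{-n})u\,\mathrm{d}u$ with $\beta=\beta_m P_s/P_m$. The change of variable $s=\beta u^{-\alpha}$ converts this to $\pi\delta\beta^{\delta}\int_0^{\infty}(1-(1+s)^{-n})s^{-\delta-1}\mathrm{d}s$. One integration by parts against the antiderivative $-s^{-\delta}/\delta$ (both boundary contributions vanish for $0<\delta<1$) leaves $(n/\delta)\int_0^{\infty}s^{-\delta}(1+s)^{-n-1}\mathrm{d}s=(n/\delta)B(1-\delta,n+\delta)$, and applying Euler's reflection identity $\Gamma(\delta)\Gamma(1-\delta)=\pi/\sin(\pi\delta)$ together with $\Gamma(1+\delta)=\delta\Gamma(\delta)$ collapses the constant to exactly $U_n$ as in \eqref{eq:Un}.

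For $I_m$, the indicator kills the integrand on $|X|\le r_m$, so after the same polar/scaling step $I_m=2\pi r_m^{2}\int_1^{\infty}(1-(1+\beta_m u^{-\alpha})^{-n})u\,\mathrm{d}u$. The key identity is the finite expansion
\begin{equation*}
1-(1+x)^{-n}=\sum_{m=1}^{n}\binom{n}{m}(-1)^{m+1}\Bigl(\tfrac{x}{1+x}\Bigr)^{m},
\end{equation*}
obtained by setting $y=x/(1+x)$ in the ordinary binomial formula for $(1-y)^{n}$. Substituting $x=\beta_m u^{-\alpha}$, then $v=u^{\alpha}$, and finally $\tau=\beta_m/v$ reduces each summand to $\int_0^{1}\tau^{m-\delta-1}(1+\beta_m\tau)^{-m}\mathrm{d}\tau$, which is Euler's integral representation of $\tfrac{1}{m-\delta}\,{}_2F_1(m,m-\delta;m-\delta+1;-\beta_m)$. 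Summing over $m$ and restoring the $\pi\delta r_m^{2}$ prefactor reproduces $Q_n(\beta_m)r_m^{2}$ as in \eqref{eq:Qn}.

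The main obstacle is the evaluation of $I_m$: a direct Newton-series expansion of $(1+\beta_m u^{-\alpha})^{-n}$ fails to converge uniformly on $[1,\infty)$, so the naive interchange of sum and integral is illegitimate. The $y=x/(1+x)$ substitution is precisely what makes the expansion \emph{finite} (length $n$), legitimizes the interchange, and is what forces the appearance of a Gauss ${}_2F_1$ rather than an incomplete-Beta expression. The $U_n$ calculation, by contrast, is a routine Mellin/Beta manipulation once the standard PPP-interference integral has been isolated.
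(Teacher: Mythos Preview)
Your proof is correct and follows essentially the same route as the paper: specialize Lemma~\ref{lm:ConJointSuccProb} to the PPP case, apply the PPP PGF \eqref{eq:PGF_PPP} (with Slivnyak for the reduced PGF), pass to polar coordinates, and evaluate the two resulting radial integrals. The only difference is that the paper's step~(c) simply cites \cite{InterfCorreglobecom} and \cite{IntefCorrDiversityPolynomials} for the closed forms $Q_n$ and $U_n$, whereas you supply the explicit Beta/\,${}_2F_1$ computations yourself; the substitutions and identities you use are standard and reproduce \eqref{eq:Qn}--\eqref{eq:Un} exactly.
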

By taking expectation  with respect to the distance distribution in \eqref{eq:pdf_r},
the joint success probabilities for the PPP model follow from Lemma~\ref{lm:ConJointSuccProb:PPP} as shown below.

\begin{lemma}\label{Lem:JointSuccProb:PPP} \emph{Consider a HCN allowing retransmissions over $n$ slots and having SBSs distributed as a PPP. The conditional joint success probabilities for the MUs and SUs are given as:
\begin{align}
\mathcal{P}_{m \textsf P}^{(n)}(\lambda_s) & = D_{m}^{-2}\!\left[\lambda_{m}Q_{n}(\beta_{m})\!+\!\lambda_s \!\left(\!\frac{\beta_{m}P_{s}}{P_{m}}\!\right)^{\delta}\!\!U_{n}\!\right]^{-1}, \\
\mathcal{P}_{s \textsf P}^{(n)}(\lambda_s) &= D_{s}^{-2}\!\left[\lambda_{m}\!\left(\!\frac{\beta_{s}P_{m}}{P_{s}}\!\right)^{\delta}\!\!U_{n}+ \lambda_s Q_{n}(\beta_{s})\!\right]^{-1}.
\end{align}
}
\end{lemma}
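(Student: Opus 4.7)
The plan is to start from Theorem~\ref{Thm:JSP}, which expresses each joint success probability as the average of the conditional probability in Lemma~\ref{lm:ConJointSuccProb:PPP} against the uniform-in-disk distance density $f(r)=2r/D^{2}$ on $[0,D]$. So the entire proof reduces to two one-dimensional integrals in $r$, one for the MU and one for the SU.

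First, I would note that both conditional joint success probabilities in Lemma~\ref{lm:ConJointSuccProb:PPP} have the same functional form $\exp(-a r^{2})$, where for the MU
\begin{equation}
a_{m} \;=\; \lambda_{m}Q_{n}(\beta_{m})+\lambda_{s}\Bigl(\tfrac{\beta_{m}P_{s}}{P_{m}}\Bigr)^{\delta}U_{n},
\end{equation}
and analogously $a_{s}=\lambda_{m}(\beta_{s}P_{m}/P_{s})^{\delta}U_{n}+\lambda_{s}Q_{n}(\beta_{s})$ for the SU. Thus the integrals to evaluate are
\begin{equation}
\mathcal{P}_{m\textsf P}^{(n)}(\lambda_{s})=\frac{2}{D_{m}^{2}}\int_{0}^{D_{m}} r\,e^{-a_{m}r^{2}}\,\mathrm{d}r,\qquad \mathcal{P}_{s\textsf P}^{(n)}(\lambda_{s})=\frac{2}{D_{s}^{2}}\int_{0}^{D_{s}} r\,e^{-a_{s}r^{2}}\,\mathrm{d}r.
\end{equation}

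Second, I would evaluate these via the substitution $u=r^{2}$, giving the elementary identity $\int_{0}^{D}2r\,e^{-ar^{2}}\,\mathrm{d}r = (1-e^{-aD^{2}})/a$. Hence
\begin{equation}
\mathcal{P}_{m\textsf P}^{(n)}(\lambda_{s})=\frac{1-e^{-a_{m}D_{m}^{2}}}{a_{m}D_{m}^{2}},
\end{equation}
and similarly for the SU. Substituting $a_{m},a_{s}$ back recovers the stated expressions up to the factor $1-e^{-aD^{2}}$; the clean form in the lemma corresponds to using the approximation $1-e^{-aD^{2}}\approx 1$, which is accurate in the interference-limited regime of interest, where the product of the effective interferer density and $D^{2}$ is large. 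I would either invoke this approximation explicitly, or retain the exact $(1-e^{-aD^{2}})/(aD^{2})$ form and remark that the lemma displays its dominant behaviour.

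There is essentially no obstacle in this proof: the PGFs have already done the heavy lifting in Lemma~\ref{lm:ConJointSuccProb:PPP} by reducing the spatial randomness to a single exponential in $r^{2}$, and the only remaining step is a one-line Gaussian-type integral. The only point that warrants care is the approximation $1-e^{-aD^{2}}\approx 1$ needed to recover the exact displayed form; I would justify it by the order of magnitude of $\lambda_{m}Q_{n}D_{m}^{2}$ and $\lambda_{s}U_{n}D_{m}^{2}$ under typical HCN parameters so that the result is presented as a tight closed-form approximation rather than an identity.
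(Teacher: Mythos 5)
Your proof takes exactly the paper's route: average the conditional probabilities of Lemma~\ref{lm:ConJointSuccProb:PPP} against the density $f(r)=2r/D^{2}$ on $[0,D]$ and evaluate the resulting one-dimensional integral. Your exact evaluation $\bigl(1-e^{-aD^{2}}\bigr)/(aD^{2})$ is correct, and you are right that the lemma's displayed form silently drops the factor $1-e^{-aD^{2}}$ (equivalently, extends the integration to infinity); the paper does not acknowledge this step, so your explicit treatment of it as an approximation valid for $aD^{2}$ large is actually more careful than the paper's own argument.
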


\begin{proof}
See Appendix \ref{Pf:JSP_PPP}.
\end{proof}

Last, using Lemmas \ref{lm:bounds_PGF_SOCP} to \ref{Lem:JointSuccProb:PPP}, the main results of this sub-section are derived and presented in the following theorem.

\begin{theorem}\label{Thm:Bounds_JSP}  \emph{Consider a HCN allowing retransmissions over $n$ slots. The joint success probabilities for the MPC and SOCP models can be bounded by their counterparts for the PPP model as follows:
\begin{itemize}
\item For the MCP model,
\begin{align}
\mathcal{P}_{m \textsf P}^{(n)}(\lambda_{\textsf M}) &\leq \mathcal{P}_{m \textsf M}^{(n)}
\leq \mathcal{P}_{m \textsf P}^{(n)}\l(\frac{\lambda_{\textsf M}}{1+c_{\textsf M}}\r). \\
\mathcal{P}_{s \textsf P}^{(n)}(\lambda_{\textsf M}+\frac{c_{\textsf M}}{\pi R_{\textsf M}^2})
&\leq \mathcal{P}_{s \textsf M}^{(n)}
\leq \mathcal{P}_{s \textsf P}^{(n)}\l(\frac{\lambda_{\textsf M}}{1+c_{\textsf M}}\r).
\end{align}
\item For the SOCP model,
\begin{align}
\mathcal{P}_{m \textsf P}^{(n)}(\lambda_{\textsf S})
&\leq \mathcal{P}_{m \textsf M}^{(n)}(\lambda_{\textsf S})
\leq \mathcal{P}_{m \textsf S}^{(n)}
\leq \mathcal{P}_{m \textsf P}^{(n)}(\frac{\lambda_{\textsf S}}{(1+c_{\textsf S^\prime})(1+c_{\textsf S})}), \label{eq:Compare_JSP_MU}\\
\mathcal{P}_{s \textsf P}^{(n)}(\lambda_{\textsf S}+ c_{\textsf S^\prime} c_{\textsf S} \gamma+ \frac{c_{\textsf S}}{\pi R_{\textsf S}^2})
&\leq \mathcal{P}_{s \textsf S}^{(n)}
\leq \mathcal{P}_{s \textsf P}^{(n)}(\frac{\lambda_{\textsf S}}{(1+c_{\textsf S^\prime})(1+c_{\textsf S})}),
\end{align}
\end{itemize}
where $\mathcal{P}_{m \textsf P}^{(n)}(\lambda)$ and $\mathcal{P}_{s \textsf P}^{(n)}(\lambda)$ for the PPP model are given in Lemma~\ref{Lem:JointSuccProb:PPP}. }
\end{theorem}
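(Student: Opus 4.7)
The plan is to reduce the claimed bounds to the PGF inequalities already stated in (30)--(31) and Lemma \ref{lm:bounds_PGF_SOCP}. By Lemma \ref{lm:ConJointSuccProb}, the conditional joint success probability factorizes as a product of two PGF factors: one for the MBS tier (always a PPP, hence its PGF/reduced-Palm PGF coincide and are common to all three models) and one for the SBS tier (either the MCP $\Phi_{\textsf M}$ or the SOCP $\Phi_{\textsf S}$). Let $v_m(X)=\bigl(1+\tfrac{\beta_m P_s|X|^{-\alpha}}{P_m r^{-\alpha}}\bigr)^{-n}$ and $v_s(X)=\bigl(1+\tfrac{\beta_s\widetilde g(X,r)}{r^{-\alpha}}\bigr)^{-n}$ be the arguments that appear in \eqref{eq:JSP_MU_original} and \eqref{eq:JSP_FU_original}; both satisfy $0\le v_{\{m,s\}}\le 1$, which is the regime in which the PGF bounds are valid.

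For the MCP model I would apply \eqref{eq:Bounds_MCP} to the SBS factor of $\mathcal P_{m\textsf M}^{(n)}(r)$ and \eqref{eq:Bounds_Conditional_MCP} to the SBS factor of $\mathcal P_{s\textsf M}^{(n)}(r)$. After substitution each SBS factor collapses to the PPP form $\exp\bigl(-\lambda_{\mathrm{eff}}\int(1-v)\,\mathrm dX\bigr)$ with effective density $\lambda_{\textsf M}$, $\lambda_{\textsf M}/(1+c_{\textsf M})$, or $\lambda_{\textsf M}+c_{\textsf M}/(\pi R_{\textsf M}^2)$, which is precisely the quantity evaluated in Lemmas \ref{lm:ConJointSuccProb:PPP}--\ref{Lem:JointSuccProb:PPP}. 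The MBS factor is untouched and matches the MBS factor of $\mathcal P_{\{m,s\}\textsf P}^{(n)}$. Integrating against $f(r)=2r/D^2$ as in Theorem \ref{Thm:JSP} then lifts the conditional bounds to unconditional ones; the inequality direction is preserved because the PPP joint success probability in Lemma \ref{Lem:JointSuccProb:PPP} is monotone decreasing in the SBS density, and integration preserves pointwise inequalities.

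The SOCP bounds are proved the same way by substituting \eqref{eq:Bounds_SOCP} and \eqref{eq:Bounds_Conditional_SOCP} into \eqref{eq:JSP_MU_original} and \eqref{eq:JSP_FU_original}, giving the outermost inequalities directly. The only new ingredient is the middle inequality $\mathcal P_{m\textsf M}^{(n)}(\lambda_{\textsf S})\le\mathcal P_{m\textsf S}^{(n)}$ in \eqref{eq:Compare_JSP_MU}. I would establish this by observing that the SOCP is, structurally, an MCP of MCPs, so that an additional Jensen step applied to the outermost Poisson integral in \eqref{eq:PGF_SOCP} yields $G_{\Phi_{\textsf M}}(\lambda)[v]\le G_{\Phi_{\textsf S}}(\lambda)[v]$ at matched density whenever $v\in[0,1]$. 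Plugging this chain into the MU conditional expression and integrating over $r$ delivers the intermediate bound.

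The main obstacle is bookkeeping rather than inspiration: I need to be careful that the ``density'' argument of each PGF in \eqref{eq:Bounds_MCP}--\eqref{eq:Bounds_Conditional_SOCP} lines up with the implicit intensity in Lemma \ref{lm:ConJointSuccProb}, so I would spell out once the identification $G_{\Phi_{\textsf P}}(\lambda)[v]=\exp\bigl(-\lambda\int_{\mathbb R^2}(1-v(X))\,\mathrm dX\bigr)$ and verify term by term that each substitution lands in the right factor of $\mathcal P_{\{m,s\}\textsf P}^{(n)}$. A secondary subtlety, which I would flag but not reprove, is that in the SOCP model the MBSs coincide with the parent process of $\Phi_{\textsf S}$, so $\Phi_m$ and $\Phi_s$ are not strictly independent; since Lemma \ref{lm:ConJointSuccProb} treats the two factors as a product under the same working approximation on both sides of the inequality, the bounds remain internally consistent within that model.
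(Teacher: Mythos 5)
Your proposal is correct and follows essentially the same route as the paper's proof: substitute the PGF/reduced-Palm-PGF bounds \eqref{eq:Bounds_MCP}--\eqref{eq:Bounds_Conditional_SOCP} into the factorized conditional joint success probabilities of Lemma~\ref{lm:ConJointSuccProb}, identify each bounding factor with the PPP-model conditional probability at the corresponding effective density, and then integrate against \eqref{eq:pdf_r}; the middle inequality in \eqref{eq:Compare_JSP_MU} likewise comes from $G_{\Phi_{\textsf M}}(\lambda_{\textsf S})\leq G_{\Phi_{\textsf S}}(\lambda_{\textsf S})$, which the paper establishes inside the proof of Lemma~\ref{lm:bounds_PGF_SOCP} via the same $1-e^{-\theta x}\leq\theta x$ estimate you describe as a Jensen step. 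Your added remarks on density monotonicity and on the non-independence of $\Phi_m$ and $\Phi_{\textsf S}$ are harmless refinements of the same argument rather than a different approach.
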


\begin{proof}
See Appendix \ref{Pf:Bounds_JSP}.
\end{proof}

Theorem~\ref{Thm:Bounds_JSP} shows that the joint success probability for the typical MU is increasing in the order of the PPP, MCP and SCOP models. This suggests that increasing the level of BS inter-dependence improves the MU's performance.

\begin{remark}
\emph{
Theorem \ref{Thm:Bounds_JSP} mathematically shows that the joint success probability in the MCP model and SOCP model converge to that in the PPP model when the mean number points in each cluster, i.e., $c_{\textsf M}$, $c_{\textsf S^\prime}$, $c_{\textsf S}$, approximates to 0. This is because, the upper bound and lower bound converge to the joint success probability in the PPP model under the above condition.
}
\end{remark}

It is inferred that, comparing with the case of independent interference, the interference correlation degrades the performance of HCNs with retransmission, and the degradation increases as the attraction between BSs increases. This is because, interference correlation reduces the diversity gain in retransmission (see \cite{IntefCorrDiversityPolynomials,InterfCorreDiversityLoss,CorrelationMobileRandomNet,LocalDelay_MAC}) and the interference correlation increases as the attraction between BSs increases (see Proposition \ref{Prop:CorCoef_c_R}). Therefore, correlation-aware retransmission scheme is needed to improve the network performance. Based on the observations obtained in this paper, we propose a correlation-aware retransmission scheme as follows.

\begin{remark}[Correlation-aware Retransmission Scheme] \label{Rm:CorrRetrans} \emph{For each cluster, if most nodes transmit successfully (i.e., success probability is larger than a given threshold) in the current time slot, all nodes in the cluster will transmit in the next time slot to take advantage of the interference (success) correlation. In contrast, if most nodes fail (i.e., success probability is lower than a given threshold), only the successful nodes transmit in the next time slot and the unsuccessful nodes keep silence for a randomly chosen time slots to reduce the interference correlation and avoid the blind retransmission. In particular, if all the transmissions fail in the current time slot, the nodes will be randomly chosen to transmit or not in the next time slot.}
\end{remark}

The proposed correlation-aware retransmission scheme takes good advantage of interference correlation when the success probability is high. This is because high success probability and interference correlation means there is a high probability that the transmission will succeed in the next time slot. Furthermore, the proposed scheme effectively avoid the blind retransmission when the success probability is low. The simulation results in the sequel show that the proposed scheme significantly improve the success probability in HCNs.

\section{Numerical Results}

\subsection{Interference Correlation}

In this subsection, the interference-correlation coefficients are evaluated for the PPP, MCP and SOCP models to illustrate their relation and the effects of system parameters. For fair comparison, the parameters are set as follows: $\lambda_{\textsf P}=\lambda_{\textsf M}=\lambda_{\textsf S}$, $c_{\textsf M}=c_{\textsf S}$, and $R_{\textsf M}=R_{\textsf S}$. Under the above settings, the interference-correlation coefficients for the MCP model are the same with those for the SOCP model according to Theorem \ref{th:IntCorrCoe}. Thus, the results for the SOCP model are omitted.

\begin{figure}[t]
\centering
\subfigure[Correlation coefficients for different $c$, $\epsilon\!=\!0.01$, $R_{\textsf M}\!=\!5.$]{\includegraphics[width=8cm]{./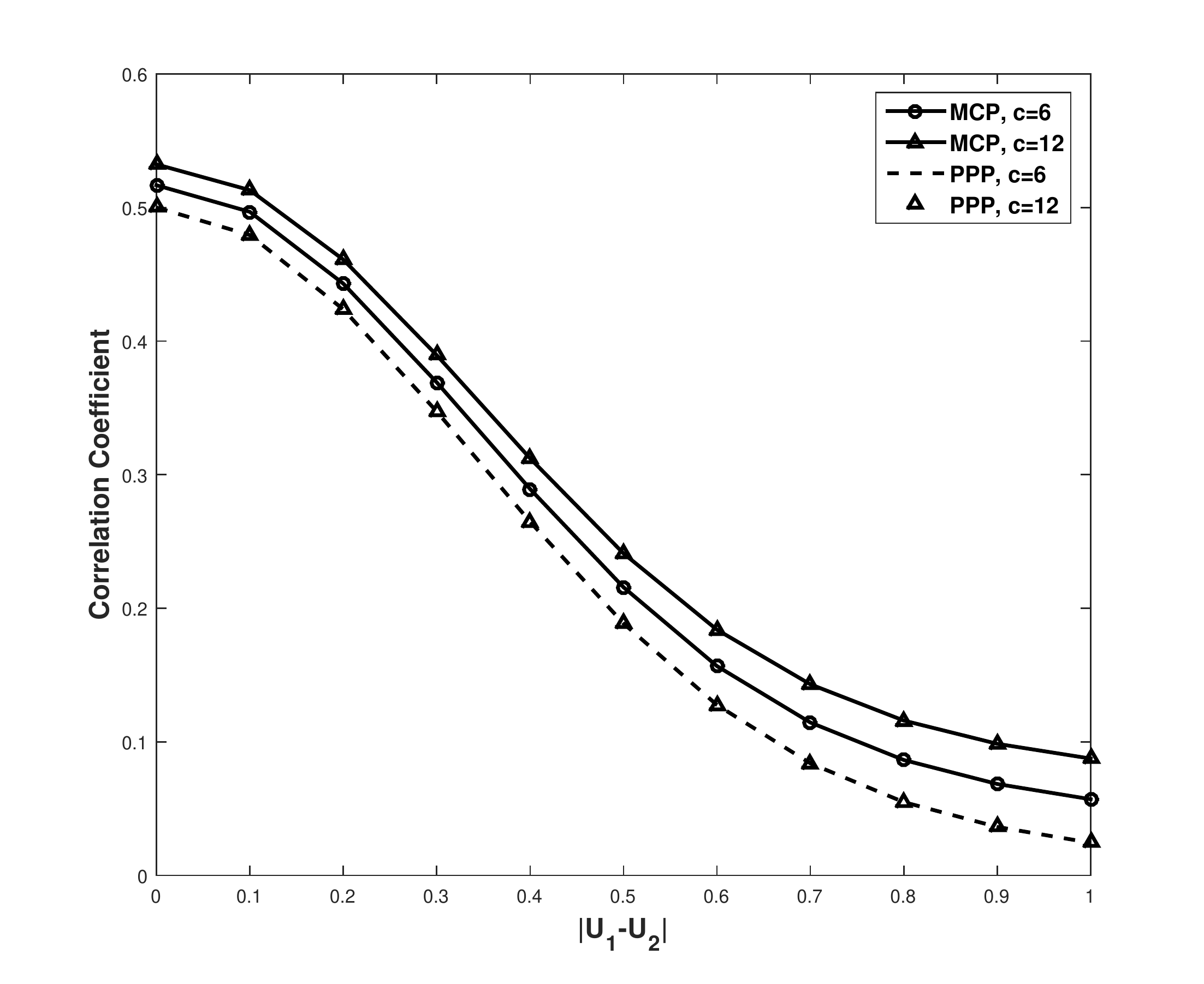}}
\subfigure[Correlation coefficients for different $R$, $\epsilon\!=\!0.001$, $c_{\textsf M}\!=\!3.$]{\includegraphics[width=7.5cm]{./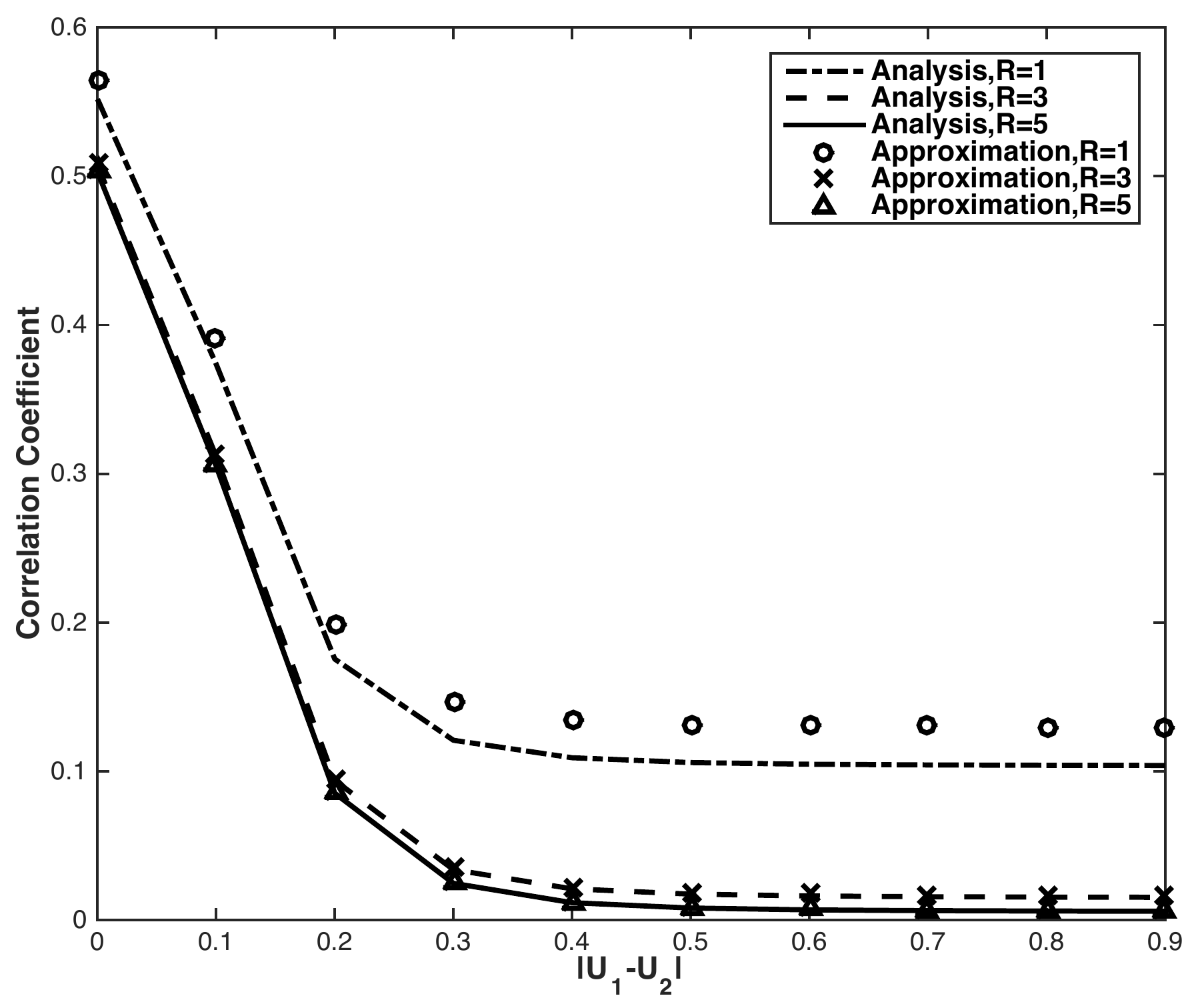}}
\caption{Interference-correlation coefficient versus the distance $|U_1-U_2|$. Here, $\alpha\!=\!4$, $P_s\!=\!43$ dBm, $\lambda_{\textsf M^o}\!=\!0.1$.}
\label{Fig:CorCoef_DifferCR}
\end{figure}

In Fig. \ref{Fig:CorCoef_DifferCR}, interference-correlation coefficients under different mean number of points in each cluster, $c$, and the cluster radius, $R$, are plotted in (a) and (b), respectively. The curves for the MCP model and the PPP model are computed numerically using Theorem \ref{th:IntCorrCoe} and Remark \ref{Rmk:CorCoef_PPP}, respectively. The approximation of interference correlation coefficient in Fig. \ref{Fig:CorCoef_DifferCR} (b) is calculated via substituting (\ref{Eq:Approx_F}) into (\ref{eq:CorrCoef_MCP}). According to system model, different $c$ indicates different interferer densities since $\lambda_{\textsf P}=\lambda_{\textsf M}=\lambda_{\textsf M^o} c_{\textsf M}$. First of all,  It is observed that the interference-correlation coefficients for the MCP model are greater than those for the PPP model. This suggests that BS clustering enlarges the interference correlation, which matches the conclusion in Proposition \ref{Prop:CorCoef_MCP_PPP}. Furthermore, it is found that the curves of correlation coefficients for the PPP model under different densities coincide with each other since $\zeta_{\textsf P}$ is independent of BS density according to (\ref{eq:CorrCoef_PPP}). In addition, also shown in the figures is that increasing $c$ or decreasing $R$ enlarges the interference correlations for the MCP model due to the increase in the attraction between the interfering BSs. Furthermore, Fig. \ref{Fig:CorCoef_DifferCR} (b) shows that the approximation of $F(c,R)$, i.e., Lemma \ref{lm:Approx_F}, evaluates the interference correlation coefficient well, even for the case of small $R$. 
\subsection{Joint Success Probability}

\begin{figure}[t]
\centering
\subfigure[Joint success probability for MUs]{\includegraphics[scale=0.35]{./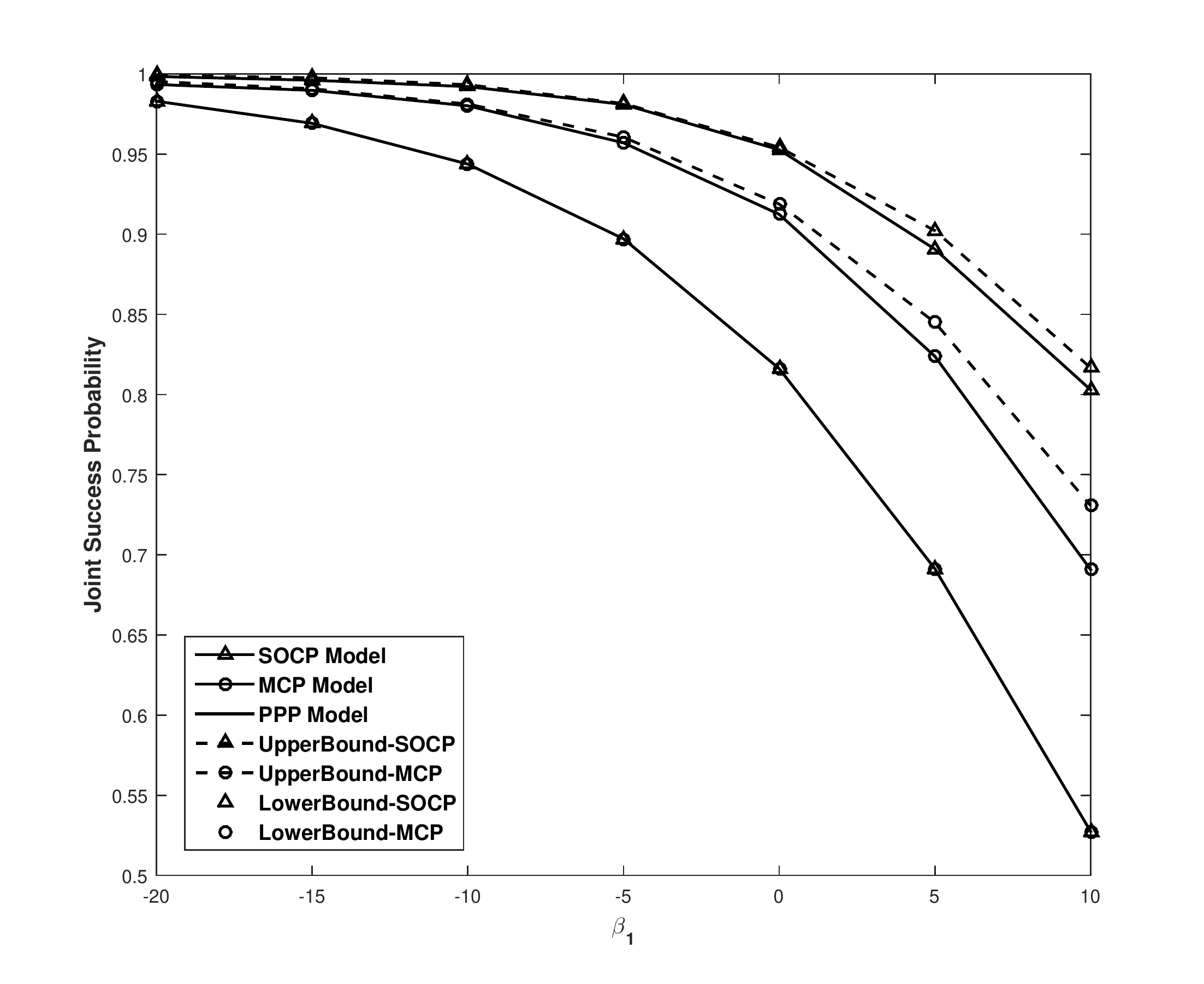}}
\subfigure[Joint success probability for SUs]{\includegraphics[scale=0.35]{./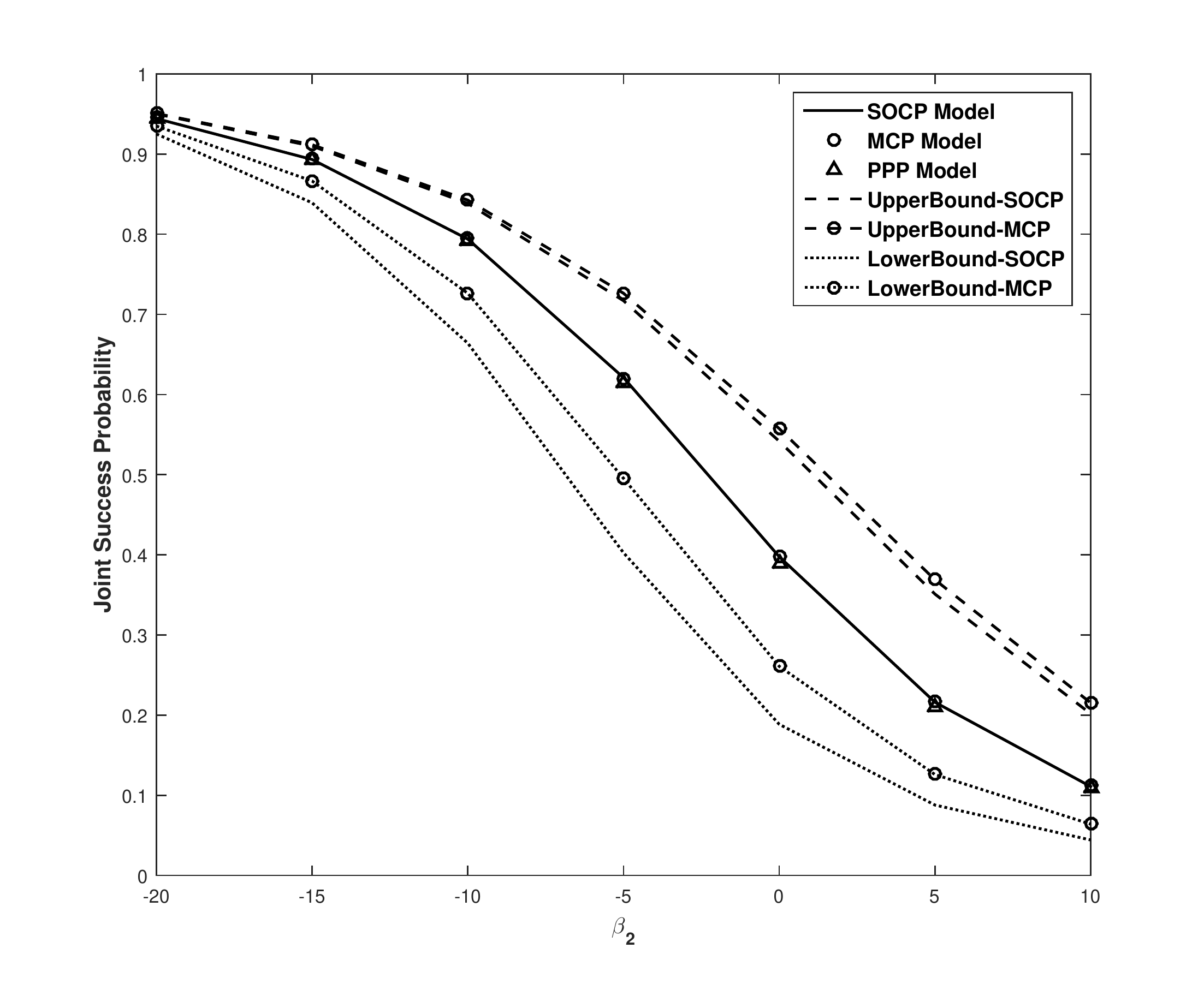}}
\centering\caption{Joint success probability versus SIR threshold. Here, $\alpha=4$, $\lambda_{0}=\lambda_{m}=7.96\times10^{-6}\:\mathrm{m^{-2}}$,
$\lambda_{p}=1.2\times10^{-4}\mathrm{m^{-2}}$, $P_{m}=39\:\mathrm{dBm}$,
$P_{s}=13\:\mathrm{dBm}$, $c_{1}=15$, $\sigma=50$, $c=c_{2}=3$} \label{Fig:JSP}
\end{figure}

Fig. \ref{Fig:JSP} shows the joint success probabilities and their corresponding bounds for MUs and SUs versus SIR threshold. The curves for the joint success probability for MUs and SUs are calculated by Theorem \ref{Thm:JSP} and the corresponding bounds by Theorem \ref{Thm:Bounds_JSP}. First of all, from Fig. \ref{Fig:JSP}(a), it is found that the curves of the lower bound of the joint success probability for MUs in the MCP model and SOCP model coincide with the joint success probability for the PPP model with the identical SBS density, which is verified by Theorem \ref{Thm:Bounds_JSP}. Next, it is also observed that the joint success probability for MUs increases in the order of the PPP, MCP, and SOCP models. The reason is that, there is a high probability for SBSs to be allocated at the edge of MBSs in SOCP model leading to a low inter-tier interference. Hence, in order to improve MU performance, it is suggested to deploy the SBSs in the annular region of MBSs. Last, Fig. \ref{Fig:JSP}(b) shows that there is little difference in the joint success probabilities for SUs in these three models. This is because, given the serving BS, the dominant interfering BSs comes from the same cluster, which is distributed as a PPP in the MCP and SOCP model.

\begin{figure}[t]
\centering \includegraphics[scale=0.3]{./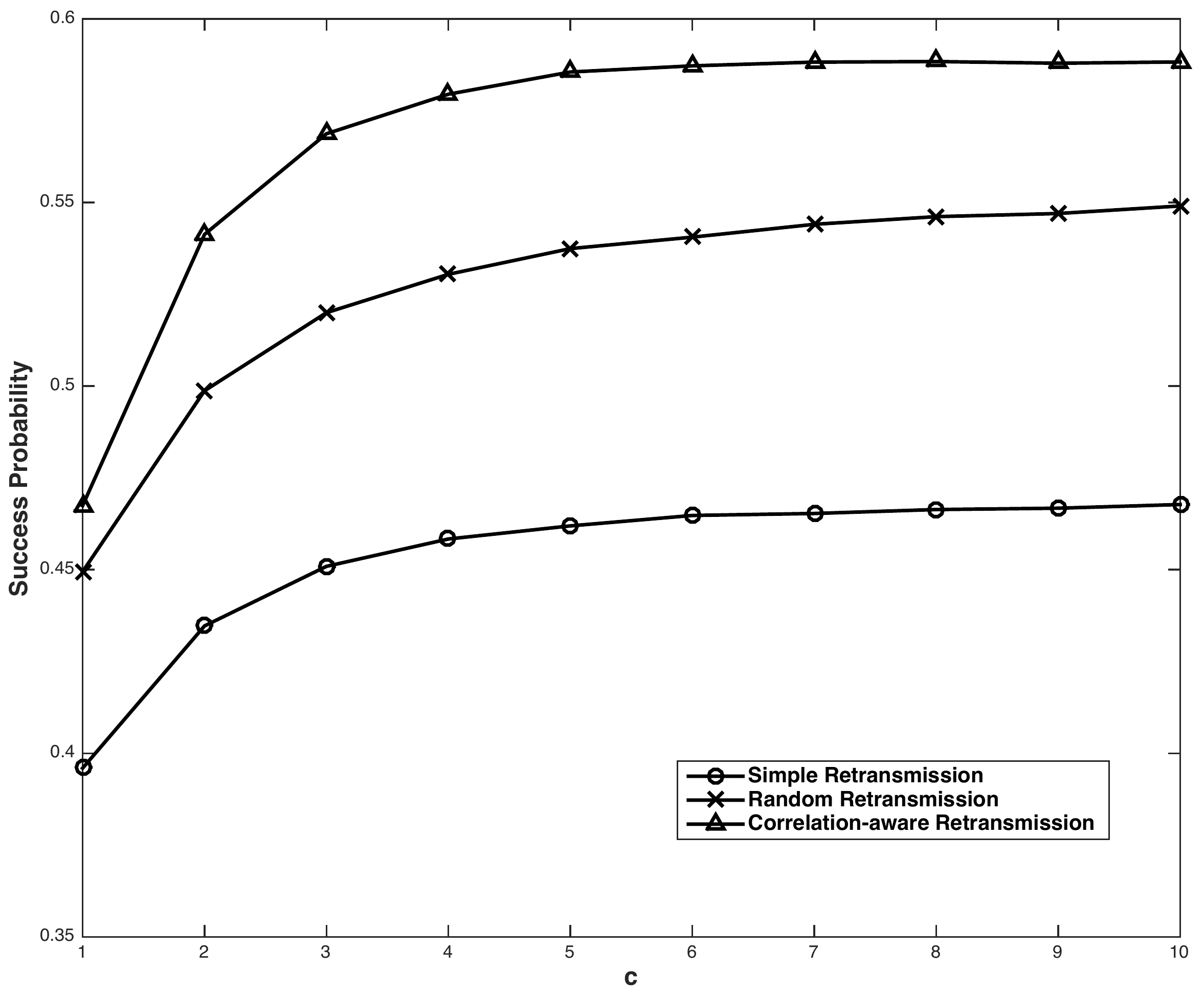}
\caption{Success probability for different retransmission schemes. Here, $\alpha=4$, $\lambda_{0}=\lambda_{m}=7.96\times10^{-6}\:\mathrm{m^{-2}}$,
$\lambda_{p}=1.2\times10^{-4}\mathrm{m^{-2}}$, $P_{m}=39\:\mathrm{dBm}$,
$P_{s}=13\:\mathrm{dBm}$, $\beta_1=-2$dB, $\beta_2=-3$dB.}\label{Fig:SuccProb_CompAlgor}
\end{figure}

\subsection{Correlation-aware Retransmission Scheme}

Fig. \ref{Fig:SuccProb_CompAlgor} compares the performance of HCNs under correlation-aware retransmission scheme proposed in this paper (Remark \ref{Rm:CorrRetrans}) with the simple and random retransmission schemes. For simple retransmission scheme, all BSs (re)transmit packets at all time slots. For random retransmission scheme \cite{LocalDelay_MAC}, all BSs (re)transmit packets with a given probability $p$. First of all, it is shown that the random retransmission scheme enhances the success probability since it introduces the randomness in transmission and thus reduces the interference correlation \cite{LocalDelay_MAC}. Next, correlation-aware retransmission scheme is observed to achieve higher success probability than the simple and random retransmission schemes since it takes advantage of the interference correlation when the success probability is high and avoid the blind retransmissions otherwise. Further, we observe that the gain increases with the growing $c$ (i.e. mean number of nodes in each cluster). This shows that effectively managing the effects of BS spatial interdependence on interference correlation significantly improve the network performance.

\section{Conclusions}

In this paper, we have studied the effects of BS spatial interdependence on interference correlation and the performance of HCNs with HARQ. While it is known that BS clustering degrades network performance, few results  exist on quantifying the effects of the phenomenon on interference correlation and closely related network performance with retransmissions. Our work makes contributions by analyzing such effects, revealing in a simple form how a growing level of clustering increases the interference-correlation coefficient. Specifically, it is shown that the interference-correlation coefficient is a monotone-increasing function of the mean number of nodes for each cluster and monotone-decreasing function of the cluster radius. Furthermore, we have presented a correlation-aware transmission scheme to illustrate how to take advantage of interference correlation and avoid the blind retransmissions for improving network performance. 

The used methodology and achieved results in this paper provide the way to quantify the effects of BS spatial interdependence on interference correlation and network performance. This work relies on cluster processes in stochastic geometry and some simplified assumptions to get the tradeoff between the mathematical tractability and practical network deployment. To derive more elaborate insight in practical networks with spatial dependence, further investigations in practical settings are necessary by considering other network deployment, using realistic channel model with correlation, and taking account of finite mobility of users in HCNs. Furthermore, studying the effects of BS interdependence on spatial interference correlation and the network performance under multi-hop transmissions is also an interesting topic.

\appendix

\subsection{Cluster Point Processes}\label{App:ClusterPP}
Two types of cluster point process, namely MCP and SOCP, are used for constructing the network model in Section~\ref{Section:System}. They are defined as follows.

Let a MCP be denoted as $\Phi_{\textsf M}$ with density $\lambda_{\textsf M}$. The process consists of a parent point process and a daughter point process forming clusters centered at different parent points. The parent point process is a PPP, denoted as $\Phi_{\textsf M^o}$,  with the  density $\lambda_{\textsf M^o}$. For a cluster, the daughter points
are uniformly distributed in a disk region with the  radius $R_{\textsf M}$ and  centered at the corresponding  parent point. The distance from a typical daughter point to the corresponding parent point has the following  probability density function (PDF):
\begin{equation}
f_{\textsf M}(r)=\begin{cases}
\frac{1}{\pi R_{\textsf M}^{2}}, & r\leq R_{\textsf M}, \\
0, & \text{otherwise}.
\end{cases}
\end{equation}
The number of daughter points in each cluster is  a Poisson-distributed random variable with mean $c_{\textsf M}$. Thus, the density of the MCP  is $\lambda_{\textsf M} = \lambda_{\textsf M^o} c_{\textsf M}$. Let $\mathcal{N}(X)$ denote a cluster centered at a parent point $X \in \Phi_{\textsf M^o}$. Then the MCP is given as $\Phi_{\textsf M} = \bigcup_{X\in \Phi_{\textsf M^o}} \mathcal{N}(X)$. The distribution of the MCP is illustrated in Fig. \ref{Fig:Network}(a).

Next, let a SOCP be denoted as $\Phi_{\textsf S}$ with density $\lambda_{\textsf S}$. The process consists of a parent point process, a first-order cluster process, and a daughter point process. The parent point process is a PPP, denoted as $\Phi_{\textsf S^o}$, with the density $\lambda_{\textsf S^{o}}$. For the first-order cluster, the points are isotropically scattered in a disk region with the radius $R_{\textsf S^\prime}$ and centered at the corresponding parent point. The distance from a typical first-order cluster point to the corresponding parent point has the following reverse Gaussian distribution \cite{HCN_SOCP}:
\begin{equation}
f_{\textsf S^\prime}(r)=\begin{cases}
\frac{\left(1-\exp\left(\frac{-r^{2}}{2\sigma^{2}}\right)\right)}{\pi R_{\textsf S^\prime}^{2}+2\pi\sigma^{2}\left(\exp\left(\frac{-R_{\textsf S^\prime}^{2}}{2\sigma^{2}}\right)-1\right)}, & r\leq R_{\textsf S^\prime}\\
0, & \text{otherwise},
\end{cases}
\end{equation}
where $\sigma$ denotes the standard deviation of reverse Gaussian distribution. Furthermore, for the second-order cluster, the daughter points are uniformly distributed in a disk region with the radius $R_{\textsf S}$ and centered at the corresponding first-order cluster point. The distance from a typical daughter point to the corresponding center (first-order cluster point) has the following PDF:
\begin{equation}
f_{\textsf S}(r)=\begin{cases}
\frac{1}{\pi R_{\textsf S}^{2}}, & r\leq R_{\textsf S}\\
0, & \text{otherwise}.
\end{cases}
\end{equation}
The number of points in each first-order cluster and second-order cluster are Poisson-distributed random variable with mean $c_{\textsf S^\prime}$ and $c_{\textsf S}$, respectively. Thus, the density of the SOCP is $\lambda_{\textsf S} = \lambda_{\textsf S^o} c_{\textsf S^\prime} c_{\textsf S}$. Let $\mathcal{N}(X^{[Y]})$ denote a cluster centered at a first-order cluster point $X^{[Y]} \in \Phi_{\textsf S^\prime}$ with the parent point $Y \in \Phi_{\textsf S^o}$. Then the SOCP is given as $\Phi_s = \bigcup_{X^{[Y]}\in \Phi_{\textsf S^\prime}} \bigcup_{Y\in \Phi_{\textsf S^o}} \mathcal{N}(X^{[Y]})$. The distribution of the SOCP is illustrated in Fig.~\ref{Fig:Network}(b).

\subsection{Radius of the association area for MBSs and SBSs}\label{RadiusOfAssociationArea}
Consider the  MCP model, the average coverage area of each cluster of SBSs or each MBS is $(\lambda_m + \lambda_{\textsf M^o})^{-1}$ where $\lambda_m$ and $\lambda_{\textsf M^o}$ are the densities of the MBSs and the parent process of the SBSs, respectively. Hence  the corresponding coverage radius is $D_{m \textsf M} = \left[\sqrt{\pi (\lambda_m + \lambda_{\textsf M^o})}\right]^{-1}$. Since there are $c_{\textsf M}$ SBSs in each cluster on  average, the average coverage area of each SBS is $[c (\lambda_m + \lambda_{\textsf M^o})]^{-1}$ and hence the corresponding coverage radius is $D_{s \textsf M}=D_{m \textsf M} / \sqrt{c_{\textsf M}}$. Next, consider the SOCP model.  The average coverage area of each MBS or  each cluster of first-order points is  $(\lambda_m + c_{\textsf S^\prime} \lambda_{\textsf S^o})^{-1}$. This results in  the average coverage radius being   $D_{m \textsf S} =\left[\sqrt{\pi (\lambda_m + c_{\textsf S^\prime} \lambda_{\textsf S^o})}\right]^{-1}$. Since the average number of first-order points in each cluster is $c_{\textsf S^\prime}$, the average coverage area of each cluster of SBSs is $\left[c_{\textsf S^\prime} (\lambda_m + c_{\textsf S^\prime} \lambda_{\textsf S^o})\right]$. Furthermore, there are $c_{\textsf S}$ SBSs in each cluster on average. Then the average coverage area of each SBS is $\left[c_{\textsf S} c_{\textsf S^\prime} (\lambda_m + c_{\textsf S^\prime} \lambda_{\textsf S^o})\right]$ and its coverage radius is $D_{s \textsf S}=D_{m \textsf S} / \sqrt{c_{\textsf S^\prime} c_{\textsf S}}$.

\subsection{Proof of Lemma \ref{lm:MeanInterference}}\label{pf:MeanInterference}
Here, we only show the main steps for the mean interference in the MCP model and omit those for the SOCP model since they follow the similar steps.

The interference power measured at the location $U$ in time slot $t$ is given as
\begin{equation}
I_{\epsilon}(U,t) =\sum_{X\in\Phi_{\textsf M}}h_{XU}(t)g_{\epsilon}(X-U)\nonumber =\sum_{Z\in\Phi_{\textsf M^o}}\sum_{X\in\Phi_{\textsf M}^{[Z]}}h_{XU}(t)g_{\epsilon}(X-U),
\end{equation}
where $\Phi_{\textsf M}^{[Z]}$ denotes the cluster associated with parent point $Z\in\Phi_{\textsf M^o}$.

The mean interference is given by
\begin{flalign}
& \mathbb{E}[I_{\epsilon}(U,t)]\nonumber  =\mathbb{E}[\sum_{Z \in \Phi_{\textsf M^o}}\sum_{X \in \Phi_{\textsf M}^{[Z]}}h_{XU}(t)g_{\epsilon}(X-U)] \overset{(a)}{=}\mathbb{E}[h] \lambda_{\textsf M^o} \int_{\mathbb{R}^{2}}\mathbb{E}[\sum_{X \in \Phi_{\textsf M}^{[Z]}}g_{\epsilon}(X-U)]\mathrm{d}Z\nonumber \\
 & \overset{(b)}{=}\mathbb{E}[h]\lambda_{\textsf M^o} c_{\textsf M} \!\!\int_{\mathbb{R}^{2}} \!\int_{\mathbb{R}^{2}} \!g_{\epsilon}(X\!-\!U\!-\!Z)f_{\textsf M}(X) \mathrm{d}X\mathrm{d}Z\nonumber \!=\! P_s \mathbb{E}[h] \lambda_{\textsf M^o} c_{\textsf M} \!\!\int_{\mathbb{R}^{2}} \!g_{\epsilon}(X\!-\!U) \!\!\int_{\mathbb{R}^{2}}f_{\textsf M}(X\!+\!Z)\mathrm{d}Z\mathrm{d}X\nonumber \\
 & \overset{(c)}{=}\mathbb{E}[h] \lambda_{\textsf M^o} c_{\textsf M} \int_{\mathbb{R}^{2}}g_{\epsilon}(X)\mathrm{d}X,\label{eq:MeanInterf_MCP}
\end{flalign}
where $(a)$ and $(b)$ come from Campbell-Mecke Theorem, $(c)$ follows from $\int_{\mathbb{R}^{2}}f_{\textsf M}(X)\mathrm{d}X=1$.

\subsection{Proof of Lemma \ref{lm:Approx_F}}\label{pf:Approx_F}
Given $R$ is large, $F(c,R)$ can be approximated as
\begin{flalign}
F(c,R) 
& = \frac{c}{\pi^{2}R^{4}}\int_{\mathbb{R}^{2}}\int_{\mathbb{R}^{2}}g_{\epsilon}(X)g_{\epsilon}(Y)A_{R}(|X-Y|)\mathrm{d}X\mathrm{d}Y \nonumber \\
& \overset{(a)} {=} \frac{c}{\pi R^{2}}\int_{\mathbb{R}^{2}}\int_{\mathbb{R}^{2}}g_{\epsilon}(X)g_{\epsilon}(Y) \mathrm{d}X\mathrm{d}Y + o (1/R^2)\nonumber \\
& = \frac{c}{\pi R^{2}} \left[\int_{\mathbb{R}^{2}} g_{\epsilon}(X) \mathrm{d}X \right]^2 + o (1/R^2) 
   = \frac{c}{\pi R^{2}} \left[ \int_{0}^{2 \pi} \int_{0}^{\infty} \frac{r}{r^{\alpha} + \epsilon} \mathrm{d}r \right]^2 + o (1/R^2)\nonumber \\
& \overset{(b)} {=}\frac{4 c \pi^3 }{\alpha^2 R^2} \epsilon^{{4-2 \alpha}/\alpha} (\csc({2 \pi}/\alpha))^2 + o (1/R^2)
\end{flalign}
where $(a)$ comes from the fact that $A_{R}(|X-Y|) \approx \pi R^2$, for large $R$ and $(b)$ uses the formula $\int_{0}^{\infty} \frac{r}{r^{\alpha} + \epsilon} \mathrm{d}r = \alpha^{-1} \epsilon^{{2-\alpha}/\alpha} \pi \csc({2 \pi}/\alpha)$ given in \cite[Eqn. 3.241.4]{IntegralTable}

\subsection{Proof of Lemma \ref{lm:SecondMoment}}\label{pf:SecondMoment}
1) MCP model:
\begin{flalign}
& \mathbb{E}[I_{\epsilon}(U_1,t_1),I_{\epsilon}(U_2,t_2)]\nonumber =\mathbb{E}\left[\sum_{X\in\Phi_{\textsf M}}\!\!h_{XU_1}(t_{1})g_{\epsilon}(X-U_1)\sum_{Y\in\Phi_{\textsf M}}\!\!h_{YU_2}(t_{2}) g_{\epsilon}(Y-U_2)\right]\nonumber \\
 =& \underbrace{ \mathbb{E}\!\left[\sum_{X\in\Phi_{\textsf M}}\!\!h_{XU_1}(t_{1})h_{XU_2}(t_{2})g_{\epsilon}(X\!\!-\!\!U_1)g_{\epsilon}(X\!\!-\!\!U_2)\right]}_{\xi_1}\!+\!\underbrace{\mathbb{E}\!\left[\sum_{x,y\in\Phi_{\textsf M}}^{X\neq Y}\!\!h_{XU_1}(t_{1})h_{YU_2}(t_{2})g_{\epsilon}(X\!\!-\!\!U_1)g_{\epsilon}(Y\!\!-\!\!U_2)\right]}_{\xi_2}.\label{eq:Mean product_original}
\end{flalign}

Next, we calculate $\xi_1$ and $\xi_2$, respectively.
\begin{equation}
 \xi_1\!=\!\mathbb{E}[h]^{2}\mathbb{E}\!\!\left[\sum_{Z\in\Phi_{\textsf M^o}}\!\sum_{X\in\Phi_{\textsf M}^{[Z]}}g_{\epsilon}(X\!-\!U_1)g_{\epsilon}(X\!-\!U_2)\!\right]\!\!\overset{(a)}{=}\mathbb{E}[h]^{2}\lambda_{\textsf M^o} c_{\textsf M} \int_{\mathbb{R}^{2}}g_{\epsilon}(X\!-\!U_1)g_{\epsilon}(X\!-\!U_2)\mathrm{d}X,\label{eq:F}
\end{equation}
where $(a)$ comes from Campbell-Mecke Theorem and the
fact that $\int_{\mathbb{R}^{2}}f_{\textsf M}(X)\mathrm{d}X=1$.
\begin{flalign}
\xi_2 & =\mathbb{E}[h]^{2}\mathbb{E}\left[\sum_{X,Y\in\Phi_{s}}^{X\neq Y}g_{\epsilon}(X-U_1)g_{\epsilon}(Y-U_2)\right]
  \overset{(a)}{=}\mathbb{E}[h]^{2}\int_{\mathbb{R}^{2}}\int_{\mathbb{R}^{2}}g_{\epsilon}(X)g_{\epsilon}(Y)\rho_{\textsf M}^{(2)}(X,Y)\mathrm{d}X\mathrm{d}Y\nonumber \\
 & \overset{(b)}{=}\mathbb{E}[h]^{2}\left(\lambda_{\textsf M^o}c_{\textsf M} \right)^{2}\int_{\mathbb{R}^{2}}\int_{\mathbb{R}^{2}}g_{\epsilon}(X)g_{\epsilon}(Y)\mathrm{d}X\mathrm{d}Y
  +\mathbb{E}[h]^{2}\lambda_{\textsf M^o}c_{\textsf M} F(c_{\textsf M},R_{\textsf M}),\label{eq:Q}
\end{flalign}
where $(a)$ follows from that $X$ can be substituted by $X-U_1$ and $Y$ can be substituted by $Y-U_2$ in the integrals, $(b)$ comes from the second moment density of a MCP given by \cite[p. 128]{StoGeoBook-Martin}, $F(\cdot,\cdot)$ is given in (\ref{Eq:Fun1}). Substituting (\ref{eq:F}) and (\ref{eq:Q}) into (\ref{eq:Mean product_original}), we get the mean product of $I_{\epsilon}(U_1,t_1)$ and $I_{\epsilon}(U_2,t_2)$ in Lemma \ref{lm:SecondMoment}.

Based on the results of $\mathbb{E}[I_{\epsilon}(U_1,t_1),I_{\epsilon}(U_2,t_2)]$, the second moment of interference is given by
\begin{flalign}
& \mathbb{E}[I_{\epsilon}^2(U,t)]\nonumber =\mathbb{E}\left[\sum_{X\in\Phi_{\textsf M}}\!\!h_{XU}(t)g_{\epsilon}(X-U)\sum_{Y\in\Phi_{\textsf M}}\!\!h_{YU}(t) g_{\epsilon}(Y-U)\right]\nonumber \\
 =& \underbrace{ \mathbb{E}\!\left[\sum_{X\in\Phi_{\textsf M}}\!\!h_{XU}^2(t)g_{\epsilon}^2(X\!-\!U)\right]}_{\xi_1^\prime}\!+\!\underbrace{\mathbb{E}\!\left[\sum_{X,Y\in\Phi_{\textsf M}}^{X\neq Y}\!\!h_{XU}(t)h_{YU}(t)g_{\epsilon}(X\!-\!U)g_{\epsilon}(Y\!-\!U)\right]}_{\xi_2^\prime} \nonumber \\
 =& \underbrace{\mathbb{E}[h^2]\lambda_{\textsf M^o} c_{\textsf M} \! \int_{\mathbb{R}^{2}}\!g_{\epsilon}^2(X)\mathrm{d}X}_{\xi_1^\prime} \!+ \! \underbrace{\mathbb{E}[h]^{2}\lambda_{\textsf M^o}c_{\textsf M} \!\left[\lambda_{\textsf M^o}c_{\textsf M} \!\int_{\mathbb{R}^{2}}\!\int_{\mathbb{R}^{2}}\!g_{\epsilon}(X)g_{\epsilon}(Y)\mathrm{d}X\mathrm{d}Y
  \!+ \!F(c_{\textsf M},R_{\textsf M})\right]}_{\xi_2^\prime}
\end{flalign}

2) SOCP model:

Following the similar steps, we derive the $\mathbb{E}[I_{\epsilon}(U_1,t_1),I_{\epsilon}(U_2,t_2)]$ and $\mathbb{E}[I_{\epsilon}^2(U,t)]$ in the SOCP model as:
\begin{equation}
\mathbb{E}[I_{\epsilon}(U_1,\!t_1),I_{\epsilon}(U_2,\!t_2)] \!=\! \mathbb{E}[h]^{2}\lambda_{\textsf S} \!\int_{\mathbb{R}^{2}}\!g_{\epsilon}(X\!-\!U_1)g_{\epsilon}(X\!-\!U_2)\mathrm{d}X \!+\! \mathbb{E}[h]^{2}\!\int_{\mathbb{R}^{2}}\!\!\int_{\mathbb{R}^{2}}\!g_{\epsilon}(X)g_{\epsilon}(Y)\rho_{\textsf S}^{(2)}(X,\!Y)\mathrm{d}X\mathrm{d}Y \label{eq:Mean_product_SOCP}
\end{equation}
\begin{equation}
\mathbb{E}[I_{\epsilon}^2(U,t)] = \mathbb{E}[h]^{2}\lambda_{\textsf S} \int_{\mathbb{R}^{2}}g_{\epsilon}^2(X)\mathrm{d}X + \mathbb{E}[h]^{2}\int_{\mathbb{R}^{2}}\int_{\mathbb{R}^{2}}g_{\epsilon}(X)g_{\epsilon}(Y)\rho_{\textsf S}^{(2)}(X,Y)\mathrm{d}X\mathrm{d}Y \label{eq:SecondMoment_SOCP}
\end{equation}
where $\rho_{\textsf S}^{(2)}$ denotes the second moment density of the SOCP.

The key of calculating (\ref{eq:Mean_product_SOCP}) and (\ref{eq:SecondMoment_SOCP}) is to derive $\rho_{\textsf S}^{(2)}$. According to \cite[pp127]{StoGeoBook-Martin}, the second moment density of SOCP is expressed as
\begin{equation}
\rho_{\textsf S}^{(2)}(X,Y)=\lambda_{\textsf S}^{2}+\underbrace{\mathbb{E}\left[\sum_{Z_0\in\Phi_{\textsf S^o}}\sum_{Z_1\in\Phi_{\textsf S^\prime}^{[Z_0]}}\rho(X,Y\mid Z_0,Z_1)\right]}_{\rho_{\textsf S}^\prime}, \label{Eq:SMD_SOCP}
\end{equation}
where $\rho(X,Y\mid Z_0,Z_1)$ denotes the conditional second moment density given the parent point $Z_0 \in \Phi_{\textsf S^o}$ and the first cluster point $Z_1 \in \Phi_{\textsf S^\prime}^{[z_0]}$.

Next, we calculate $\rho_{\textsf S}^\prime$ to derive $\rho_{\textsf S}^{(2)}(X,Y)$.
\begin{flalign}
 \rho_{\textsf S}^\prime
\overset{\left(a\right)}{=}& \mathbb{E}\left[\sum_{Z_0\in\Phi_{\textsf S^o}} \sum_{Z_1\in\Phi_{\textsf S^\prime}^{[Z_0]}} c_{\textsf S}f_{\textsf S}(X-Z_1-Z_0)c_{\textsf S}f_{\textsf S}(Y-Z_1-Z_0)\right]\nonumber \\
\overset{\left(b\right)}{=} & \lambda_{\textsf S^o}c_{\textsf S^\prime}c_{\textsf S}^2 \int_{\mathbb{R}^{2}}\!\int_{\mathbb{R}^{2}}f_{\textsf S^\prime}(Z_1)f_{\textsf S}(X-Z_1-Z_0)f_{\textsf S}(Y-Z_1-Z_0)\mathrm{d}Z_0 \mathrm{d}Z_1\nonumber \\
\overset{\left(c\right)}{=} & \lambda_{\textsf S^o}c_{\textsf S^\prime}c_{\textsf S}^2 (f_{\textsf S}\star f_{\textsf S})(X-Y)\int_{\mathbb{R}^{2}}f_{\textsf S^\prime}(Z_1)\mathrm{d}Z_1\nonumber \\
\overset{\left(d\right)}{=} & \lambda_{\textsf S^o}c_{\textsf S^\prime}c_{\textsf S}^2 (f_{\textsf S}\star f_{\textsf S})(X-Y)\nonumber \\
\overset{\left(e\right)}{=} & \lambda_{\textsf S^o}c_{\textsf S^\prime}c_{\textsf S}^2\cdot\frac{A_{R_{\textsf S}}(| X-Y|)}{\pi^{2}R_{\textsf S}^{4}},\label{Eq:CSMD_SOCP}
\end{flalign}
where $\left(a\right)$ follows from the independence of the points
in the same cluster, $\left(b\right)$ comes from Campbell-Mecke Theorem, $\left(c\right)$ comes from the definition of convolution $\star$,
$\left(d\right)$ follows from the fact that $\int_{\mathbb{R}^{2}}f_{\textsf S^\prime}\left(Z_1\right)\mathrm{d}Z_1=1$,
$\left(e\right)$ comes from the calculation of $\left(f_{\textsf S}\star f_{\textsf S}\right)\left(X-Y\right)$
which is given in \cite{StoGeoBook-Martin} and $A_{R_{\textsf S}}(r)=2R_{\textsf S}^{2}\arccos\left(\frac{r}{2R_{\textsf S}}\right)-r\sqrt{R_{\textsf S}^{2}-\frac{r^{2}}{4}},\;0\leq r\leq2R_{\textsf S}$.

Last, the mean product (or the second moment) of the interference power are derived by substituting (\ref{Eq:SMD_SOCP}) and (\ref{Eq:CSMD_SOCP}) into (\ref{eq:Mean_product_SOCP}) (or (\ref{eq:SecondMoment_SOCP})).

\subsection{Proof of Proposition \ref{Prop:CorCoef_MCP_PPP}}\label{pf:CC_MCP_PPP}
To notational simplicity, let $\theta=\int_{\mathbb{R}^{2}}g(X)g(X-\|U_1-U_2\|)\mathrm{d}X>0$, and
$\theta^{\prime}=\frac{\mathbb{E}\left[h^{2}\right]}{\mathbb{E}\left[h\right]^{2}}\int_{\mathbb{R}^{2}}g^{2}(X)\mathrm{d}X>0$.
Hence, $\zeta_{\textsf P}$ is expressed as $\zeta_{\textsf P}=\frac{\theta}{\theta^{\prime}}$, and both $\zeta_{\textsf M}$ and $\zeta_{\textsf S}$ can be written as $\zeta=\frac{\theta+F\left(c,R\right)}{\theta^{\prime}+F\left(c,R\right)}$, where $(c,R)=(c_{\textsf M},R_{\textsf M})$ for $\zeta_{\textsf M}$ and $(c,R)=(c_{\textsf S}, R_{\textsf S}$) for $\zeta_{\textsf S}$.

Next, we show that $\zeta \geq \zeta_{\textsf P}$.
\begin{equation}
\zeta-\zeta_{\textsf P}=\frac{\theta+F(c,R)}{\theta^{\prime}+F(c,R)}-\frac{\theta}{\theta^{\prime}}=\frac{(\theta^{\prime}-\theta)F(c,R)}{\theta^{\prime}(\theta^{\prime}+F(c,R))}\overset{\left(a\right)}{\geq 0},\label{Eq:Comparison}
\end{equation}
where $(a)$ comes from the fact that $F(c,R)\geq 0$ and $\theta^{\prime}-\theta>0$
since $0<\zeta_{\textsf P}=\frac{\theta}{\theta^{\prime}}<1$ \cite{InterferenceCorreLetter}. The equality of (\ref{Eq:Comparison}) holds when $F(c,R)=0$.

\subsection{Proof of Proposition \ref{Prop:CorCoef_c_R}}\label{pf:IntCorCoef_c_R}

Proposition \ref{Prop:CorCoef_c_R} is proved by the following two steps. First, we show that $\zeta$ is a monotone-increasing function of $F(\cdot,\cdot)$. To this end, we take the derivative of $\zeta$ with respect to $F(\cdot,\cdot)$ and get that $\zeta'=\frac{\theta^{\prime}-\theta}{(\theta^{\prime}+\theta)^{2}}>0$. Therefore, $\zeta$ increases with the increase in $F(\cdot,\cdot)$.

Next, the function $F(c,R)$ is proved to be a monotone-increasing function of $c$ and monotone-decreasing function of $R$. Recall that $F(c,R)=\frac{c}{\pi^{2}R^{4}}\int_{\mathbb{R}^{2}}\int_{\mathbb{R}^{2}}g(X)g(Y)A_{R}(|X-Y|)\mathrm{d}X\mathrm{d}Y$,
where $A_{R}(r)=2R^{2}\mathcal{\arccos}\left(\frac{r}{2R}\right)-r\sqrt{R^{2}-\frac{r^{2}}{4}},\;0\leq r\leq2R$,
and 0 \emph{for} $r>2R$. From the expression of $F$, we get that $F(c,R)\propto c$
and $F(c,R)$ changes on the order of $k(\frac{1}{R})\cdot\frac{1}{R^2}$, where $k(\frac{1}{R})\in(0,\pi)$.

In particular, $F(c,R) \rightarrow 0$, if $\frac{c}{\pi^{2}R^{2}} \rightarrow 0$.
According to Proposition \ref{Prop:CorCoef_MCP_PPP}, $\zeta \rightarrow \zeta_{\textsf P}$, if $\frac{c}{\pi^{2}R^{2}} \rightarrow 0$.

\subsection{Proof of Lemma \ref{lm:ConJointSuccProb}}\label{pf:Cond_JSP}

According to the definition of the joint success probability, we have
\begin{flalign}
& \mathcal{P}_{m}^{\left(n\right)}(r_m) \nonumber\\& =\mathbb{P}\left(\frac{P_{m}h_{X_{m}}(t_{1})r_m^{-\alpha}} {I_{mm}(t_{1})+I_{sm}(t_{1})}>\beta_{m},\cdots,\frac{P_{m}h_{X_{m}}(t_{n})r_m^{-\alpha}}{I_{mm}(t_{n})+I_{sm}(t_{n})}>\beta_{m}\right)\nonumber \\
 & =\mathbb{P}\left(h_{X_{m}}(t_{1})>\frac{\beta_{m}\left(I_{mm}(t_{1})+I_{sm}(t_{1})\right)}{P_{m}r_m^{-\alpha}},\cdots,h_{X_m}\left(t_{n}\right)>\frac{\beta_{m}\left(I_{mm}(t_{n})+I_{sm}(t_{n})\right)}{P_{m}r_m^{-\alpha}}\right)\nonumber \\
 & \overset{\left(a\right)}{=}\mathbb{E}\left(\exp\left[\frac{-\beta_{m}\left(I_{mm}(t_{1})+I_{sm}(t_{1})\right)}{P_{m}r_m^{-\alpha}}\right]\times\cdots\times\exp\left[\frac{-\beta_{m}\left(I_{mm}(t_{n})+I_{sm}(t_{n})\right)}{P_{m}r_m^{-\alpha}}\right]\right)\nonumber \\
 & \overset{\left(b\right)}{=}\mathbb{E}_{\Phi_{m},\!\Phi_{s}}\left\{ \!\prod_{X\in\Phi_{m},\!X\neq X_{m}} \!\!\!\!\!\!\!\mathbb{E}_{h}\!\left[\!\exp\left(\!\frac{-\beta_{m} \widetilde{g}(X,r_m)}{r_m^{-\alpha}}\!\sum_{i=1}^{n}h_{X}(t_{i})\!\right)\!\right]\!\prod_{Y\in\Phi_{s}}\!\!\mathbb{E}_{h}\!\!\left[\exp\left(\!\frac{-\beta_{m}P_{s}|Y|^{-\alpha}}{P_{m}r_m^{-\alpha}}\sum_{i=1}^{n}h_{Y}(t_{i})\!\right)\!\right]\!\right\} \nonumber \\
 & \overset{\left(c\right)}{=}\mathbb{E}_{\Phi_{m}}\left\{ \prod_{X\in\Phi_{m},X\neq X_{m}}\mathbb{\mathbb{E}}_{h}\left[\exp\left(\frac{-\beta_{m}\widetilde{g}(X,r_m)}{r_m^{-\alpha}}\sum_{i=1}^{n}h_{X}(t_{i})\right)\right]\right\} \nonumber \\
 & \times\mathbb{E}_{\Phi_{s}}\left\{ \prod_{Y\in\Phi_{s}}\mathbb{\mathbb{E}}_{h}\left[\exp\left(\frac{-\beta_{m}P_{s}|Y|^{-\alpha}}{P_{m} r_m^{-\alpha}}\sum_{i=1}^{n}h_{Y}(t_{i})\right)\right]\right\} \nonumber \\
 & \overset{\left(d\right)}{=}\mathbb{E}_{\Phi_{m}}\left[\prod_{X\in\Phi_{m},X\neq X_{m}}\left(1+\frac{\beta_{m}\widetilde{g}(X,r_m)}{r_m^{-\alpha}}\right)^{-n}\right]\times\mathbb{E}_{\Phi_{s}}\left[\prod_{Y\in\Phi_{s}}\left(1+\frac{\beta_{m}P_{s}|Y|^{-\alpha}}{P_{m}r_m^{-\alpha}}\right)^{-n}\right]\nonumber \\
 & \overset{\left(e\right)}{=}G_{\Phi_{m}^{!}}\left[\left(1+\frac{\beta_{m}\widetilde{g}(X,r_m)}{r_m^{-\alpha}}\right)^{-n}\right]G_{\Phi_{s}}\left[\left(1+\frac{\beta_{m}P_{s}|Y|^{-\alpha}}{P_{m}r_m^{-\alpha}}\right)^{-n}\right],\label{eq:}
\end{flalign}
where $\left(a\right)$ comes from the independence of Rayleigh fading
channels, $\left(b\right)$ follows from the expression of $I_{mm}$
and $I_{sm}$ and $\widetilde{g}(X,r_m)=|X|^{-\alpha}\mathds{1}(|X|>r_m)$, $\left(c\right)$ comes from the fact that $\mathbb{E}_{\Phi_{m},\!\Phi_{s}}\!\left[A\!\left(\Phi_{m}\right)\!B\!\left(\Phi_{s}\right)\right]\!=\!\mathbb{E}_{\Phi_{m}}\!\left\{ \mathbb{E}_{\Phi_{s}}\!\left[A\left(\Phi_{m}\right)\!B\!\left(\Phi_{s}\right)\right]\right\} =\mathbb{E}_{\Phi_{m}}\left[A\left(\Phi_{m}\right)\right]\mathbb{E}_{\Phi_{s}}\left[B\left(\Phi_{s}\right)\right]$,
$\left(d\right)$ follows from the independence of Rayleigh fading channels, $\left(e\right)$comes from the definition of the PGF of point processes.

Similarly, we get the joint success probability for the typical SU as shown in Lemma 1.

\subsection{Proof of Lemma \ref{lm:bounds_PGF_SOCP}}\label{pf:Bouds_socp}
1) The lower bound of $G_{\Phi_{\textsf S}}(\lambda_{\textsf S})$:

According to (\ref{eq:PGF_SOCP}), the PGF of SOCP is expressed as
\begin{flalign}
G_{\Phi_{\textsf S}}(\lambda_{\textsf S})
&=\exp \!\!\left\{-\lambda_{\textsf S^o}\!\!\int_{\mathbb{R}^{2}}\!\!\left[ 1-M_{1}\!\!\left(\underbrace{\int_{\mathbb{R}^{2}}M_{2}(\int_{\mathbb{R}^{2}}v(X\!+\!Y+\!Z)f_{\textsf S}(Z)\mathrm{d}Z)f_{\textsf S^\prime}(Y)\mathrm{d}Y}_{T_0}\!\right)\right] \mathrm{d}X\!\right\}\nonumber\\
 & =\exp\left\{ -\lambda_{\textsf S^o}\int_{\mathbb{R}^{2}}\left[1-\exp\left(-c_{\textsf S^\prime}\left(1-T_{0}\right)\right)\right]\mathrm{d}X\right\} \nonumber \\
 & \overset{\left(a\right)}{\geq}\exp\left[-\lambda_{\textsf S^o}\int_{\mathbb{R}^{2}}c_{\textsf S^\prime}\left(1-T_{0}\right)\mathrm{d}X\right]\nonumber \\
 & \overset{\left(b\right)}{=}\exp\left\{ -\lambda_{\textsf S^o}c_{\textsf S^\prime}\int_{\mathbb{R}^{2}}\left[1-M_{2}(\int_{\mathbb{R}^{2}}v(J+Z)f_{\textsf S}(Z)\mathrm{d}Z)\right]\cdot\int_{\mathbb{R}^{2}}f_{\textsf S^\prime}(J-X)\mathrm{d}x\cdot\mathrm{d}J\right\} \nonumber \\
 & \overset{\left(c\right)}{=}G_{\Phi_{\textsf M}}(\lambda_{\textsf S}),\label{eq:Gsocp>Gmcp}
\end{flalign}
where $\left(a\right)$ follows from the fact that $1-\exp(-\theta x)\leq\theta x,\:\theta\geq0$,
$\left(b\right)$ comes from the change of variables $J=X+Y$ and interchanging integrals, $\left(c\right)$
follows from the fact that $\int_{\mathbb{R}^{2}}f_{\textsf S^\prime}(J-X)\mathrm{d}X=1$ and the expression of the PGF of MCP.

According to (\ref{eq:Bounds_MCP}), we have
\begin{equation}
G_{\Phi_{\textsf S}}(\lambda_{\textsf S}) \geq G_{\Phi_{\textsf M}}(\lambda_{\textsf S}) \geq G_{\Phi_{\textsf P}}(\lambda_{\textsf S}). \label{eq:Gsocp>Gppp}
\end{equation}

2) The upper bound of $G_{\Phi_{\textsf S}}(\lambda_{\textsf S})$:
\begin{flalign}
G_{\Phi_{\textsf S}}(\lambda_{\textsf S}) & =\exp\left\{ -\lambda_{\textsf S^o}\int_{\mathbb{R}^{2}}\left[1-\exp\left(-c_{\textsf S^\prime}\left(1-T_{0}\right)\right)\right]\mathrm{d}X\right\} \nonumber \\
 & \overset{\left(a\right)}{\leq}\exp\left[-\lambda_{\textsf S^o}\int_{\mathbb{R}^{2}}\frac{c_{\textsf S^\prime}\left(1-T_{0}\right)}{1+c_{\textsf S^\prime}\left(1-T_{0}\right)}\mathrm{d}X\right]\nonumber \\
 & \overset{\left(b\right)}{\leq}\exp\left[\frac{-\lambda_{\textsf S^o}c_{\textsf S^\prime}}{1+c_{\textsf S^\prime}}\int_{\mathbb{R}^{2}}\left(1-T_{0}\right)\mathrm{d}X\right]\nonumber \\
 & \overset{\left(c\right)}{=}\exp\left\{ \frac{-\lambda_{\textsf S^o}c_{\textsf S^\prime}}{1+c_{\textsf S^\prime}}\int_{\mathbb{R}^{2}}\int_{\mathbb{R}^{2}}\left[1-\exp(-c_{\textsf S}T^{\prime})\right]f_{\textsf S^\prime}(Y)\mathrm{d}Y\mathrm{d}X\right\} \nonumber \\
 & \overset{\left(d\right)}{\leq}\exp\left\{ \frac{-\lambda_{\textsf S^o}c_{\textsf S^\prime}c_{\textsf S}}{\left(1+c_{\textsf S^\prime}\right)\left(1+c_{\textsf S}\right)}\int_{\mathbb{R}^{2}}\int_{\mathbb{R}^{2}}\int_{\mathbb{R}^{2}}\left(1-v(X+Y+Z)\right)f_{\textsf S}(Z)\mathrm{d}Z\cdot f_{\textsf S^\prime}(Y)\mathrm{d}Y\mathrm{d}X\right\} \nonumber \\
 & \overset{\left(e\right)}{=}\exp\left[\frac{-\lambda_{\textsf S^o}c_{\textsf S^\prime}c_{\textsf S}}{\left(1+c_{\textsf S^\prime}\right)\left(1+c_{\textsf S}\right)}\int_{\mathbb{R}^{2}}\left(1-v(X)\right)\mathrm{d}X\right]\nonumber \\
 & =G_{\Phi_{\textsf P}}\left(\frac{\lambda_{\textsf S}}{(1+c_{\textsf S^\prime})(1+c_{\textsf S})}\right),
\end{flalign}
where $\left(a\right)$ comes from the fact that $\exp(-\theta x)\leq\left(1+\theta x\right)^{-1}$,
$\left(b\right)$ follows from the fact that $T_{0}=\int_{\mathbb{R}^{2}}M_{2}(\int_{\mathbb{R}^{2}}v(X+Y+Z)f_{\textsf S}(Z)\mathrm{d}Z)f_{\textsf S^\prime}(Y)\mathrm{d}Y\geq0$
(because $M_{2}(\int_{\mathbb{R}^{2}}v(X+Y+Z)f_{\textsf S}(Z)\mathrm{d}Z)>0$
and $f_{\textsf S^\prime}(Y)\geq0$), $\left(c\right)$ comes from $T^{\prime}=\int_{\mathbb{R}^{2}}(1-v(X+Y+Z))f_{\textsf S}(Z)\mathrm{d}Z$,
$\left(d\right)$ follows from the fact that $\exp(-\theta x)\leq\left(1+\theta x\right)^{-1}$
and $0\leq \theta \leq1$,
$\left(e\right)$ comes from the change of variables, interchanging integrals, and the fact that $\int_{\mathbb{R}^{2}}f_{\textsf S}(X)\mathrm{d}X=\int_{\mathbb{R}^{2}}f_{\textsf S^\prime}(X)\mathrm{d}X=1$.

3) The lower bound of $G_{\Phi_{\textsf S}^{!}}(\lambda_{\textsf S})$:

According to (\ref{eq:Coditional PGF_SOCP}), the conditional PGF of the SOCP is
\begin{flalign}
G_{\Phi_{\textsf S}^{!}}(\lambda_{\textsf S}) & =G_{\Phi_{\textsf S}}(\lambda_{\textsf S})\underbrace{M_{1}\left[\int_{\mathbb{R}^{2}}M_{2}(\int_{\mathbb{R}^{2}}v(X+Y+Z)f_{\textsf S}(Z)\mathrm{d}Z)f_{\textsf S^\prime}(Y)\mathrm{d}Y\right]}_{T_1}\nonumber \\
 & \cdot \underbrace{ \int_{\mathbb{R}^{2}}M_{2}(\int_{\mathbb{R}^{2}}v(X+Y+Z)f_{\textsf S}(Z)\mathrm{d}Z)f_{\textsf S}(Y)\mathrm{d}Y}_{T_2}. \label{eq:pf_SOCP}
\end{flalign}
Thus, the lower bound of $G_{\Phi_{\textsf S}^{!}}(\lambda_{\textsf S})$ is derived by bounding the following three terms, called $G_{\Phi_{\textsf S}}(\lambda_{\textsf S})$, $T_1$, and $T_2$.

First, the lower bound of $G_{\Phi_{\textsf S}}(\lambda_{\textsf S})$ is given in (\ref{eq:Gsocp>Gppp}).

Next, the lower bound of $T_1$ is calculated as follows:
\begin{flalign}
T_{1} & =\exp\left\{ -c_{\textsf S^\prime}\int_{\mathbb{R}^{2}}\left[1-\exp\left(-c_{\textsf S}\int_{\mathbb{R}^{2}}\left(1-v(X+Y+Z)\right)f_{\textsf S}(Z)\mathrm{d}Z\right)\right]f_{\textsf S^\prime}(Y)\mathrm{d}Y\right\} \nonumber \\
 & \overset{\left(a\right)}{\geq}\exp\left\{ -c_{\textsf S^\prime}c_{\textsf S}\int_{\mathbb{R}^{2}}(1-v(J))f_{\textsf S}(J-Y-X)f_{\textsf S^\prime}(Y)\mathrm{d}Y\mathrm{d}J\right\} \nonumber \\
 & \overset{\left(b\right)}{=}\exp\left[-c_{\textsf S^\prime}c_{\textsf S}\int_{\mathbb{R}^{2}}(1-v(J))f_{\textsf S}\star f_{\textsf S^\prime}(J-X)\mathrm{d}J\right]\nonumber \\
 & \overset{\left(c\right)}{\geq} exp\left[ -c_{\textsf S^\prime} c_{\textsf S} \widehat{f_{\textsf S}\star f_{\textsf S^\prime}} \int_{\mathbb{R}^{2}}(1-v(J)) \mathrm{d}J\right],\label{eq:T1}
\end{flalign}
where $\left(a\right)$ comes from the fact that $1-\exp(-\theta x)\leq\theta x,\:\theta\geq0$
and the change of variables $J=X+Y+Z$, $\left(b\right)$ follows from the definition of convolution $f_{\textsf S} \star f_{\textsf S^\prime}$, $\left(c\right)$ comes from $\widehat{f_{\textsf S}\star f_{\textsf S^\prime}} = \sup_{X \in \mathbb{R}^{2}}(f_{\textsf S}\star f_{\textsf S^\prime})(X)$.

Based on Young's inequality in \cite{YongEquality} ($\|f \star g\|_{r} \leq \|f\|_{p}\|g\|_{q}$, where $1/p+1/q=1/r+1$), we have $\widehat{f_{\textsf S}\star f_{\textsf S^\prime}} \leq \min \{\|f_{\textsf S^\prime}\|_{\infty} \|f_{\textsf S}\|_{1}, \|f_{\textsf S^\prime}\|_{1} \|f_{\textsf S}\|_{\infty}\}=\underbrace{\min\left\{ \frac{1-\exp\left(\frac{-R_{\textsf S^\prime}^{2}}{2\sigma^{2}}\right)}{\pi R_{\textsf S^\prime}^{2}+2\pi\sigma^{2}\left(\exp\left(\frac{-R_{\textsf S^\prime}^{2}}{2\sigma^{2}}\right)-1\right)},\frac{1}{\pi R_{\textsf S}^{2}}\right\}}_{\gamma}$.

Hence, we have
\begin{equation}
T_{1} \geq \exp\left[-c_{\textsf S^\prime}c_{\textsf S}\gamma\int_{\mathbb{R}^{2}}(1-v(J))\mathrm{d}J\right].
\end{equation}

Next, the lower bound of $T_2$ is given as:
\begin{flalign}
T_{2} & =\int_{\mathbb{R}^{2}}\exp\left[-c_{\textsf S}\int_{\mathbb{R}^{2}}(1-v(X+Y+Z))f_{\textsf S}(Z)\mathrm{d}Z\right]f_{\textsf S}(Y)\mathrm{d}Y\nonumber \\
 & \overset{\left(a\right)}{\geq}\exp\left[-c_{\textsf S}\int_{\mathbb{R}^{2}}\int_{\mathbb{R}^{2}}(1-v(X+Y+Z))f_{\textsf S}(Z)\mathrm{d}Z f_{\textsf S}(Y)\mathrm{d}Y\right]\nonumber \\
 & \overset{\left(b\right)}{=}\exp\left[-c_{\textsf S}\int_{\mathbb{R}^{2}}(1-v(J))f_{\textsf S}\star f_{\textsf S}(J-X)\mathrm{d}J\right]\nonumber \\
 & \overset{\left(c\right)}{\geq}\exp\left[-c_{\textsf S} \widehat{f_{\textsf S}\star f_{\textsf S}} \int_{\mathbb{R}^{2}}(1-v(J))\mathrm{d}J\right]\nonumber \\
 & \overset{\left(D\right)}{\geq}\exp\left[\frac{-c_{\textsf S}}{\pi R_{\textsf S}^{2}}\int_{\mathbb{R}^{2}}\left(1-v(J)\right)\mathrm{d}J\right],\label{eq:T2}
\end{flalign}
where $\left(a\right)$ comes from the fact that $f(x)=\exp(-x)$ is
convex and $\mathbb{E}\left[f(x)\right] \geq f(\mathbb{E}(x))$, $\left(b\right)$
follows from the change of variables $J=X+Y+Z$ and the definition of convolution $f_{\textsf S} \star f_{\textsf S}$, $\left(c\right)$ comes from $\widehat{f_{\textsf S}\star f_{\textsf S}} = \sup_{X \in \mathbb{R}^{2}}(f_{\textsf S}\star f_{\textsf S})(X)$, $\left(d\right)$
comes from the Young's inequality $\widehat{f_{\textsf S}\star f_{\textsf S}} \leq \|f_{\textsf S}\|_{\infty} \|f_{\textsf S}\|_{1}=\frac{1}{\pi R_{\textsf S}^{2}}$
.

Combining (\ref{eq:pf_SOCP}), (\ref{eq:Gsocp>Gppp}), (\ref{eq:T1}), and (\ref{eq:T2}), the lower bound of $G_{\Phi_{\textsf S}^{!}}$ is given as follows:
\begin{flalign}
G_{\Phi_{\textsf S}^{!}}(\lambda_{\textsf S}) &= G_{\Phi_{\textsf S}}(\lambda_{\textsf S}) \cdot T_1 \cdot T_2 \nonumber \\
& \geq \exp\!\left[\!-\lambda_{\textsf S} \int_{\mathbb{R}^{2}}(1\!-\!v(J))\mathrm{d}J\right] \exp\!\left[\!-c_{\textsf S^\prime}c_{\textsf S}\gamma\int_{\mathbb{R}^{2}}(1\!-\!v(J))\mathrm{d}J\right] \exp\!\left[\frac{-c_{\textsf S}}{\pi R_{\textsf S}^{2}}\int_{\mathbb{R}^{2}}\left(1\!-\!v(J)\right)\mathrm{d}J\right]\nonumber \\
& = \exp\left[-\left(\lambda_{\textsf S} + c_{\textsf S^\prime}c_{\textsf S}\gamma +  \frac{c_{\textsf S}}{\pi R_{\textsf S}^{2}} \right)\int_{\mathbb{R}^{2}}(1-v(J))\mathrm{d}J\right] = G_{\Phi_{\textsf P}^{!}}\left(\lambda_{\textsf S} + c_{\textsf S^\prime}c_{\textsf S}\gamma +  \frac{c_{\textsf S}}{\pi R_{\textsf S}^{2}} \right).
\end{flalign}

4) The upper bound of $G_{\Phi_{\textsf S}^{!}}(\lambda_{\textsf S})$:

\begin{flalign}
G_{\Phi_{\textsf S}^{!}}(\lambda_{\textsf S}) &= G_{\Phi_{\textsf S}}(\lambda_{\textsf S}) \cdot T_1 \cdot T_2 \overset{\left(a\right)} {\leq} G_{\Phi_{\textsf S}}(\lambda_{\textsf S}) \overset{\left(b\right)} {\leq} G_{\Phi_{\textsf P}^{!}}\left(\frac{\lambda_{\textsf S}}{\left(1+c_{\textsf S^\prime}\right)\left(1+c_{\textsf S}\right)}\right).
\end{flalign}
where (a) comes from $0 \leq T_1 \leq 1$ and $0 \leq T_2 \leq 1$ and (b) follows form Lemma \ref{lm:bounds_PGF_SOCP}.

\subsection{Proof of Lemma \ref{Lem:JointSuccProb:PPP}} \label{Pf:JSP_PPP}
According to (\ref{eq:JSP_MU_original}), $\mathcal{P}_{m \textsf P}^{(n)}(\lambda, r)$ is given as:
\begin{align}
&\mathcal{P}_{m \textsf P}^{(n)}(\lambda_s, r_m)\nonumber \\
& \overset{(a)}{=} \mathbb{E}_{X_m}^{!}\left[\prod_{X \in \Phi_m} \left(1+\frac{\beta_{m} \widetilde{g}(X,r_m)} {r_m^{-\alpha}}\right)^{-n}\right] \mathbb{E}\left[\prod_{X \in \Phi_{\textsf P}} \left(1+\frac{\beta_{m}P_{s}|X|^{-\alpha}}{P_{m} r_m^{-\alpha}}\right)^{-n}\right] \nonumber \\
& \overset{(b)}{=} \!\!\exp\left\{ \!\!- 2 \pi \lambda_m \int_{r}^{\infty} \left[1\!-\!\left(1\!+\! \frac{\beta_{m} r^{-\alpha}} {r_m^{-\alpha}}\right)^{-n}\right] \!r \mathrm{d} r\!\right\} \exp \!\left\{\!- 2 \pi \lambda_s \!\!\int_{0}^{\infty} \!\!\left[1\!-\!\left(1\!+\!\frac{\beta_{m}P_{s}r^{-\alpha}}{P_{m} r_m^{-\alpha}}\right)^{-n}\right]\! r \mathrm{d}r \right\} \nonumber \\
& \overset{(c)}{=} \exp\left[-\lambda_{m}Q_{n}(\beta_{m}) r_m^{2}\right] \exp\left[- \lambda_s \left(\frac{\beta_{m}P_{s}}{P_{m}}\right)^{\delta}U_{n} r_m^{2}\right].
\end{align}
where (a) comes from (\ref{eq:JSP_MU_original}), (b) follows from (\ref{eq:PGF_PPP}) and  converting from Cartesian to polar coordinates, (c) comes from (22) in \cite{InterfCorreglobecom} with $K=1$ and Theorem 1 in \cite{IntefCorrDiversityPolynomials} with $\theta = \frac{\beta_m P_s}{P_m}$ and $p=1$, the function $Q_{n}(\beta_{m})$ and constant $U_{n}$ are defined in (\ref{eq:Qn}) and (\ref{eq:Un}), respectively.\\

Following the similar steps, $\mathcal{P}_{s \textsf P}^{(n)}(\lambda, r)$ is derived as shown in Lemma \ref{Lem:JointSuccProb:PPP}.

\subsection{Proof of Theorem \ref{Thm:Bounds_JSP}} \label{Pf:Bounds_JSP}
Based on the expressions of the conditional joint success probability (Lemma \ref{lm:ConJointSuccProb}) and the bounds of PGF and the conditional PGF for MCP and SOCP (from (\ref{eq:Bounds_MCP}) to (\ref{eq:Bounds_Conditional_SOCP})), the bounds of the conditional joint success probabilities for the MCP model and SOCP model are bounded by their counterparts for the PPP model as follows:
\begin{equation}
\mathcal{P}_{m \textsf P}^{(n)}(\lambda_{\textsf M},r_m)
\leq \mathcal{P}_{m \textsf M}^{(n)}(r_m)
\leq \mathcal{P}_{m \textsf P}^{(n)}(\frac{\lambda_{\textsf M}}{1+c_{\textsf M}},r_m)
\end{equation}
\begin{equation}
\mathcal{P}_{s \textsf P}^{(n)}(\lambda_{\textsf M}+\frac{c_{\textsf M}}{\pi R_{\textsf M}^2},r_s)
\leq \mathcal{P}_{s \textsf M}^{(n)}(r_s)
\leq \mathcal{P}_{s \textsf P}^{(n)}(\frac{\lambda_{\textsf M}}{1+c_{\textsf M}},r_s)
\end{equation}
\begin{equation}
\mathcal{P}_{m \textsf P}^{(n)}(\lambda_{\textsf S},r_m)
\leq \mathcal{P}_{m \textsf M}^{(n)}(\lambda_{\textsf S},r_m)
\leq \mathcal{P}_{m \textsf S}^{(n)}(r_m)
\leq \mathcal{P}_{m \textsf P}^{(n)}(\frac{\lambda_{\textsf S}}{(1+c_{\textsf S^\prime})(1+c_{\textsf S})},r_m)
\end{equation}
\begin{equation}
\mathcal{P}_{s \textsf P}^{(n)}(\lambda_{\textsf S}+ c_{\textsf S^\prime} c_{\textsf S} + \frac{c_{\textsf S}}{\pi R_{\textsf S}^2},r_s)
\leq \mathcal{P}_{s \textsf S}^{(n)}(r_s)
\leq \mathcal{P}_{s \textsf P}^{(n)}(\frac{\lambda_{\textsf S}}{(1+c_{\textsf S^\prime})(1+c_{\textsf S})},r_s),
\end{equation}
where $\mathcal{P}_{m \textsf P}^{(n)}(\lambda, r)$ and $\mathcal{P}_{s \textsf P}^{(n)}(\lambda, r)$ denote the conditional joint success probability (given the serving distance $r$) for the typical MU and SU in the PPP model.

By taking expectation with respect to the distance distribution in (\ref{eq:pdf_r}), the bounds of the joint success probabilities are derived as shown in Theorem \ref{Thm:Bounds_JSP}.

\bibliographystyle{IEEEtran}
%\bibliography{9D__Papers_ref}

% Generated by IEEEtran.bst, version: 1.13 (2008/09/30)
\begin{thebibliography}{10}
\providecommand{\url}[1]{#1}
\csname url@samestyle\endcsname
\providecommand{\newblock}{\relax}
\providecommand{\bibinfo}[2]{#2}
\providecommand{\BIBentrySTDinterwordspacing}{\spaceskip=0pt\relax}
\providecommand{\BIBentryALTinterwordstretchfactor}{4}
\providecommand{\BIBentryALTinterwordspacing}{\spaceskip=\fontdimen2\font plus
\BIBentryALTinterwordstretchfactor\fontdimen3\font minus
  \fontdimen4\font\relax}
\providecommand{\BIBforeignlanguage}[2]{{%
\expandafter\ifx\csname l@#1\endcsname\relax
\typeout{** WARNING: IEEEtran.bst: No hyphenation pattern has been}%
\typeout{** loaded for the language `#1'. Using the pattern for}%
\typeout{** the default language instead.}%
\else
\language=\csname l@#1\endcsname
\fi
#2}}
\providecommand{\BIBdecl}{\relax}
\BIBdecl

\bibitem{HCN-K-Tier}
H.~Dhillon, R.~Ganti, F.~Baccelli, and J.~Andrews, ``Modeling and analysis of
  {K}-tier downlink heterogeneous cellular networks,'' \emph{IEEE J. Sel. Areas
  Commun.}, vol.~30, no.~3, pp. 550--560, Apr. 2012.

\bibitem{HCN-Model}
Y.~Lin, W.~Bao, W.~Yu, and B.~Liang, ``Optimizing user association and spectrum
  allocation in {HetNets}: A utility perspective,'' \emph{IEEE J. Sel. Areas
  Commun.}, vol.~33, no.~6, pp. 1025--1039, Jun. 2015.

\bibitem{Stochastic-Geometry-HCN-Tutorial}
H.~ElSawy, E.~Hossain, and M.~Haenggi, ``Stochastic geometry for modeling,
  analysis, and design of multi-tier and cognitive cellular wireless networks:
  A survey,'' \emph{IEEE Comm. Surveys and Tutorials}, vol.~15, no.~3, pp.
  996--1019, Jun. 2013.

\bibitem{StoGeo}
M.~Haenggi, J.~Andrews, F.~Baccelli, O.~Dousse, and M.~Franceschetti,
  ``Stochastic geometry and random graphs for the analysis and design of
  wireless networks,'' \emph{IEEE J. Sel. Areas Commun.}, vol.~27, no.~7, pp.
  1029--1046, Sep. 2009.

\bibitem{FlexibleCellAssociation}
H.-S. Jo, Y.~J. Sang, P.~Xia, and J.~Andrews, ``Heterogeneous cellular networks
  with flexible cell association: A comprehensive downlink {SINR} analysis,''
  \emph{IEEE Trans. Wireless Commun.}, vol.~11, no.~10, pp. 3484--3495, Oct.
  2012.

\bibitem{StructuredSpectrumAllocation}
W.~Bao and B.~Liang, ``Structured spectrum allocation and user association in
  heterogeneous cellular networks,'' in \emph{Proc. IEEE INFOCOM}, Toronto, ON,
  CA, Arp. 2014, pp. 1069--1077.

\bibitem{HCN-FFR}
T.~Novlan, R.~Ganti, A.~Ghosh, and J.~Andrews, ``Analytical evaluation of
  fractional frequency reuse for ofdma cellular networks,'' \emph{IEEE Trans.
  Wireless Commun.}, vol.~10, no.~12, pp. 4294--4305, Dec. 2011.

\bibitem{Tony_HCN_Offloading}
P.~S. Yu, J.~Lee, T.~Q.~S. Quek, and Y.~W.~P. Hong, ``Traffic offloading in
  heterogeneous networks with energy harvesting personal cells-network
  throughput and energy efficiency,'' \emph{IEEE Trans. Wireless Commun.},
  vol.~15, no.~2, pp. 1146--1161, Feb 2016.

\bibitem{HongGuang_15TWC}
H.~Sun, M.~Wildemeersch, M.~Sheng, and T.~Q.~S. Quek, ``{D2D} enhanced
  heterogeneous cellular networks with dynamic {TDD},'' \emph{IEEE Trans.
  Wireless Commun.}, vol.~14, no.~8, pp. 4204--4218, Aug. 2015.

\bibitem{Tony_D2D_HCN}
H.~H. Yang, J.~Lee, and T.~Q.~S. Quek, ``Heterogeneous cellular network with
  energy harvesting-based {D2D} communication,'' \emph{IEEE Trans. Wireless
  Commun.}, vol.~15, no.~2, pp. 1406--1419, Feb. 2016.

\bibitem{SpecSharing_Cellular_Adhoc}
K.~Huang, V.~K.~N. Lau, and Y.~Chen, ``Spectrum sharing between cellular and
  mobile ad hoc networks: transmission-capacity trade-off,'' \emph{IEEE J. Sel.
  Areas Commun.}, vol.~27, no.~7, pp. 1256--1267, Sep. 2009.

\bibitem{Tony_EE_HCN}
Y.~S. Soh, T.~Q.~S. Quek, M.~Kountouris, and H.~Shin, ``Energy efficient
  heterogeneous cellular networks,'' \emph{IEEE J. Sel. Areas Commun.},
  vol.~31, no.~5, pp. 840--850, May 2013.

\bibitem{HCN-SpatiotemporalCooperation}
G.~Nigam, P.~Minero, and M.~Haenggi, ``Spatiotemporal cooperation in
  heterogeneous cellular networks,'' \emph{IEEE J. Sel. Areas Commun.},
  vol.~33, no.~6, pp. 1253--1265, Jun. 2015.

\bibitem{InterfCorrela-ICIC}
X.~Zhang and M.~Haenggi, ``A stochastic geometry analysis of inter-cell
  interference coordination and intra-cell diversity,'' \emph{IEEE Trans.
  Wireless Commun.}, vol.~13, no.~12, pp. 6655--6669, Dec. 2014.

\bibitem{MulticellCoop_KB}
K.~Huang and J.~G. Andrews, ``An analytical framework for multicell cooperation
  via stochastic geometry and large deviations,'' \emph{IEEE Trans. Inf.
  Theory}, vol.~59, no.~4, pp. 2501--2516, Apr. 2013.

\bibitem{HCN_ApproxSIRAnalysis}
H.~Wei, N.~Deng, W.~Zhou, and M.~Haenggi, ``Approximate {SIR} analysis in
  general heterogeneous cellular networks,'' \emph{IEEE Trans. Commun.},
  vol.~64, no.~3, pp. 1259--1273, Mar. 2016.

\bibitem{InterferCor_NonPoisson_GC}
\BIBentryALTinterwordspacing
J.~Wen, M.~Sheng, K.~Huang, and J.~Li, ``Analysis of interference correlation
  in non-poisson networks,'' in \emph{IEEE GLOBECOM}, 2016. [Online].
  Available: \url{https://arxiv.org/pdf/1604.04166v1.pdf}
\BIBentrySTDinterwordspacing

\bibitem{HCN-Dependence}
N.~Deng, W.~Zhou, and M.~Haenggi, ``Heterogeneous cellular network models with
  dependence,'' \emph{IEEE J. Sel. Areas Commun.}, vol.~33, no.~10, pp.
  2167--2181, Oct. 2015.

\bibitem{DPP1}
Y.~Li, F.~Baccelli, H.~Dhillon, and J.~Andrews, ``Statistical modeling and
  probabilistic analysis of cellular networks with determinantal point
  processes,'' \emph{IEEE Trans. Commun.}, vol.~63, no.~9, pp. 3405--3422, Sep.
  2015.

\bibitem{DPP2}
R.~Ganti, F.~Baccelli, and J.~Andrews, ``Series expansion for interference in
  wireless networks,'' \emph{IEEE Trans. Inf. Theory}, vol.~58, no.~4, pp.
  2194--2205, Apr. 2012.

\bibitem{GPP1}
N.~Deng, W.~Zhou, and M.~Haenggi, ``The {Ginibre} point process as a model for
  wireless networks with repulsion,'' \emph{IEEE Trans. Wireless Commun.},
  vol.~14, no.~1, pp. 107--121, Jan. 2015.

\bibitem{HCN_PoissonCluster}
J.~Young, M.~Hasna, and A.~Ghrayeb, ``Modeling heterogeneous cellular networks
  interference using poisson cluster processes,'' \emph{IEEE J. Sel. Areas
  Commun.}, vol.~33, no.~10, pp. 2182--2195, Oct. 2015.

\bibitem{PoissonClusterProcess_TIT}
R.~Ganti and M.~Haenggi, ``Interference and outage in clustered wireless ad hoc
  networks,'' \emph{IEEE Trans. Inf. Theory}, vol.~55, no.~9, pp. 4067--4086,
  Sep. 2009.

\bibitem{HCN_SOCP}
S.~Asif and K.~Kwak, ``Downlink coverage and rate analysis of two-tier
  networks,'' \emph{IEEE Wireless Comms. Lett.}, vol.~4, no.~2, pp. 133--136,
  Apr. 2015.

\bibitem{InterfCorreThreeSources}
U.~Schilcher, C.~Bettstetter, and G.~Brandner, ``Temporal correlation of
  interference in wireless networks with {Rayleigh} block fading,'' \emph{IEEE
  Trans. Mobile Comp.}, vol.~11, no.~12, pp. 2109--2120, Dec. 2012.

\bibitem{InterferenceCorreLetter}
R.~Ganti and M.~Haenggi, ``Spatial and temporal correlation of the interference
  in {ALOHA} ad hoc networks,'' \emph{IEEE Commun. Lett.}, vol.~13, no.~9, pp.
  631--633, Sep. 2009.

\bibitem{IntefCorrDiversityPolynomials}
M.~Haenggi and R.~Smarandache, ``Diversity polynomials for the analysis of
  temporal correlations in wireless networks,'' \emph{IEEE Trans. Wireless
  Commun.}, vol.~12, no.~11, pp. 5940--5951, Nov. 2013.

\bibitem{InterfCorreDiversityLoss}
M.~Haenggi, ``Diversity loss due to interference correlation,'' \emph{IEEE
  Commun. Lett.}, vol.~16, no.~10, pp. 1600--1603, Oct. 2012.

\bibitem{CorrelationMobileRandomNet}
Z.~Gong and M.~Haenggi, ``Interference and outage in mobile random networks:
  Expectation, distribution, and correlation,'' \emph{IEEE Trans. Mobile
  Comp.}, vol.~13, no.~2, pp. 337--349, Feb. 2014.

\bibitem{LocalDelay_MAC}
Y.~Zhong, W.~Zhang, and M.~Haenggi, ``Managing interference correlation through
  random medium access,'' \emph{IEEE Trans. Wireless Commun.}, vol.~13, no.~2,
  pp. 928--941, Feb. 2014.

\bibitem{HARQ-CorrInterf-adhoc}
H.~Ding, S.~Ma, C.~Xing, Z.~Fei, Y.~Zhou, and C.~Chen, ``Analysis of hybrid
  {ARQ} in ad hoc networks with correlated interference and feedback errors,''
  \emph{IEEE Trans. Wireless Commun.}, vol.~12, no.~8, pp. 3942--3955, Aug.
  2013.

\bibitem{zhong2016delay}
Y.~Zhong, T.~Q.~S. Quek, and X.~Ge, ``{Heterogeneous Cellular Networks with
  Spatio-Temporal Traffic: Delay Analysis and Scheduling},'' \emph{arXiv
  preprint arXiv:1611.08067}, 2016.

\bibitem{zhong16Stability}
Y.~Zhong, M.~Haenggi, T.~Q.~S. Quek, and W.~Zhang, ``On the stability of static
  poisson networks under random access,'' \emph{IEEE Transactions on
  Communications}, vol.~64, no.~7, pp. 2985--2998, July 2016.

\bibitem{LocalDelay_PoissonNet}
M.~Haenggi, ``The local delay in poisson networks,'' \emph{IEEE Trans. Inf.
  Theory}, vol.~59, no.~3, pp. 1788--1802, Mar. 2013.

\bibitem{StoGeoBook-Martin}
------, \emph{Stochastic Geometry for Wireless Networks}.\hskip 1em plus 0.5em
  minus 0.4em\relax Cambridge University Press, 2012.

\bibitem{IntegralTable}
I.~S. Gradshteyn and I.~M. Ryzhik, \emph{Table of Integrals, Series, and
  Products}, 7th~ed.\hskip 1em plus 0.5em minus 0.4em\relax Academic Press,
  2007.

\bibitem{YongEquality}
G.~B. Folland, \emph{Real Analysis, Modern Techniques and Their Applications},
  2nd~ed.\hskip 1em plus 0.5em minus 0.4em\relax New Yourk: Wiley, 1999.

\bibitem{InterfCorreglobecom}
J.~Wen, M.~Sheng, B.~Liang, X.~Wang, Y.~Zhang, and J.~Li, ``Correlations of
  interference and link successes in heterogeneous cellular networks,'' in
  \emph{IEEE GLOBECOM}, San Diego, CA, Dec. 2015, pp. 1--6.

\end{thebibliography}

\end{document}